\documentclass[a4paper,11pt]{article}

\usepackage[a4paper, total={6.5in, 10in}]{geometry}
\usepackage{setspace} 
\usepackage{amsmath,amsthm,amssymb,graphicx,mathtools,tikz,hyperref}
\usepackage{algorithm}
\usepackage{booktabs}
\usepackage{algpseudocode}
\usepackage{float}
\theoremstyle{plain}

\newtheorem{proposition}{Proposition}
\newtheorem{lemma}{Lemma}

\newtheorem{corollary}{Corollary}
\newtheorem{assumption}{Assumption}[section]
\theoremstyle{remark}

\usepackage[round]{natbib}
\usepackage{authblk}
\usetikzlibrary{positioning}

\hypersetup{
    colorlinks,
    linkcolor={blue!50!black},
    citecolor={blue!50!black},
    urlcolor={blue!80!black}
}
\begin{document}
\date{October 5, 2026}

\title{Impulse Response Inference for Matrix Autoregressions}
\author{Alain Hecq}
\author{Ivan Ricardo}
\author{Ines Wilms}
\affil{Maastricht University, Department of Quantitative Economics} 

\maketitle

\begin{abstract}
Matrix autoregressive (MAR) models offer a parsimonious framework for modeling matrix-valued time series, yet tools for estimation and inference for their impulse response functions are lacking.
We develop asymptotic and bootstrap-based inference for impulse responses of stable MAR($p$) models.
We derive the joint asymptotic distribution of the coefficient and covariance estimators, which permits closed-form delta-method standard errors.
To address finite-sample bias, we propose ProBAB-MAR, a bias-corrected bootstrap that projects the corrected coefficients back onto the Kronecker parameter space, and prove that it attains asymptotically correct coverage.
Monte Carlo simulations show that delta-method intervals undercover in small samples, while ProBAB-MAR achieves near-nominal coverage with intervals considerably narrower than those from an unrestricted VAR.
An application to euro area inflation illustrates how shocks transmit across countries and inflation categories.

\medskip

\noindent \textbf{JEL codes:} C15; C32; E31 \\
\noindent \textbf{Keywords:} matrix-valued time series, matrix autoregression, impulse response analysis, bias correction, bootstrap.
\end{abstract}

\newpage

\section{Introduction}
\label{sec:intro}
A growing share of time series data are naturally represented as matrix-valued time series, with observations indexed by two cross-sectional dimensions rather than by a single vector of series. 
Matrix autoregressive (MAR) models have become increasingly popular to model such series. 
In this paper, we develop asymptotic and bootstrap-based impulse response inference for stable MAR models.
We derive the joint asymptotic distribution of the MAR coefficient and covariance estimators, which yields closed-form standard errors for impulse response coefficients via the delta-method.
We also propose a bias-corrected bootstrap procedure for MAR impulse responses, labeled ProBAB-MAR, and prove that it attains asymptotically correct coverage.

Macroeconomic and financial applications provide many natural examples of matrix-valued time series where one dimension indexes economic indicators and the other indexes the units (e.g., countries, sectors, or firms) to which they belong.
A growing literature on matrix-valued time series \citep{chen_autoregressive_2021, tsay_matrix_2024, zhang_additive_2024, samadi_matrix_2025} models such data directly, keeping the matrix structure that vectorization (e.g. the well known VAR($p$)) would discard.
For example, the application we consider in this paper records year-on-year inflation for $N_1 = 5$ euro area countries across $N_2 = 3$ categories (energy, food, and core), so that each month delivers an $N_1 \times N_2$ matrix of observations.
Vectorizing such data and fitting a vector autoregression (VAR; \citealp{lutkepohl_new_2005, kilian_structural_2017}) discards the row- and column-specific structure of the matrix and quickly becomes difficult to estimate as the two dimensions grow, since the number of coefficients scales with $(N_1 N_2)^2 p$ for lag order $p$.
Panel VARs \citep{holtz_estimating_1988, koop_forecasting_2019} accommodate multiple cross-sections but do not model the two-way dependence that is intrinsic to matrix-valued data.
The matrix autoregressive model (MAR) of \citet{chen_autoregressive_2021} instead imposes a Kronecker structure on the autoregressive coefficients, separating the dynamics along the rows from those along the columns and reducing the parameter count from $(N_1 N_2)^2 p$ to $(N_1^2 + N_2^2 - 1)p$.
In our inflation application, with $N_1 = 5$ countries, $N_2 = 3$ categories, and $p = 2$ lags, this is the difference between estimating $66$ and $450$ coefficients.
That parsimony is what makes the MAR attractive.

\citet{chen_autoregressive_2021} establish the asymptotic normality of the coefficient estimators for the MAR($1$), providing a basis for inference on the row and column interaction matrices, with \citet{li_multilinear_2021} extending this to tensors with general lag order.
While the coefficient matrices may themselves be of interest, much empirical work ultimately cares about impulse responses, which trace how a shock to one entry of the matrix propagates across both dimensions over time \citep{kilian_structural_2017}.
Despite the extensive literature on estimating MAR models, research on impulse response analysis for MAR models remains scarce, with notable exceptions including the recent work by \citet{celani_matrix_2024, bucci_structural_2026, lara_structural_2026}.
Moreover, available asymptotic results in the literature typically stop at the coefficient matrices of the MAR. 
Impulse responses in the matrix setting are therefore typically reported as point estimates alone \citep{chen_autoregressive_2021}, or accompanied by intervals from residual bootstrap schemes carried over from the unrestricted VAR, whose validity under the Kronecker restriction has not been established \citep{bucci_structural_2026}.

In this paper, we fill this gap by developing inference for the impulse responses of a MAR($p$) and make three contributions.
First, we obtain the joint asymptotic distribution of the coefficient and covariance estimators of the MAR($p$), which the existing results for matrix and tensor autoregressions, stated for the coefficients alone, do not cover \citep{li_multilinear_2021}.
We then apply the delta method \citep[Ch. 11]{hamilton_time_1994} to obtain the asymptotic distribution of the impulse responses, yielding closed-form standard errors at any fixed horizon.
Second, we propose ProBAB-MAR, a bias-corrected bootstrap procedure for MAR impulse responses, and prove that it attains asymptotically correct coverage.
ProBAB-MAR takes its inspiration from the bootstrap-after-bootstrap of \citet{kilian_small_1998} for VAR models, which uses a first-stage bootstrap to estimate the small-sample bias of the coefficient estimators, generates the bootstrap samples from the bias-corrected model, and extends it to the MAR setting.
This extension requires care, because the estimated bias does not itself have a Kronecker form, so subtracting it from the estimated coefficients leaves a coefficient matrix outside the MAR parameter space.
ProBAB-MAR therefore projects the bias-corrected coefficients back onto the space of Kronecker products before the bootstrap samples are generated, so that the bootstrap design and the estimated model coincide.
Third, and of independent interest, we characterize the rank structure of the matricized impulse response for MAR($p$) models.
For a MAR(1), a shock induces proportional co-movement: every unit responds with the same pattern, scaled only by a position-specific factor, so that the matricized response is rank one, a result established already by \cite{chen_autoregressive_2021}.
This proportional response is a co-movement restriction of the kind studied in the common-features literature \citep{engle_testing_1993, cubadda_studying_2009}.
For a MAR($p$) ($p \geq 2$), we show that this proportionality breaks down gradually as the horizon grows, thereby admitting an increasingly rich pattern of heterogeneous responses across the two dimensions.

We compare the asymptotic-based inference with the bootstrap-based inference in a Monte Carlo study and find that the asymptotic delta-method intervals tend to undercover in small samples, a distortion that the bootstrap corrects.
The MAR-based bootstrap interval, by contrast, maintains nominal coverage while being substantially narrower than the intervals from an unrestricted VAR bootstrap, a gain that reflects the parsimony of the Kronecker structure.
We then illustrate the framework with an application to euro area inflation, tracing how a common energy shock transmits across countries and into food and core inflation.

Our work builds on the matrix autoregressive literature and on bootstrap methods for impulse response inference, but differs from both in focus.
\citet{chen_autoregressive_2021} introduce the MAR, develop alternating least squares and maximum likelihood estimation, and establish the asymptotic distribution of the MAR($1$) coefficients; subsequent work refines its estimation and representation \citep{celani_matrix_2024, zhang_additive_2024, samadi_envelope_2026}.
Dimension reduction for matrix-valued time series has also been pursued through factor models that extract common row and column factors \citep{wang_factor_2019, chen_factor_2022, chen_statistical_2023, gao_transformed_2023} and through reduced-rank autoregressions \citep{xiao_reduced_forth, hecq_decomposing_2026} and reduced-rank regression more broadly \citep{cubadda_dimension_2022, cubadda_reduced_2022}.
This literature targets estimation and the underlying coefficient or factor structure, whereas we target the impulse responses these models imply and the inference required to report them.
On the inferential side, the bootstrap-after-bootstrap is the standard device for impulse-response intervals in unrestricted VARs, and it underlies the VAR intervals in recent large-scale comparisons of VAR and local projection inference \citep{kilian_reliable_2011, li_local_2024, olea_primer_2025}.
Our procedure is tailored to the Kronecker-restricted coefficient space of the MAR, exploits the Kronecker structure to deliver narrower intervals, and preserves the row--column decomposition that makes the matrix model interpretable.
We focus throughout on the stationary case; extending impulse-response inference to cointegrated matrix-valued systems \citep{li_cointegrated_2024, chen_inference_2025, hecq_cointegrated_2025, lopetuso_cointegrated_2026} is a natural direction for future work.

The remainder of this paper is structured as follows.
Section \ref{sec:model} introduces the MAR model and its estimation, and derives its impulse responses and their rank structure.
Section \ref{sec:inference} develops inference for these impulse responses, first through the joint asymptotic distribution of the coefficient and covariance estimators and the delta method, and then through ProBAB-MAR, whose asymptotic validity we establish.
Section \ref{sec:simulations} reports the Monte Carlo evidence on coverage and interval width.
Section \ref{sec:application} presents the empirical application to euro area inflation.
Section \ref{sec:conclusion} concludes and discusses avenues for future research.

A word on the notation.
We denote scalars by lowercase letters $x$, vectors by boldface lowercase letters $\mathbf{x}$, and matrices by boldface capital letters $\mathbf{X}$.
For a generic matrix $\mathbf{X}$, we write $\mathbf{X}^\top$ for its transpose, $\|\mathbf{X}\|_F$ for its Frobenius norm, and $\operatorname{vec}(\mathbf{X})$ for its column-wise vectorization, with $\operatorname{unvec}_{N_1,N_2}(\cdot)$ denoting the inverse operation that reshapes an $N_1 N_2$-vector back into an $N_1 \times N_2$ matrix.
For a symmetric $N \times N$ matrix $\mathbf{S}$, we write $\operatorname{vech}(\mathbf{S})$ for its half vectorization, which stacks the elements on and below the main diagonal column by column into an $N(N+1)/2$-vector, so that each distinct entry of $\mathbf{S}$ appears only once.
We use $\otimes$ for the Kronecker product and $\mathbf{I}_N$ for the $N \times N$ identity matrix.
Finally, we write $\operatorname{blkdiag}(\mathbf{Z}_1, \ldots, \mathbf{Z}_p)$ for the block-diagonal matrix with $\mathbf{Z}_1, \ldots, \mathbf{Z}_p$ on the diagonal and zero blocks elsewhere.

\section{Impulse Responses for Matrix Autoregressive Models}
\label{sec:model}

This section starts by reviewing the matrix autoregressive (MAR) model and its estimation (Section \ref{subsec:MAR}) from which the impulse responses are obtained (Section \ref{subsec:irf}). 

\subsection{The Matrix Autoregressive Model}
\label{subsec:MAR}
Consider a time series in which at each time $t = -p+1, \ldots, T$, an $N_1 \times N_2$ matrix $\mathbf{Y}_t$ is observed, so that $\mathbf{Y}_{-p+1}, \ldots, \mathbf{Y}_{0}$ serve as presample values and $\mathbf{Y}_{1}, \ldots, \mathbf{Y}_{T}$ as the estimation sample.
The matrix autoregressive model of order $p$, denoted MAR($p$), for $\mathbf{Y}_t$  is given by
\begin{equation}\label{eq:mar}
    \mathbf{Y}_t = \sum_{j=1}^{p} \mathbf{A}_j \mathbf{Y}_{t-j} \mathbf{B}_j^\top + \mathbf{U}_t,
\end{equation}
where $\mathbf{A}_j \in \mathbb{R}^{N_1 \times N_1}$ and $\mathbf{B}_j \in \mathbb{R}^{N_2 \times N_2}$ are coefficient matrices (for $j=1,\ldots, p$).
The left coefficient matrix $\mathbf{A}_j$ captures interactions along the row dimension, while the right coefficient matrix $\mathbf{B}_j$ captures interactions along the column dimension, both at lag $j$; see \citet{chen_autoregressive_2021} for detailed interpretations.
We assume that all series are mean-centered such that no intercept is included.
Furthermore, $\mathbf{U}_t$ is an $N_1 \times N_2$ matrix innovation process whose covariance is separable, $\operatorname{Cov}(\operatorname{vec}(\mathbf{U}_t)) = \boldsymbol{\Sigma} = \boldsymbol{\Sigma}_2 \otimes \boldsymbol{\Sigma}_1 \in \mathbb{R}^{N_{1} N_{2} \times N_{1} N_{2}}$, so that $\boldsymbol{\Sigma}_1 \in \mathbb{R}^{N_{1} \times N_{1}}$ governs contemporaneous dependence across rows and $\boldsymbol{\Sigma}_2 \in \mathbb{R}^{N_{2} \times N_{2}}$ across columns. 

Model \eqref{eq:mar} can be written as the restricted VAR($p$)
\begin{equation} \label{eq:restrictedvar}
    \mathbf{y}_t = \sum_{j=1}^{p} \underbrace{(\mathbf{B}_j \otimes \mathbf{A}_j)}_{\mathbf{C}_j} \mathbf{y}_{t-j} + \mathbf{u}_t,
\end{equation}
by applying the $\operatorname{vec}(\cdot)$ operator to the matrix-valued series in equation \eqref{eq:mar} with $\mathbf{y}_t = \operatorname{vec}(\mathbf{Y}_t) \in \mathbb{R}^{N_1 N_2}$ and $\mathbf{u}_t = \operatorname{vec}(\mathbf{U}_t) \in \mathbb{R}^{N_1 N_2}$, and using the identity $\operatorname{vec}(\mathbf{A}\mathbf{X}\mathbf{B}^\top) = (\mathbf{B} \otimes \mathbf{A})\operatorname{vec}(\mathbf{X})$ on the lagged terms.
Each lag coefficient $\mathbf{C}_j = \mathbf{B}_j \otimes \mathbf{A}_j \in \mathbb{R}^{N_{1} N_{2} \times N_{1} N_{2}}$ is constrained to have Kronecker structure, which reduces the number of free parameters from $(N_1 N_2)^2 p$ in an unrestricted VAR($p$) to $(N_1^2 + N_2^2 - 1)p$ in the MAR($p$).
Note the subtraction accounts for the scale indeterminacy between $\mathbf{A}_j$ and $\mathbf{B}_j$.
Following the standard practice in the MAR literature, we impose $\|\mathbf{A}_j\|_F = 1$ together with a fixed sign convention.
Under this normalization, the decomposition $\mathbf{C}_j = \mathbf{B}_j \otimes \mathbf{A}_j$ is uniquely defined for  $\mathbf{A}_j \neq \mathbf{0}$ and $\mathbf{B}_j \neq \mathbf{0}$ (for each $j=1, \ldots, p$).  
Similarly, we impose the normalization $\|\boldsymbol{\Sigma}_1\|_F = 1$ to resolve the scale indeterminacy between the covariance matrices.

For estimation and asymptotic analysis, it is convenient to rewrite model \eqref{eq:mar} in stacked form.
Define $\mathbf{A} = [\mathbf{A}_1 \cdots \mathbf{A}_p] \in \mathbb{R}^{N_1 \times N_1 p}$, $\mathbf{B} = [\mathbf{B}_1 \cdots \mathbf{B}_p] \in \mathbb{R}^{N_2 \times N_2 p}$, and the block-diagonal regressor matrix
$ \mathbf{X}_t = \operatorname{blkdiag}(\mathbf{Y}_{t-1}, \ldots, \mathbf{Y}_{t-p}) \in \mathbb{R}^{N_1 p \times N_2 p}$.
Then \eqref{eq:mar} becomes 
$$
\mathbf{Y}_t = \mathbf{A} \mathbf{X}_t \mathbf{B}^\top + \mathbf{U}_t.
$$ 
Collecting the coefficient and covariance parameters in 
\begin{equation*}
\begin{aligned}
    \boldsymbol{\beta} &= \big(\operatorname{vec}(\mathbf{A})^{\top}, \operatorname{vec}(\mathbf{B}^{\top})^{\top}\big)^{\top} \in \mathbb{R}^{p(N_1^2+N_2^2)}, \\
    \boldsymbol{\sigma} &= \big(\operatorname{vech}(\boldsymbol{\Sigma}_1)^\top, \operatorname{vech}(\boldsymbol{\Sigma}_2)^\top\big)^\top \in \mathbb{R}^{N_1(N_1+1)/2 + N_2(N_2+1)/2}, \\
    \boldsymbol{\zeta} &= \big(\boldsymbol{\beta}^\top,\boldsymbol{\sigma}^\top\big)^\top \in \mathbb{R}^{d},
\end{aligned}
\end{equation*}
where $d = p(N_1^2+N_2^2) + N_1(N_1+1)/2 + N_2(N_2+1)/2$, the Gaussian log-likelihood is, up to an additive constant,
\begin{equation}\label{eq:loglik}
    \ell_T(\boldsymbol{\zeta})
    = -\frac{T}{2}\Big[N_2 \log|\boldsymbol{\Sigma}_1| + N_1 \log|\boldsymbol{\Sigma}_2|\Big]
      - \frac{1}{2}\sum_{t=1}^{T} \operatorname{tr}\big(\boldsymbol{\Sigma}_1^{-1}\mathbf{U}_t\boldsymbol{\Sigma}_2^{-1}\mathbf{U}_t^\top\big),
\end{equation}
where we use $|\boldsymbol{\Sigma}_2 \otimes \boldsymbol{\Sigma}_1| = |\boldsymbol{\Sigma}_1|^{N_2}|\boldsymbol{\Sigma}_2|^{N_1}$ and $\operatorname{vec}(\mathbf{U}_t)^\top \boldsymbol{\Sigma}^{-1}\operatorname{vec}(\mathbf{U}_t) = \operatorname{tr}(\boldsymbol{\Sigma}_1^{-1}\mathbf{U}_t\boldsymbol{\Sigma}_2^{-1}\mathbf{U}_t^\top)$.
We estimate the MAR parameters by maximum likelihood estimation (MLE), thereby maximizing \eqref{eq:loglik} subject to the above discussed normalizations on $\mathbf{A}_j$ and $\boldsymbol{\Sigma}_1$.
We hereby use the alternating maximum likelihood algorithm (see e.g., \citealp{chen_autoregressive_2021, xiao_reduced_forth, li_multilinear_2021}), initialized at the projection estimator of \citet{chen_autoregressive_2021}.

\subsection{MAR Impulse Responses}
\label{subsec:irf}
We now present the impulse responses of the MAR model and show how its Kronecker structure translates into a rank structure of the impulse responses.
Proposition \ref{prop:rank} gives a rank bound for MAR($p$) models, thereby extending \citet{chen_autoregressive_2021}, who show that the matricized impulse response of a MAR($1$) has rank one.

To obtain the matrix moving average (MMA) representation of the MAR, we exploit the fact that the MAR($p$) is a restricted VAR($p$), as given in equation \eqref{eq:restrictedvar}.
Its moving average representation therefore follows from the usual VAR recursions under standard conditions on the coefficients and the innovations.
Let
\begin{equation*}
    \mathbf{F} =
    \begin{bmatrix}
        \mathbf{C}_1 & \mathbf{C}_2 & \cdots & \mathbf{C}_{p-1} & \mathbf{C}_p \\
        \mathbf{I}_{N_1 N_2} & \mathbf{0} & \cdots & \mathbf{0} & \mathbf{0} \\
        \mathbf{0} & \mathbf{I}_{N_1 N_2} & \cdots & \mathbf{0} & \mathbf{0} \\
        \vdots & & \ddots & \vdots & \vdots \\
        \mathbf{0} & \mathbf{0} & \cdots & \mathbf{I}_{N_1 N_2} & \mathbf{0}
    \end{bmatrix}
    \in \mathbb{R}^{N_1 N_2 p \times N_1 N_2 p},
\end{equation*}
denote the companion matrix of \eqref{eq:restrictedvar}.
We impose the following assumptions.

\begin{assumption}[Stability]\label{ass:stationary}
All eigenvalues of the companion matrix $\mathbf{F}$ lie strictly inside the unit circle.
\end{assumption}

\begin{assumption}[Innovations]\label{ass:wn}
The innovations $\{\mathbf{U}_t\}$ form a matrix white noise process, that is, they are serially uncorrelated with $\mathbb{E}(\mathbf{U}_t) = \mathbf{0}$ and have a separable covariance $\operatorname{Cov}(\operatorname{vec}(\mathbf{U}_t)) = \boldsymbol{\Sigma}_2 \otimes \boldsymbol{\Sigma}_1$, with $\boldsymbol{\Sigma}_1$ and $\boldsymbol{\Sigma}_2$ positive definite.
\end{assumption}

Assumption \ref{ass:stationary} restricts our focus to stable MAR models.
For $p = 1$, the eigenvalues of $\mathbf{F} = \mathbf{B}_1 \otimes \mathbf{A}_1$ are the products of the eigenvalues of $\mathbf{A}_1$ and $\mathbf{B}_1$, so that the condition reduces to $\rho(\mathbf{A}_1)\rho(\mathbf{B}_1) < 1$, with $\rho(\cdot)$ denoting the spectral radius \citep[Proposition 1]{chen_autoregressive_2021}.
Assumption \ref{ass:wn} only restricts the first two moments of the innovations.
Positive definiteness of $\boldsymbol{\Sigma}_1$ and $\boldsymbol{\Sigma}_2$ implies that $\boldsymbol{\Sigma}$ is positive definite, which guarantees that the impact matrices introduced at the end of this section exist.
Separability is nonetheless restrictive, as it requires the contemporaneous covariance across rows to be proportional across columns, and vice versa.
When it fails, a separable model may misrepresent the dependence among the innovations \citep{hoff_core_2023}, but separability can be assessed with dedicated tests \citep{roy_implementation_2005, srivastava_models_2008, sung_testing_2025}.
The inference in Section \ref{sec:inference} requires stronger conditions on the innovations, namely independence and finite fourth moments, which we introduce and discuss there.

Under Assumptions \ref{ass:stationary} and \ref{ass:wn}, $\mathbf{y}_t$ has a matrix moving average (MMA) representation
\begin{equation*}
    \mathbf{y}_{t} = \sum_{h=0}^{\infty} \boldsymbol{\Theta}_h \mathbf{u}_{t-h},
\end{equation*}
where the coefficient matrices $\boldsymbol{\Theta}_h \in \mathbb{R}^{N_1 N_2 \times N_1 N_2}$ follow the recursion
\begin{equation}\label{eq:mma}
    \boldsymbol{\Theta}_0 = \mathbf{I}_{N_1 N_2}, \qquad
    \boldsymbol{\Theta}_h = \sum_{j=1}^{\min(p,h)} \mathbf{C}_j \boldsymbol{\Theta}_{h-j}, \qquad h \geq 1.
\end{equation}
Throughout, we index entries of $\mathbf{Y}_t$ either by their position $(r,s)$ in the matrix or by the corresponding position $r + (s-1)N_1$ in the vectorization, similarly for indexing entries in $\mathbf{U}_t$.
Now consider a unit shock to entry $(r,s)$ of $\mathbf{U}_t$.
This shock corresponds to the basis vector $\mathbf{e}_k$ with $k = r + (s-1)N_1$ where $\mathbf{e}_{r + (s-1)N_1} = \mathbf{e}_s \otimes \mathbf{e}_r \in \mathbb{R}^{N_1 N_2}$.
The reduced-form responses at horizon $h$ of each component series in $\mathbf{y}_t$ to the given unit shock are then collected in the vector $\boldsymbol{\Theta}_h \mathbf{e}_k$.
For a single element, the horizon $h$ response at vectorized position $i = m + (n-1)N_1$ in  $\mathbf{y}_t$ (equivalently row unit $m$ in column unit $n$ of $\mathbf{Y}_t$), to the shock at position $k$ in $\mathbf{u}_t$ is 
\begin{equation*}
    [\boldsymbol{\Theta}_{h}]_{ik}  = \theta_{ik, h} = \mathbf{e}_i^\top \boldsymbol{\Theta}_h \mathbf{e}_k.
\end{equation*}

We next examine how the Kronecker structure of the MAR model translates into structure in its impulse responses.
Consider first the case of a MAR($1$). The Kronecker structure in $\mathbf{C}_1$ yields a factorization of the impulse response into separate row and column components.
Since $\boldsymbol{\Theta}_h = (\mathbf{B}_{1} \otimes \mathbf{A}_{1})^h = \mathbf{B}_{1}^h \otimes \mathbf{A}_{1}^h$ for $p = 1$, we have
\begin{equation}\label{eq:rc_decomp}
    \operatorname{unvec}_{N_1, N_2}(\boldsymbol{\Theta}_h \mathbf{e}_k)
    = \underbrace{(\mathbf{A}_1^h \mathbf{e}_r)}_{\text{row response}}
      \underbrace{(\mathbf{B}_1^h \mathbf{e}_s)^\top}_{\text{column response}}.
\end{equation}
We refer to $\mathbf{A}_1^h \mathbf{e}_r$ as the row response, which traces how the shock propagates across the $N_1$ row units independently of the column dimension, and to $\mathbf{B}_1^h \mathbf{e}_s$ as the column response, tracing propagation across the $N_2$ column units.
This matrix has rank at most one, a property noted by \citet{chen_autoregressive_2021} in the context of their shock-first impulse response function under a Kronecker-structured error covariance.
All units respond proportionally to the shock, differing only by a scalar that depends on their position in the row or column dimension.
In our inflation example, this would imply that all countries respond to a shock with the same pattern across inflation categories, scaled by a country-specific factor.

Now consider the case of a MAR($p$) model.
For $p \geq 2$, the rank-one factorization in \eqref{eq:rc_decomp} no longer holds at every horizon, because the MMA recursion \eqref{eq:mma} sums multiple Kronecker products.
The following proposition, proved in Appendix \ref{app:rank}, characterizes the resulting rank structure.

\begin{proposition}\label{prop:rank}
    Let Assumption \ref{ass:stationary} hold and let $\boldsymbol{\Theta}_h$ be the MMA
    coefficients of the MAR($p$), generated by the recursion \eqref{eq:mma} with
    $\mathbf{C}_j = \mathbf{B}_j \otimes \mathbf{A}_j$ for $j = 1, \ldots, p$.
    Let $\mathbf{v} = \mathbf{v}_2 \otimes \mathbf{v}_1$ with $\mathbf{v}_1 \in \mathbb{R}^{N_1}$
    and $\mathbf{v}_2 \in \mathbb{R}^{N_2}$, and define the integer sequence $c_0 = 1$ and
    $c_h = \sum_{j=1}^{\min(p,h)} c_{h-j}$ for $h \geq 1$.
    Then for each $h \geq 0$ there exist vectors
    $\mathbf{a}^{(\ell)} \in \mathbb{R}^{N_1}$ and $\mathbf{b}^{(\ell)} \in \mathbb{R}^{N_2}$,
    $\ell = 1, \ldots, c_h$, such that
    \begin{equation}\label{eq:kron_expansion}
        \boldsymbol{\Theta}_h \mathbf{v}
        = \sum_{\ell=1}^{c_h} \mathbf{b}^{(\ell)} \otimes \mathbf{a}^{(\ell)},
        \qquad\text{equivalently}\qquad
        \operatorname{unvec}_{N_1,N_2}(\boldsymbol{\Theta}_h \mathbf{v})
        = \sum_{\ell=1}^{c_h} \mathbf{a}^{(\ell)} \big(\mathbf{b}^{(\ell)}\big)^{\top},
    \end{equation}
    and consequently
    \begin{equation}\label{eq:rank_bound}
        \operatorname{rank} \left(\operatorname{unvec}_{N_1, N_2}(\boldsymbol{\Theta}_h \mathbf{v})\right)
        \leq \min\{c_h, N_1, N_2\}.
    \end{equation}
\end{proposition}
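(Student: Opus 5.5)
We argue by strong induction on $h$. The property to carry along is that $\boldsymbol{\Theta}_h \mathbf{v}$ is a sum of at most $c_h$ Kronecker products $\mathbf{b}\otimes\mathbf{a}$, with $\mathbf{a}\in\mathbb{R}^{N_1}$ and $\mathbf{b}\in\mathbb{R}^{N_2}$. If some representation uses fewer than $c_h$ terms, we pad it with zero vectors, so "at most" and "exactly $c_h$ terms" are interchangeable.

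\emph{Base case.} For $h=0$ we have $\boldsymbol{\Theta}_0\mathbf{v}=\mathbf{v}=\mathbf{v}_2\otimes\mathbf{v}_1$. This is a single term, and $c_0=1$.

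\emph{Inductive step.} Fix $h\geq 1$ and assume the claim for all $h'<h$. By the recursion \eqref{eq:mma},
\begin{equation*}
\boldsymbol{\Theta}_h\mathbf{v}=\sum_{j=1}^{\min(p,h)}(\mathbf{B}_j\otimes\mathbf{A}_j)\,\boldsymbol{\Theta}_{h-j}\mathbf{v}.
\end{equation*}
Insert the inductive representation $\boldsymbol{\Theta}_{h-j}\mathbf{v}=\sum_{\ell=1}^{c_{h-j}}\mathbf{b}^{(\ell)}_{h-j}\otimes\mathbf{a}^{(\ell)}_{h-j}$. The mixed-product rule $(\mathbf{B}_j\otimes\mathbf{A}_j)(\mathbf{b}\otimes\mathbf{a})=(\mathbf{B}_j\mathbf{b})\otimes(\mathbf{A}_j\mathbf{a})$ then shows that each summand is again a Kronecker product of an $N_2$-vector with an $N_1$-vector. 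The $j$-th block therefore contributes $c_{h-j}$ such terms. Summing over $j$ gives $\sum_{j=1}^{\min(p,h)}c_{h-j}=c_h$ terms in total, which is exactly the defining recursion for $c_h$. Relabelling the pairs $(j,\ell)$ as a single index $\ell=1,\ldots,c_h$ gives the first form in \eqref{eq:kron_expansion}.

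\emph{Matrix form.} The second form follows by applying $\operatorname{unvec}_{N_1,N_2}$ termwise. It is linear, and it satisfies $\operatorname{unvec}_{N_1,N_2}(\mathbf{b}\otimes\mathbf{a})=\mathbf{a}\mathbf{b}^\top$. This identity comes from $\operatorname{vec}(\mathbf{a}\mathbf{b}^\top)=\mathbf{b}\otimes\mathbf{a}$, which is the same vec identity used to derive \eqref{eq:restrictedvar} and \eqref{eq:rc_decomp}.

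\emph{Rank bound.} Each outer product $\mathbf{a}^{(\ell)}(\mathbf{b}^{(\ell)})^\top$ has rank at most one, and rank is subadditive, so the sum has rank at most $c_h$. The matrix is $N_1\times N_2$, so its rank is also at most $\min\{N_1,N_2\}$. Together these give \eqref{eq:rank_bound}.

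Assumption \ref{ass:stationary} is not needed for the algebra itself; it only guarantees that the $\boldsymbol{\Theta}_h$ are the MMA coefficients. No step is a real obstacle. The only care needed is the bookkeeping: the term counts must follow exactly the recursion defining $c_h$, with the truncation at $\min(p,h)$ matching in both the $\boldsymbol{\Theta}_h$ recursion and the $c_h$ recursion.
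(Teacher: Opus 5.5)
Your proof is correct and takes essentially the same route as the paper. The paper also proceeds by induction on $h$ from $\boldsymbol{\Theta}_0\mathbf{v}=\mathbf{v}_2\otimes\mathbf{v}_1$, applies the mixed-product rule termwise in the recursion so that the term count is $\sum_{j=1}^{\min(p,h)} c_{h-j} = c_h$, and concludes with linearity of $\operatorname{unvec}_{N_1,N_2}$ and subadditivity of rank. Your zero-padding remark and your observation that Assumption \ref{ass:stationary} plays no role in the algebra are harmless additions.
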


The reduced-form response takes $\mathbf{v} = \mathbf{e}_k = \mathbf{e}_s \otimes \mathbf{e}_r$.
The sequence $c_h$ counts the number of Kronecker product terms in the expansion of $\boldsymbol{\Theta}_h \mathbf{v}$.
For $p = 1$, $c_h = 1$ for all $h$, recovering the rank-one structure of \eqref{eq:rc_decomp}.
For $p = 2$, the sequence follows the Fibonacci recursion: $c_h = 1, 1, 2, 3, 5, 8, \ldots$, so the rank bound grows with the horizon until it saturates at $\min(N_1, N_2)$.

\begin{corollary}\label{cor:rank}
    Let $\{\mathbf{a}^{(\ell)}\}_{\ell=1}^{c_h}$ and $\{\mathbf{b}^{(\ell)}\}_{\ell=1}^{c_h}$ be the factor vectors
    of the expansion \eqref{eq:kron_expansion}.
    If each of the two sets is linearly independent, then $c_h \leq \min(N_1, N_2)$ and the bound
    \eqref{eq:rank_bound} holds with equality,
    \begin{equation*}
        \operatorname{rank} \left(\operatorname{unvec}_{N_1, N_2}(\boldsymbol{\Theta}_h \mathbf{v})\right)
        = \min\{c_h, N_1, N_2\}.
    \end{equation*}
\end{corollary}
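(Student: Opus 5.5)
The statement to prove is Corollary \ref{cor:rank}. The plan is to read everything off the factorization in Proposition \ref{prop:rank}. Write $M = \operatorname{unvec}_{N_1,N_2}(\boldsymbol{\Theta}_h \mathbf{v})$ and stack the factor vectors as columns:
\[
\mathbf{P} = [\mathbf{a}^{(1)} \cdots \mathbf{a}^{(c_h)}] \in \mathbb{R}^{N_1 \times c_h}, \qquad \mathbf{Q} = [\mathbf{b}^{(1)} \cdots \mathbf{b}^{(c_h)}] \in \mathbb{R}^{N_2 \times c_h}.
\]
Then \eqref{eq:kron_expansion} reads $M = \mathbf{P}\mathbf{Q}^\top$, and the whole argument reduces to elementary rank facts about this product.

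The first step is the bound on $c_h$. Linear independence of the $c_h$ vectors $\mathbf{a}^{(\ell)}$ in $\mathbb{R}^{N_1}$ forces $c_h \le N_1$. Likewise, independence of the $\mathbf{b}^{(\ell)}$ in $\mathbb{R}^{N_2}$ forces $c_h \le N_2$. Hence $c_h \le \min(N_1,N_2)$, and therefore $\min\{c_h,N_1,N_2\} = c_h$.

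The second step shows that $\operatorname{rank}(\mathbf{P}\mathbf{Q}^\top) = c_h$.
\begin{itemize}
\item Since $\mathbf{P}$ has full column rank, it has a left inverse $\mathbf{P}^+$ with $\mathbf{P}^+\mathbf{P} = \mathbf{I}_{c_h}$.
\item Since $\mathbf{Q}$ has full column rank, $\mathbf{Q}^\top$ has a right inverse $(\mathbf{Q}^+)^\top$.
\item It follows that $\mathbf{P}^+ M (\mathbf{Q}^+)^\top = \mathbf{I}_{c_h}$, so $\operatorname{rank}(M) \ge c_h$.
\item Combined with the upper bound \eqref{eq:rank_bound} from Proposition \ref{prop:rank}, this gives $\operatorname{rank}(M) = c_h = \min\{c_h,N_1,N_2\}$.
\end{itemize}
As a check on the middle step, one can argue directly instead: $\mathbf{Q}^\top$ is surjective onto $\mathbb{R}^{c_h}$ and $\mathbf{P}$ is injective, so the image of $M$ is exactly the $c_h$-dimensional column space of $\mathbf{P}$.

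No step is a real obstacle. The only subtle point is being explicit that the factor vectors are the specific ones supplied by Proposition \ref{prop:rank}, so the hypothesis is placed on that particular representation. In particular, the count $c_h$ in the conclusion refers to the number of terms in that representation, not to an arbitrary decomposition of $M$.
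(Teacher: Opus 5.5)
Your proof is correct and follows essentially the paper's argument: stack the factor vectors as columns so that the matricized response is $\mathbf{P}\mathbf{Q}^\top$, use linear independence to get $c_h \le \min(N_1,N_2)$, and use full column rank of both factors to conclude that the rank is exactly $c_h$. The paper finishes by noting that left-multiplication by a full-column-rank matrix preserves rank, so $\operatorname{rank}(\mathbf{P}\mathbf{Q}^\top)=\operatorname{rank}(\mathbf{Q}^\top)=c_h$, which is equivalent to your one-sided-inverse argument.
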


The rank bound has a concrete economic interpretation.
At $p = 1$, all units co-move proportionally at every horizon.
At $p \geq 2$, this co-movement restriction is progressively relaxed as the horizon increases.
The matricized impulse responses then admit an increasingly rich pattern of heterogeneous responses across the two dimensions.

Finally, the impulse responses presented so far are reduced-form impulse responses, that is, responses of the system to a shock in the reduced-form innovations $\mathbf{u}_t$.
To permit causal interpretation of the impulse responses, structural assumptions about the data generating process need to be imposed.
Consider, for instance, identification through an impact matrix $\mathbf{P}$ satisfying $\boldsymbol{\Sigma} = \mathbf{P}\mathbf{P}^\top$, which exists by Assumption \ref{ass:wn}.
The horizon-$h$ response at vectorized position $i$ in $\mathbf{y}_t$ to the identified shock at position $k$ is then
\begin{equation}\label{eq:theta_def}
    \theta_{ik,h} = g_{ik,h}(\boldsymbol{\zeta})
    = \mathbf{e}_i^\top \boldsymbol{\Theta}_h(\boldsymbol{\beta}) \mathbf{P}(\boldsymbol{\sigma}) \mathbf{e}_k,
\end{equation}
where we make explicit that $\boldsymbol{\Theta}_h$ depends on the coefficients through the recursion \eqref{eq:mma} and $\mathbf{P}$ on the covariance parameters through the identification scheme.
The reduced-form response is nested as the special case $\mathbf{P} = \mathbf{I}_{N_1 N_2}$.

Given the separable covariance structure $\boldsymbol{\Sigma} = \boldsymbol{\Sigma}_2 \otimes \boldsymbol{\Sigma}_1$ maintained by the MLE, an identification carried out on the two covariance matrices separately (see e.g., \citealp{bucci_structural_2026, lara_structural_2026}) gives an impact matrix that respects the matrix structure.
Factoring $\boldsymbol{\Sigma}_1 = \mathbf{P}_1 \mathbf{P}_1^\top$ and $\boldsymbol{\Sigma}_2 = \mathbf{P}_2 \mathbf{P}_2^\top$ and setting $\mathbf{P} = \mathbf{P}_2 \otimes \mathbf{P}_1$ yields a valid impact matrix, since the mixed-product property gives $\mathbf{P}\mathbf{P}^\top = (\mathbf{P}_2 \mathbf{P}_2^\top) \otimes (\mathbf{P}_1 \mathbf{P}_1^\top) = \boldsymbol{\Sigma}$, and the impact vector inherits the Kronecker form of $\mathbf{e}_k$,
\begin{equation}\label{eq:kron_impact}
    \mathbf{P}\mathbf{e}_k = (\mathbf{P}_2 \mathbf{e}_s) \otimes (\mathbf{P}_1 \mathbf{e}_r).
\end{equation}
For instance, the Cholesky factor of $\boldsymbol{\Sigma}$ equals $\mathbf{P}_2 \otimes \mathbf{P}_1$ when $\mathbf{P}_1$ and $\mathbf{P}_2$ are the Cholesky factors of $\boldsymbol{\Sigma}_1$ and $\boldsymbol{\Sigma}_2$, and rotations applied to each covariance matrix separately preserve this form.
Since Proposition \ref{prop:rank} and Corollary \ref{cor:rank} are stated for an arbitrary Kronecker vector, they cover such separably identified responses by taking $\mathbf{v} = \mathbf{P}\mathbf{e}_k$ as in \eqref{eq:kron_impact}, and more generally the impact of any shock that is itself a Kronecker vector, such as the joint shock of Section \ref{subsec:ident}.
In contrast, under an unrestricted $\boldsymbol{\Sigma}$, or under an identification applied jointly to the full system through a non-Kronecker rotation, the impact vector $\mathbf{P}\mathbf{e}_k$ need not be a Kronecker vector, and the rank results no longer apply.
Assumption \ref{ass:wn} plays no role in Proposition \ref{prop:rank} itself, but it is what carries the rank results over to structural responses, which require Kronecker structure in the innovation covariance as well as in the coefficients.

\section{Inference for MAR Impulse Responses}\label{sec:inference}
This section develops inference for the impulse responses defined in Section \ref{subsec:irf}.
We first derive the joint asymptotic distribution of the coefficient and covariance estimators and apply the delta method, which yields closed-form standard errors (Section \ref{subsec:asymptotic}).
We then develop a bias-corrected bootstrap procedure, labeled ProBAB-MAR, whose asymptotic validity we establish (Section \ref{subsec:bootstrap}).
Throughout this section, we maintain Assumption \ref{ass:stationary} and impose the following two assumptions.

\begin{assumption}[Independence and fourth moments]\label{ass:iid-plus}
Assumption \ref{ass:wn} holds, and in addition the innovations $\{\mathbf{U}_t\}$ are independent and identically distributed with $\mathbb{E}\|\mathbf{U}_t\|_F^4 < \infty$.
\end{assumption}

\begin{assumption}[Non-degeneracy]\label{ass:nondegen}
The impulse response satisfies $\nabla g_{ik,h}(\boldsymbol{\zeta}) \neq \mathbf{0}$.
\end{assumption}
Assumption \ref{ass:iid-plus} adds two conditions to Assumption \ref{ass:wn}, independence and finite fourth moments.
Together with Assumption \ref{ass:stationary}, independence makes $\mathbf{y}_t$ stationary and ergodic and makes the score a martingale difference, as required for the asymptotic theory in Section \ref{subsec:asymptotic}.
It is also the property that the residual bootstrap in Section \ref{subsec:bootstrap} mimics, since the bootstrap innovations are drawn independently from the residuals (see \citealp{bruggemann_inference_2016} for the failure of such schemes under conditional heteroskedasticity).
Finite fourth moments are needed because the covariance block of the score is quadratic in $\mathbf{U}_t$, so that the asymptotic variance of $\widehat{\boldsymbol{\zeta}}$ is finite.

Assumption \ref{ass:nondegen} is not needed for the asymptotic normality of $\widehat{\boldsymbol{\zeta}}$ in Proposition \ref{prop:an}, but only for the resulting impulse response intervals, both from the delta method and from the bootstrap.
It is stated for a single response because the asymptotic variance $\boldsymbol{\Omega}$ of $\widehat{\boldsymbol{\zeta}}$ is singular, with null space spanned by the gradients of the $p+1$ normalizations.
Since impulse responses are invariant to the same rescalings, $\nabla g_{ik,h}(\boldsymbol{\zeta})$ can only lie in this null space if it vanishes.
Assumption \ref{ass:nondegen} therefore guarantees $\sigma_{ik,h}^2 = \nabla g_{ik,h}(\boldsymbol{\zeta})^\top \boldsymbol{\Omega} \nabla g_{ik,h}(\boldsymbol{\zeta}) > 0$, the condition under which percentile intervals retain their coverage in unrestricted VARs \citep{benkwitz_problems_2000, inoue_bootstrapping_2002}.
For a fixed linear combination of responses, such as the joint shock of Section \ref{subsec:ident}, the condition is imposed on the gradient of the combination.

\subsection{Asymptotic Inference}\label{subsec:asymptotic}
The identified impulse response $\theta_{ik,h} = g_{ik,h}(\boldsymbol{\zeta})$ depends on both the coefficients $\boldsymbol{\beta}$ and the covariance parameters $\boldsymbol{\sigma}$, so its inference requires the joint asymptotic distribution of their estimators.
We first derive this joint distribution and then apply the delta method to obtain closed-form standard errors at any fixed horizon.
Proofs are in Appendix \ref{app:asymptotic}.

Existing asymptotic results for matrix and tensor autoregressions concern the coefficient estimators only \citep{chen_autoregressive_2021, li_multilinear_2021}.
Proposition \ref{prop:an} extends these results to the joint distribution of the coefficient and covariance estimators of the MAR($p$), obtained by maximizing the Gaussian log-likelihood \eqref{eq:loglik} as described in Section \ref{subsec:MAR}.
Since Assumption \ref{ass:iid-plus} does not require Gaussian innovations, $\widehat{\boldsymbol{\zeta}}$ is a quasi-maximum likelihood estimator, and the asymptotic variance of the covariance estimators depends on the fourth moments of $\mathbf{U}_t$ rather than on $\boldsymbol{\Sigma}$ alone.

\begin{proposition}\label{prop:an}
    Let Assumptions \ref{ass:stationary} and \ref{ass:iid-plus} hold and let $\widehat{\boldsymbol{\zeta}}$ maximize \eqref{eq:loglik} subject to the normalization requirements.
    Then
    \begin{equation}\label{eq:an}
        \sqrt{T}\big(\widehat{\boldsymbol{\zeta}} - \boldsymbol{\zeta}\big)
        \xrightarrow{d} N\big(\mathbf{0}, \boldsymbol{\Omega}\big),
        \qquad
        \boldsymbol{\Omega} = \operatorname{blkdiag}\big(\boldsymbol{\Omega}_\beta, \boldsymbol{\Omega}_\sigma\big),
    \end{equation}
    with $\boldsymbol{\Omega}_\beta$ and $\boldsymbol{\Omega}_\sigma$ given explicitly in Appendix \ref{app:asymptotic}. 
\end{proposition}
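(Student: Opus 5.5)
The plan is to treat $\widehat{\boldsymbol{\zeta}}$ as a constrained quasi-maximum likelihood estimator and follow the standard M-estimation route: consistency, a score CLT, and a Taylor expansion of the constrained first-order conditions. The normalization constraints are handled with a Lagrangian (equivalently, a local reparametrization).

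Collect the $p+1$ normalizations $\|\mathbf{A}_j\|_F^2=1$ and $\|\boldsymbol{\Sigma}_1\|_F^2=1$ in $\mathbf{h}(\boldsymbol{\zeta})=\mathbf{0}$, with Jacobian $\mathbf{H}=\partial \mathbf{h}/\partial\boldsymbol{\zeta}^\top$ of full row rank $p+1$.

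\textbf{Step 1: consistency.} Under Assumptions \ref{ass:stationary} and \ref{ass:iid-plus}, $\mathbf{y}_t$ is strictly stationary and ergodic with finite fourth moments. Hence $T^{-1}\ell_T(\boldsymbol{\zeta})$ converges uniformly on compact neighbourhoods to
\[
\ell(\boldsymbol{\zeta})=-\tfrac12\big[N_2\log|\boldsymbol{\Sigma}_1|+N_1\log|\boldsymbol{\Sigma}_2|+\operatorname{tr}(\boldsymbol{\Sigma}^{-1}\mathbb{E}\,\mathbf{u}_t(\boldsymbol{\beta})\mathbf{u}_t(\boldsymbol{\beta})^\top)\big].
\]
This limit is uniquely maximized on the constraint set at the truth. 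The reason is that $\mathbb{E}\,\mathbf{u}_t(\boldsymbol{\beta})\mathbf{u}_t(\boldsymbol{\beta})^\top-\boldsymbol{\Sigma}$ is positive semidefinite, with equality iff $\mathbf{C}_j(\boldsymbol{\beta})=\mathbf{C}_j$, since $\operatorname{Cov}(\mathbf{y}_{t-1},\ldots,\mathbf{y}_{t-p})$ is positive definite. The Kronecker decomposition is unique under the normalization and sign convention. Given $\mathbf{C}$, the separable covariance maximizing the Gaussian term is $\boldsymbol{\Sigma}$, and it is unique under $\|\boldsymbol{\Sigma}_1\|_F=1$. I would localize to a compact neighbourhood, as in \citet{chen_autoregressive_2021, li_multilinear_2021}.

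\textbf{Step 2: score and information.} Differentiate \eqref{eq:loglik}. The $\mathbf{A}$-block of the score is $\sum_t \operatorname{vec}(\boldsymbol{\Sigma}_1^{-1}\mathbf{U}_t\boldsymbol{\Sigma}_2^{-1}\mathbf{B}\mathbf{X}_t^\top)$, and the $\mathbf{B}$-block is analogous. Both are linear in $\mathbf{U}_t$ times $\mathcal{F}_{t-1}$-measurable regressors. The $\boldsymbol{\Sigma}_1$-block is
\[
\tfrac12\sum_t \mathbf{D}_{N_1}^\top\operatorname{vec}\big(\boldsymbol{\Sigma}_1^{-1}(\mathbf{U}_t\boldsymbol{\Sigma}_2^{-1}\mathbf{U}_t^\top-N_2\boldsymbol{\Sigma}_1)\boldsymbol{\Sigma}_1^{-1}\big),
\]
and the $\boldsymbol{\Sigma}_2$-block is analogous. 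These are centred quadratics in $\mathbf{U}_t$. By independence, the score $\mathbf{s}_t$ is a stationary ergodic martingale difference with finite variance, where finiteness uses the fourth moments. The martingale CLT then gives $T^{-1/2}\sum_t\mathbf{s}_t\xrightarrow{d}N(\mathbf{0},\mathbf{J})$.

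Block-diagonality is the key structural fact. The cross-covariance between coefficient and covariance scores involves $\mathbb{E}[\text{linear}(\mathbf{U}_t)\,\text{quadratic}(\mathbf{U}_t)\mid\mathcal{F}_{t-1}]$ multiplied by regressors with mean zero, since $\mathbb{E}\mathbf{X}_t=\mathbf{0}$ and $\mathbf{U}_t$ is independent of $\mathbf{X}_t$. So the third moments drop out and $\mathbf{J}$ is block diagonal. Likewise $\mathbf{I}=-\operatorname{plim}T^{-1}\nabla^2\ell_T$ is block diagonal, because the mixed second derivatives are linear in $\mathbf{U}_t$ and average to zero. The $\boldsymbol{\beta}$-block is
\[
\mathbf{I}_\beta=\mathbf{J}_\beta=\mathbf{W}^\top(\boldsymbol{\Gamma}\otimes\boldsymbol{\Sigma}^{-1})\mathbf{W},
\]
where $\mathbf{W}=\partial\operatorname{vec}[\mathbf{C}_1\cdots\mathbf{C}_p]/\partial\boldsymbol{\beta}^\top$ and $\boldsymbol{\Gamma}$ is the autocovariance of the stacked lags. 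For the $\boldsymbol{\sigma}$-block, $\mathbf{J}_\sigma$ involves $\operatorname{Var}(\operatorname{vec}(\mathbf{u}_t\mathbf{u}_t^\top))$ (fourth moments), while $\mathbf{I}_\sigma$ depends only on $\boldsymbol{\Sigma}$.

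\textbf{Step 3: expansion with constraints.} This is the main obstacle, because $\mathbf{I}$ is singular: its null space is spanned by the rescaling directions $(\mathbf{A}_j,\mathbf{B}_j)\mapsto(c\mathbf{A}_j,c^{-1}\mathbf{B}_j)$ and $(\boldsymbol{\Sigma}_1,\boldsymbol{\Sigma}_2)\mapsto(c\boldsymbol{\Sigma}_1,c^{-1}\boldsymbol{\Sigma}_2)$. I would expand the Lagrangian first-order conditions
\[
\nabla\ell_T(\widehat{\boldsymbol{\zeta}})+\mathbf{H}(\widehat{\boldsymbol{\zeta}})^\top\widehat{\boldsymbol{\lambda}}=\mathbf{0},\qquad \mathbf{h}(\widehat{\boldsymbol{\zeta}})=\mathbf{0}.
\]
The true multiplier is zero, because the unconstrained likelihood is invariant along the rescaling directions. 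Next I would check that the bordered matrix $\begin{bmatrix}\mathbf{I}&\mathbf{H}^\top\\ \mathbf{H}&\mathbf{0}\end{bmatrix}$ is invertible. This holds because $\mathbf{H}$ has nonzero components along each null direction: for instance, differentiating $\|c\mathbf{A}_j\|_F^2$ at $c=1$ gives $2\neq0$. Writing $\mathbf{M}$ for the upper-left block of its inverse, equivalently $\mathbf{M}=\mathbf{Q}(\mathbf{Q}^\top\mathbf{I}\mathbf{Q})^{-1}\mathbf{Q}^\top$ with $\mathbf{Q}$ a basis of $\ker\mathbf{H}$, I obtain
\[
\sqrt T(\widehat{\boldsymbol{\zeta}}-\boldsymbol{\zeta})=\mathbf{M}\,T^{-1/2}\sum_t\mathbf{s}_t+o_p(1),
\]
hence $\boldsymbol{\Omega}=\mathbf{M}\mathbf{J}\mathbf{M}$. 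The normalizations act separately on $\boldsymbol{\beta}$ and $\boldsymbol{\sigma}$, so $\mathbf{H}$ is block diagonal, $\mathbf{M}$ inherits the block structure of $\mathbf{I}$, and $\boldsymbol{\Omega}=\operatorname{blkdiag}(\mathbf{M}_\beta\mathbf{J}_\beta\mathbf{M}_\beta,\mathbf{M}_\sigma\mathbf{J}_\sigma\mathbf{M}_\sigma)$. The remainder control uses uniform convergence of $T^{-1}\nabla^2\ell_T$ near $\boldsymbol{\zeta}$, which is routine given the finite fourth moments.
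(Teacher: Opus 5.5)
Your argument is correct and reaches the same limit, but it handles the two delicate steps differently from the paper.

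\textbf{The normalizations.} You use the Aitchison--Silvey bordered system and obtain $\boldsymbol{\Omega}=\mathbf{M}\mathbf{J}\mathbf{M}$ with $\mathbf{M}=\mathbf{Q}(\mathbf{Q}^\top\mathbf{I}\mathbf{Q})^{-1}\mathbf{Q}^\top$. The paper instead uses that the log-likelihood is exactly invariant under the rescalings. Hence the unconstrained condition $\nabla\ell_T(\widehat{\boldsymbol{\zeta}})=\mathbf{0}$ holds, so your sample multiplier $\widehat{\boldsymbol{\lambda}}$ is exactly zero, not only the population one. The paper combines this with the linearized normalizations $\boldsymbol{\Gamma}^\top(\widehat{\boldsymbol{\zeta}}-\boldsymbol{\zeta})=O_p(T^{-1})$. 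Here the paper's $\boldsymbol{\Gamma}$ collects the normalization gradients: $\boldsymbol{\Gamma}^\top$ is half of your constraint Jacobian, not your autocovariance matrix. Adding $\boldsymbol{\Gamma}\boldsymbol{\Gamma}^\top(\widehat{\boldsymbol{\zeta}}-\boldsymbol{\zeta})$ to both sides gives a single positive definite matrix $\mathbf{H}_0+\boldsymbol{\Gamma}\boldsymbol{\Gamma}^\top$ and the sandwich $(\mathbf{H}_0+\boldsymbol{\Gamma}\boldsymbol{\Gamma}^\top)^{-1}\boldsymbol{\Lambda}(\mathbf{H}_0+\boldsymbol{\Gamma}\boldsymbol{\Gamma}^\top)^{-1}$.

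The two forms agree. Each score is orthogonal to the flat directions, so it lies in the range of your $\mathbf{I}$ ($=\mathbf{H}_0$). On that range, $\mathbf{M}$ and $(\mathbf{H}_0+\boldsymbol{\Gamma}\boldsymbol{\Gamma}^\top)^{-1}$ both return the unique solution of $\mathbf{H}_0\boldsymbol{\delta}=\mathbf{v}$ with $\boldsymbol{\Gamma}^\top\boldsymbol{\delta}=\mathbf{0}$. Your version is the more general template: it does not need the multiplier to vanish, and it shows directly that $\boldsymbol{\Omega}$ is supported on the tangent space of the constraint set. The paper's version avoids choosing $\mathbf{Q}$ or inverting a bordered matrix.

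\textbf{Consistency.} You invoke a uniform law of large numbers plus identification of the limit criterion. The paper instead rewrites the criterion around the unrestricted least-squares estimates $(\breve{\mathbf{C}},\breve{\boldsymbol{\Sigma}})$. It then uses convexity of $-\log|\bar{\boldsymbol{\Omega}}|+\operatorname{tr}(\bar{\boldsymbol{\Omega}}\mathbf{S})$ in the precision matrix, together with a quadratic lower bound in $\bar{\mathbf{C}}-\breve{\mathbf{C}}$, and gets the $\sqrt{T}$ rate from the first-order conditions.

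That convexity argument covers the point your phrase ``localize to a compact neighbourhood'' leaves open. Even after normalization the parameter set is unbounded: $\mathbf{B}_j$ and $\boldsymbol{\Sigma}_2$ are free, and the covariance factors may degenerate. The uniform-convergence route therefore needs an explicit argument that the global maximizer lies in a fixed compact set with probability approaching one. This is the one step you should write out rather than cite.
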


The result for the  coefficient block is not new.
\citet{chen_autoregressive_2021} obtain it for the MAR($1$); for $p = 1$, the block $\boldsymbol{\Omega}_\beta$ is precisely that of Theorem 4 of \citet{chen_autoregressive_2021}.
\citet{li_multilinear_2021} obtain it at general lag order, as the two-way, single-term case of a tensor autoregression.
What Proposition \ref{prop:an} adds is the covariance block and the joint distribution, which is what an identified impulse response requires \citep{bruggemann_inference_2016}.
Equation \eqref{eq:an} holds under the fourth moments of Assumption \ref{ass:iid-plus}, whereas the maximum likelihood limit of \citet{li_multilinear_2021} is derived under Gaussian innovations.

Since the impulse responses depend on the coefficients only through the restricted VAR coefficients $\mathbf{C}$ in \eqref{eq:restrictedvar}, the Kronecker restriction on the coefficients enters impulse response inference only through the asymptotic distribution of $\widehat{\mathbf{C}}$.
With $\mathbf{J}_\beta = \partial \operatorname{vec}(\mathbf{C}) / \partial \boldsymbol{\beta}^\top$ the Jacobian of the mapping from the MAR coefficients to the restricted VAR coefficients, derived in Appendix \ref{app:asymptotic}, Proposition \ref{prop:an} and the delta method give
\begin{equation}\label{eq:OmegaC}
    \sqrt{T}\big(\operatorname{vec}(\widehat{\mathbf{C}}) - \operatorname{vec}(\mathbf{C})\big)
    \xrightarrow{d} N\big(\mathbf{0}, \boldsymbol{\Omega}_C\big),
    \qquad
    \boldsymbol{\Omega}_C = \mathbf{J}_\beta \boldsymbol{\Omega}_\beta \mathbf{J}_\beta^\top .
\end{equation}
Given $\boldsymbol{\Omega}_C$, the propagation part of the delta method coincides with that of an unrestricted VAR, with $\boldsymbol{\Omega}_C$ in place of the unrestricted asymptotic variance.

Two features distinguish $\boldsymbol{\Omega}_C$ from the asymptotic variance of an unrestricted VAR($p$).
First, it is singular, since $\mathbf{J}_\beta$ maps $\boldsymbol{\beta} \in \mathbb{R}^{p(N_1^2 + N_2^2)}$ into $\operatorname{vec}(\mathbf{C}) \in \mathbb{R}^{p(N_1 N_2)^2}$ and has rank at most $p(N_1^2 + N_2^2 - 1)$, which is smaller than $p(N_1 N_2)^2$ for $N_1, N_2 \geq 2$.
The Kronecker restriction thus appears as a rank deficiency in the coefficient space of the VAR($p$), where the corresponding asymptotic variance is nonsingular under Assumptions \ref{ass:stationary} and \ref{ass:iid-plus}.
Second, it does not depend on the normalizations imposed in Section \ref{subsec:MAR}, since $\mathbf{C}$ does not.
The $p$ scale directions between $\mathbf{A}_j$ and $\mathbf{B}_j$ along which the likelihood is flat lie in the null space of $\mathbf{J}_\beta$, so whatever they contribute to $\boldsymbol{\Omega}_\beta$ drops out of \eqref{eq:OmegaC}.

Given \eqref{eq:OmegaC} and the covariance block of Proposition \ref{prop:an}, inference on impulse responses is standard.
The response $\theta_{ik,h} = g_{ik,h}(\boldsymbol{\zeta})$ is differentiable in both blocks, through the recursion \eqref{eq:mma} in $\mathbf{C}$ and through the impact matrix in $\boldsymbol{\sigma}$, so the delta method gives, for any fixed $h \geq 0$,
\begin{equation}\label{eq:irf_an}
    \sqrt{T}\big(\widehat{\theta}_{ik,h} - \theta_{ik,h}\big)
    \xrightarrow{d} N\big(0, \sigma_{ik,h}^2\big),
\end{equation}
with $\sigma_{ik,h}^2$ the quantity assumed positive in Assumption \ref{ass:nondegen}.
Since $\boldsymbol{\Omega}$ is block diagonal, $\sigma_{ik,h}^2$ separates into a term in $\boldsymbol{\Omega}_C$, the uncertainty in how the shock propagates, and a term in $\boldsymbol{\Omega}_\sigma$, the uncertainty in its impact.
The two Jacobians involved are those of the unrestricted VAR \citep[Prop. 1]{lutkepohl_asymptotic_1990} and Appendix \ref{app:asymptotic} records the form they take under the separable covariance of the MAR.
The reduced-form case $\mathbf{P} = \mathbf{I}_{N_1 N_2}$ leaves only the first term, while at $h = 0$ only the second survives, since $\boldsymbol{\Theta}_0 = \mathbf{I}_{N_1 N_2}$ does not depend on $\mathbf{C}$.
Both are therefore required of any interval to be compared with the bootstrap procedures of Section \ref{subsec:bootstrap}, which account for the two sources jointly.
For a significance level $\alpha \in (0,1)$, the delta-method confidence interval for $\theta_{ik,h}$ is
\begin{equation*}
    \left[ \widehat{\theta}_{ik,h} - z_{1-\alpha/2}\frac{\widehat{\sigma}_{ik,h}}{\sqrt{T}}, \widehat{\theta}_{ik,h} + z_{1-\alpha/2}\frac{\widehat{\sigma}_{ik,h}}{\sqrt{T}} \right],
\end{equation*}
where $z_{1-\alpha/2}$ denotes the $(1-\alpha/2)$ quantile of the standard normal distribution.
Its asymptotic coverage is $1-\alpha$ by \eqref{eq:irf_an}.

\subsection{Bootstrap Inference}\label{subsec:bootstrap}

The delta-method intervals of Section \ref{subsec:asymptotic} can undercover in finite samples.
A well-known source of this distortion is the finite-sample bias of autoregressive coefficient estimators, which is of order $T^{-1}$ in unrestricted VARs but can be substantial when $T$ is small or the process is persistent \citep{pope_biases_1990}.
The bias propagates into the estimated impulse responses, typically understating their persistence, so that intervals built around them cover the true response less often than the nominal level.

For unrestricted VARs, \citet{kilian_small_1998} proposed the bootstrap-after-bootstrap (BAB), which uses a first-stage bootstrap to estimate and correct the bias before constructing the bootstrap distribution in a second stage.
Simulation evidence indicates that the resulting percentile intervals provide reliable coverage for VAR impulse responses in small samples \citep{kilian_structural_2017}, though the accuracy deteriorates in large systems \citep{kilian_accurate_2000} and near the boundary of the stationary region \citep{benkwitz_problems_2000, inoue_uniform_2020}.

Adapting the BAB to the MAR setting introduces a complication that does not arise in the unrestricted case.
The MAR estimator produces coefficient matrices with Kronecker structure ($\widehat{\mathbf{C}}_j = \widehat{\mathbf{B}}_j \otimes \widehat{\mathbf{A}}_j$).
However, there is no guarantee that the estimated bias $\widehat{\boldsymbol{\Psi}}_j$ shares this structure, so the bias-corrected matrix $\widehat{\mathbf{C}}_j - \widehat{\boldsymbol{\Psi}}_j$ need not be a Kronecker product.
The naive response is to accept this departure and bootstrap from the bias-corrected coefficients directly, which is the procedure we label BAB-MAR and relegate to Appendix \ref{app:bab}.
We instead project the bias-corrected coefficients back onto the space of Kronecker products, so that the bootstrap design and the estimated model coincide by construction.

\subsubsection{Projected Bootstrap after Bootstrap}\label{subsec:probab}

ProBAB-MAR retains the two-stage structure of the BAB, with one modification.
After bias correction, each lag-specific block is projected back onto the space of Kronecker products before being used as the bootstrap DGP.
This preserves the Kronecker structure throughout both stages of the procedure.

Specifically, define $\mathcal{P}(\cdot)$ as the nearest Kronecker product projection \citep{van_loan_approximation_1993, van_loan_kronecker_2000}, given by
\begin{equation}\label{eq:proj}
    \mathcal{P}(\mathbf{C}_j) = 
    \underset{\mathbf{V} \otimes \mathbf{U}}{\arg\min}
    \|\mathbf{C}_j - \mathbf{V} \otimes \mathbf{U}\|_F^2,
    \qquad \mathbf{U} \in \mathbb{R}^{N_1 \times N_1}, \; \mathbf{V} \in \mathbb{R}^{N_2 \times N_2},
\end{equation}
and let $\mathcal{P}(\mathbf{C}) = [\mathcal{P}(\mathbf{C}_1) \cdots \mathcal{P}(\mathbf{C}_p)]$ denote its blockwise application.

\begin{algorithm}[t]
\caption{ProBAB-MAR}\label{alg:probab}
\small
\begin{algorithmic}[1]
\Require Data $\{\mathbf{Y}_t\}_{t=-p+1}^{T}$, lag order $p$, replications $B_1, B_2$, significance level $\alpha$
\State Fit the MAR to obtain $\widehat{\mathbf{C}}_j = \widehat{\mathbf{B}}_j \otimes \widehat{\mathbf{A}}_j$ and $\widehat{\mathbf{u}}_t = \mathbf{y}_t - \sum_j \widehat{\mathbf{C}}_j \mathbf{y}_{t-j}$
\Statex \textit{Stage 1: bias-corrected DGP}
\For{$m = 1, \ldots, B_1$}
  \State Draw $\mathbf{u}^{*}_{t}$ from $\widehat{\mathbf{u}}_t$ and generate $\mathbf{y}^{*}_t$ from $\widehat{\mathbf{C}}$ and $\mathbf{u}^{*}_{t}$, with $\mathbf{y}^{*}_{-p+1} = \cdots = \mathbf{y}^{*}_{0} = \mathbf{0}$
  \State Re-fit the MAR to obtain $\widehat{\mathbf{C}}^{*(m)}$
\EndFor
\State Estimate the bias by $\widehat{\boldsymbol{\Psi}} \gets \bar{\mathbf{C}}^{*} - \widehat{\mathbf{C}}$, where $\bar{\mathbf{C}}^{*} = B_1^{-1}\sum_{m} \widehat{\mathbf{C}}^{*(m)}$
\State Set $\delta \gets \max\{d \in \mathcal{G} : \rho(\mathcal{P}(\widehat{\mathbf{C}} - d\,\widehat{\boldsymbol{\Psi}})) < 1\}$ and $\widetilde{\mathbf{C}} \gets \mathcal{P}(\widehat{\mathbf{C}} - \delta\widehat{\boldsymbol{\Psi}})$ \label{lin:probab-delta}
\Statex \textit{Stage 2: bootstrap distribution}
\For{$m = 1, \ldots, B_2$}
  \State Draw $\mathbf{u}^{*}_{t}$ from $\widehat{\mathbf{u}}_t$ and generate $\mathbf{y}^{*}_t$ from $\widetilde{\mathbf{C}}$ and $\mathbf{u}^{*}_{t}$, with $\mathbf{y}^{*}_{-p+1} = \cdots = \mathbf{y}^{*}_{0} = \mathbf{0}$
  \State Re-fit the MAR to obtain $\widehat{\mathbf{C}}^{*}$ and compute $\delta^{*}$ as in line \ref{lin:probab-delta} \label{lin:probab-refit}
\State Set $\widetilde{\mathbf{C}}^{*} \gets \mathcal{P}(\widehat{\mathbf{C}}^{*} - \delta^{*}\widehat{\boldsymbol{\Psi}})$, re-estimate $\widetilde{\boldsymbol{\sigma}}^{*}$ from the residuals $\mathbf{y}^{*}_t - \sum_j \widetilde{\mathbf{C}}^{*}_j \mathbf{y}^{*}_{t-j}$, and set $\widetilde{\mathbf{P}}^{*} = \mathbf{P}(\widetilde{\boldsymbol{\sigma}}^{*})$
\State Compute $\widetilde{\theta}^{*(m)}_{ik,h} = \mathbf{e}_i^\top \widetilde{\boldsymbol{\Theta}}^{*}_h \widetilde{\mathbf{P}}^{*} \mathbf{e}_k$, with $\widetilde{\boldsymbol{\Theta}}^{*}_h$ obtained from $\widetilde{\mathbf{C}}^{*}$ via \eqref{eq:mma}
\EndFor
\State \Return percentile interval from $\{\widetilde{\theta}^{*(m)}_{ik,h}\}_{m=1}^{B_2}$
\end{algorithmic}
\medskip
{\footnotesize \textit{Notes:} $\mathcal{G} = \{1, 0.99, \ldots, 0.01\}$, and $\delta = 0$ if no $d \in \mathcal{G}$ qualifies, in which case $\widetilde{\mathbf{C}} = \widehat{\mathbf{C}}$. $\rho(\mathbf{C})$ denotes the spectral radius of the companion matrix of $\mathbf{C}$, and $\mathcal{P}(\cdot)$ is the nearest Kronecker product projection \eqref{eq:proj}, applied to each lag coefficient.}
\end{algorithm}

\subsubsection*{Stage 1: Bias-corrected DGP}

Generate $B_1$ bootstrap samples $\{\mathbf{y}_t^{*}\}_{t=1}^T$ by simulating from the estimated model with $\widehat{\mathbf{C}}$ and the innovations drawn with replacement from the residuals $\{\widehat{\mathbf{u}}_t\}_{t=1}^T$.
Re-estimate the MAR on each bootstrap sample to obtain $\widehat{\mathbf{C}}^{*(m)}$, and estimate the bias as $\widehat{\boldsymbol{\Psi}} = \bar{\mathbf{C}}^{*} - \widehat{\mathbf{C}}$, where $\bar{\mathbf{C}}^{*} = B_1^{-1} \sum_{m=1}^{B_1} \widehat{\mathbf{C}}^{*(m)}$.
The projected bias-corrected coefficients are then
\begin{equation}\label{eq:bc}
    \widetilde{\mathbf{C}}_j = \mathcal{P}\!\left(\widehat{\mathbf{C}}_j - \delta\widehat{\boldsymbol{\Psi}}_j\right)
    = \widetilde{\mathbf{B}}_j \otimes \widetilde{\mathbf{A}}_j, \qquad j = 1, \ldots, p,
\end{equation}
where $\widehat{\boldsymbol{\Psi}}_j$ denotes the block of $\widehat{\boldsymbol{\Psi}}$ corresponding to lag $j$, and $\delta \in [0,1]$ shrinks the correction to keep the bootstrap DGP stationary.
As in the BAB, $\delta$ is the largest value on the grid $\{1, 0.99, \ldots, 0.01\}$ for which the matrix actually used to generate the bootstrap samples has all companion eigenvalues strictly inside the unit circle.
The order in which the projection and this stationarity adjustment are applied matters.
The projection minimizes Frobenius distance to the space of Kronecker products and places no restriction on the eigenvalues, so it may induce a companion root outside the unit circle even when the unprojected bias-corrected matrix has none.
Since it is the projected matrix that generates the bootstrap samples, it is the projected matrix whose eigenvalues must be checked, and $\delta$ in \eqref{eq:bc} is therefore chosen with the projection already applied.
If no grid value qualifies, the correction is not applied and $\widetilde{\mathbf{C}} = \widehat{\mathbf{C}}$.
Since the estimator does not impose stationarity, the resulting bootstrap DGP may itself be explosive; Section \ref{subsec:sim_persistence} reports how often this occurs.

\subsubsection*{Stage 2: Bootstrap distribution}

The second stage generates $B_2$ samples $\{\mathbf{y}_t^{*}\}_{t=1}^T$ recursively from the projected bias-corrected DGP $\widetilde{\mathbf{C}}$, with innovations $\mathbf{u}_t^{*}$ drawn with replacement from the residuals of the original fit, as in Stage 1, and initial values set to $\mathbf{y}_{-p+1}^{*} = \cdots = \mathbf{y}_{0}^{*} = \mathbf{0}$.
Re-estimating the MAR on each sample yields $\widehat{\mathbf{C}}^{*}$, which, as an estimator of the bootstrap DGP $\widetilde{\mathbf{C}}$, is subject to the same finite-sample bias as $\widehat{\mathbf{C}}$.
The correction is accordingly applied a second time within each replication, giving $\widetilde{\mathbf{C}}^{*} = \mathcal{P}(\widehat{\mathbf{C}}^{*} - \delta^{*}\widehat{\boldsymbol{\Psi}})$, with the projection again preceding the eigenvalue check that determines $\delta^{*}$.

Because each lag block $\widetilde{\mathbf{C}}_j^{*}$ again factors as $\widetilde{\mathbf{B}}_j^{*} \otimes \widetilde{\mathbf{A}}_j^{*}$, the bootstrap impulse responses $\widetilde{\boldsymbol{\Theta}}_h^{*}$ obtained through the recursion \eqref{eq:mma} are those of a MAR, and inherit the structure of Section \ref{subsec:irf}.
In each replication, the bootstrap response is $\widetilde{\theta}_{ik,h}^{*} = \mathbf{e}_i^\top \widetilde{\boldsymbol{\Theta}}_h^{*} \widetilde{\mathbf{P}}^{*} \mathbf{e}_k$, where $\widetilde{\boldsymbol{\Theta}}_h^{*}$ follows from $\widetilde{\mathbf{C}}^{*}$ through \eqref{eq:mma} and $\widetilde{\mathbf{P}}^{*} = \mathbf{P}(\widetilde{\boldsymbol{\sigma}}^{*})$ applies the identification scheme to the covariance re-estimated from the residuals of $\widetilde{\mathbf{C}}^{*}$, so that the bootstrap distribution reflects uncertainty in the impact as well as in the propagation of the shock.
The $(1-\alpha)$ percentile confidence interval for $\theta_{ik,h}$ in \eqref{eq:theta_def} is formed from the $\alpha/2$ and $1-\alpha/2$ quantiles of $\widetilde{\theta}_{ik,h}^{*}$ across the $B_2$ replications.

Finally, note that strictly, the bias of $\widehat{\mathbf{C}}^{*}$ relative to $\widetilde{\mathbf{C}}$ has to be estimated by a further bootstrap loop nested inside each replication.
As in the BAB, we instead reuse the Stage 1 estimate $\widehat{\boldsymbol{\Psi}}$ as a proxy, recomputing only the shrinkage factor $\delta^{*}$, so that the procedure requires $B_1 + B_2$ rather than $B_1 B_2$ re-estimations of the model.

\subsubsection{Asymptotic Validity}
\label{subsec:validity}

We establish that ProBAB-MAR yields intervals with asymptotically correct coverage.
First, we show that the residual bootstrap is consistent for the MAR, that is, resampling residuals and re-estimating reproduces the joint limiting distribution of Proposition \ref{prop:an}.
Second, we show that the three modifications ProBAB-MAR makes to that scheme, namely bias-correcting the bootstrap DGP, projecting it back onto the Kronecker class, and shrinking the correction to preserve stationarity, leave the first-order behavior unchanged.
We take these in turn.
Throughout, $P^{*}$ and $\mathbb{E}^{*}$ denote probability and expectation conditional on the sample $\mathbf{y}_{-p+1}, \ldots, \mathbf{y}_T$, and $\xrightarrow{d^{*}}$, $o_{p^{*}}(\cdot)$ and $O_{p^{*}}(\cdot)$ denote weak convergence and stochastic orders under $P^{*}$, each holding in probability.

\subsubsection*{Asymptotic validity of the residual bootstrap}

Consider first the simpler scheme in which bootstrap samples are generated from $\widehat{\mathbf{C}}$ itself, with innovations drawn with replacement from the residuals, and let $\widehat{\boldsymbol{\zeta}}^{*}$ denote the estimator recomputed on such a sample.

\begin{proposition}[Residual bootstrap consistency]\label{prop:boot}
Let Assumptions \ref{ass:stationary} and \ref{ass:iid-plus} hold, and let $\widehat{\boldsymbol{\zeta}}^{*}$ maximize the log-likelihood \eqref{eq:loglik} on the bootstrap sample, subject to the normalizations of Section \ref{subsec:MAR}. Then
\begin{equation}\label{eq:boot_zeta}
    \sqrt{T}\big(\widehat{\boldsymbol{\zeta}}^{*} - \widehat{\boldsymbol{\zeta}}\big)
    \xrightarrow{d^{*}} N\big(\mathbf{0}, \boldsymbol{\Omega}\big)
    \qquad \text{in probability.}
\end{equation}
\end{proposition}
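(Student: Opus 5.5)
The plan is to mirror, conditional on the data, the proof of Proposition \ref{prop:an}. That argument rests on three ingredients.

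\begin{itemize}
\item[(i)] A Taylor expansion of the normalized score around the truth.
\item[(ii)] A law of large numbers for the normalized Hessian, evaluated at points between $\widehat{\boldsymbol{\zeta}}$ and $\boldsymbol{\zeta}$.
\item[(iii)] A martingale central limit theorem for $T^{-1/2}$ times the score.
\end{itemize}

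For the bootstrap, the role of the true parameter is played by $\widehat{\boldsymbol{\zeta}}$ and the role of the innovation law by the empirical distribution $\widehat F_T$ of the (centered) residuals $\widehat{\mathbf{U}}_t$. Write $\widehat{\boldsymbol{\Sigma}}^{e}$ for the empirical second-moment matrix of $\operatorname{vec}(\widehat{\mathbf{U}}_t)$ under $\widehat F_T$. Note that $\widehat{\boldsymbol{\Sigma}}^{e}$ need not be exactly separable. The bootstrap pseudo-true covariance is therefore the maximizer of the separable Gaussian likelihood under $\widehat F_T$, which coincides with $\widehat{\boldsymbol{\Sigma}}_2\otimes\widehat{\boldsymbol{\Sigma}}_1$. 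This holds because $\widehat{\boldsymbol{\zeta}}$ already satisfies the first-order conditions of \eqref{eq:loglik}, and those conditions only involve the residuals. This observation is what makes $\widehat{\boldsymbol{\zeta}}$ the correct centering in \eqref{eq:boot_zeta}.

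The key steps, in order, are as follows.

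\textbf{Step 1: the empirical innovation law converges.} By Proposition \ref{prop:an}, $\widehat{\boldsymbol{\zeta}}\to\boldsymbol{\zeta}$ in probability. Using Assumption \ref{ass:iid-plus}, I show that $\widehat F_T$ converges to the law $F$ of $\mathbf{U}_t$ in the Mallows--Wasserstein metric $d_4$. This implies that the bootstrap innovations have second and fourth moments converging in probability to those of $\mathbf{U}_t$. It is the standard argument of \citet{bruggemann_inference_2016} and the residual-bootstrap VAR literature: $\widehat{\mathbf{u}}_t-\mathbf{u}_t=-\sum_j(\widehat{\mathbf{C}}_j-\mathbf{C}_j)\mathbf{y}_{t-j}$, whose fourth moments average to $O_p(T^{-2})$.

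\textbf{Step 2: the bootstrap process is uniformly stable.} Since the eigenvalues of the companion matrix of $\widehat{\mathbf{C}}$ converge to those of $\mathbf{F}$, Assumption \ref{ass:stationary} gives, with probability tending to one, $\rho(\widehat{\mathbf{F}})\le\bar\rho<1$. Hence the bootstrap process $\mathbf{y}^*_t$ has a geometrically decaying moving-average representation in $\widehat{\boldsymbol{\Theta}}_h$, uniformly. The zero initial values contribute only $O_{p^*}(T^{-1/2})$ terms to normalized sums.

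\textbf{Step 3: bootstrap laws of large numbers.} From Step 2 I obtain bootstrap laws of large numbers for the sample moments $T^{-1}\sum\mathbf{y}^*_{t-i}\mathbf{y}^{*\top}_{t-j}$ and $T^{-1}\sum\mathbf{u}^*_t\mathbf{u}^{*\top}_t$. These yield consistency of $\widehat{\boldsymbol{\zeta}}^*$ for $\widehat{\boldsymbol{\zeta}}$. The argument uses the identification of $\mathbf{C}_j$ and $\boldsymbol{\Sigma}$ together with the normalizations, which pin down the factors locally.

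\textbf{Step 4: the limiting Hessian.} The bootstrap Hessian at intermediate points converges in $P^*$-probability to the same limit $\mathcal{H}$ as in Proposition \ref{prop:an}.

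The main obstacle is the bootstrap CLT for the score, $T^{-1/2}\mathbf{s}^*_T(\widehat{\boldsymbol{\zeta}})$. This score has two parts.

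\begin{itemize}
\item The coefficient block is linear in $\mathbf{U}^*_t$, multiplied by lagged $\mathbf{Y}^*$.
\item The covariance block is quadratic in $\mathbf{U}^*_t$, of the form $\operatorname{vech}$ of $\boldsymbol{\Sigma}_1^{-1}\mathbf{U}^*_t\boldsymbol{\Sigma}_2^{-1}\mathbf{U}^{*\top}_t - N_2\mathbf{I}$ and its analogue for $\boldsymbol{\Sigma}_2$.
\end{itemize}

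Conditionally on the data, the score is a martingale difference array with respect to $\mathcal{F}^*_t=\sigma(\mathbf{u}^*_1,\dots,\mathbf{u}^*_t)$, because the $\mathbf{u}^*_t$ are i.i.d.\ draws from $\widehat F_T$ and are centered. I will apply a martingale CLT conditionally, which requires two checks.

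\begin{itemize}
\item \emph{Conditional variance.} Its normalized sum converges to the $\mathcal{I}$ of Proposition \ref{prop:an}. For the coefficient block this uses Step 3. For the covariance block it uses the fourth moments of $\widehat F_T$ from Step 1. The cross term involves $\mathbb{E}^*[\mathbf{u}^*\otimes\mathbf{u}^*\otimes\mathbf{u}^*]$ times lagged $\mathbf{y}^*$, whose average vanishes because $\mathbb{E}^*\mathbf{y}^*_{t-j}\to 0$. This is what keeps $\boldsymbol{\Omega}$ block diagonal.
\item \emph{Lindeberg condition.} This is where the difficulty lies. Only fourth moments are available, yet the covariance score needs uniform integrability of $\|\mathbf{u}^*\|^4$. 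I would obtain it from $d_4$-convergence, which delivers convergence of fourth moments and hence uniform integrability, rather than from a Lyapunov condition with higher moments.
\end{itemize}

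Combining the Taylor expansion with the Hessian limit $\mathcal{H}$ gives
\begin{equation*}
\sqrt{T}(\widehat{\boldsymbol{\zeta}}^*-\widehat{\boldsymbol{\zeta}})=-\mathcal{H}^{-}T^{-1/2}\mathbf{s}^*_T+o_{p^*}(1),
\end{equation*}
where $\mathcal{H}^{-}$ is the generalized inverse on the normalization-constrained tangent space, exactly as in Appendix \ref{app:asymptotic}. Since the limit of the right-hand side is the same sandwich $\boldsymbol{\Omega}$, \eqref{eq:boot_zeta} follows. Finally, I will pass from the conditional statement to convergence in probability by the usual subsequence argument.
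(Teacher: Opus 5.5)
Your proposal is correct and follows essentially the paper's own route. In both, $\widehat{\boldsymbol{\zeta}}$ plays the role of the true parameter, and the bootstrap law of large numbers comes from the geometrically decaying truncated moving-average representation of $\mathbf{y}^{*}_t$. The covariance score is centered by the first-order conditions satisfied by $\widehat{\boldsymbol{\zeta}}$, even though the empirical residual covariance is not separable. A triangular-array martingale CLT then delivers the constrained sandwich $\mathbf{H}^{-1}\boldsymbol{\Lambda}\mathbf{H}^{-1}$, and your tangent-space generalized inverse is equivalent to the paper's $\boldsymbol{\Gamma}\boldsymbol{\Gamma}^\top$ augmentation.

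Your Mallows-$d_4$ step is a genuine sharpening. The paper only records $\mathbb{E}^{*}\|\mathbf{u}^{*}_t\|^4=O_p(1)$, which does not by itself yield the Lindeberg condition, or the limit $\boldsymbol{\Lambda}_\sigma$, for the quadratic covariance block.

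One small point to fix: the vanishing of the cross block of the conditional variance needs $T^{-1}\sum_t\mathbf{y}^{*}_{t-j}=o_{p^{*}}(1)$, obtained by a one-line variance bound, not merely $\mathbb{E}^{*}\mathbf{y}^{*}_{t-j}\to 0$.
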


The argument is the bootstrap-world counterpart of Proposition \ref{prop:an} and is given in Appendix \ref{app:bootstrap}.
Conditional on the sample, the innovations $\mathbf{u}_t^{*}$ are i.i.d.\ and the coefficients generating $\{\mathbf{y}_t^{*}\}$ are fixed, so the score decomposition of Appendix \ref{app:asymptotic} carries over with $\mathbb{E}^{*}$ in place of $\mathbb{E}$, and the rate $\widehat{\boldsymbol{\zeta}}^{*} - \widehat{\boldsymbol{\zeta}} = O_{p^{*}}(T^{-1/2})$ follows from the convexity argument of the proof of Theorem 4 of \citet{chen_autoregressive_2021} applied conditionally.
The bootstrap process is not, however, stationary and ergodic.
The resampling distribution and the coefficients both depend on $T$, and the initial block is set to zero rather than from the stationary distribution, so $\{\mathbf{y}_t^{*}\}$ is a triangular array.
Two steps of Appendix \ref{app:asymptotic} therefore require separate treatment.
First, the ergodic theorem is replaced by a law of large numbers for $T^{-1}\sum_t \mathbf{x}_{t}^{*}\mathbf{x}_{t}^{*\top}$, obtained from the moving average representation of the bootstrap process and the geometric decay of its coefficients, which holds uniformly on an event of probability approaching one.
Second, the martingale central limit theorem of \citet{billingsley_lindeberg-levy_1961}, which presumes stationarity, is replaced by its triangular-array counterpart \citep{brown_martingale_1971}, whose conditional variance and Lindeberg conditions are verified using the fourth moments of Assumption \ref{ass:iid-plus}.
The coefficient block of the score is a martingale difference because $\mathbf{W}_t^{*}$ is $\mathcal{F}_{t-1}^{*}$-measurable, as in the sampling world; the covariance block, which carries no such factor, is centered instead by the first-order condition $T^{-1}\sum_t \widehat{\mathbf{U}}_t \widehat{\boldsymbol{\Sigma}}_2^{-1}\widehat{\mathbf{U}}_t^\top = N_2 \widehat{\boldsymbol{\Sigma}}_1$ satisfied by the estimator, up to a demeaning residual of order $O_p(T^{-1})$ that is negligible after scaling.
The bootstrap Hessian converges to $\mathbf{H}$ and the sandwich $\mathbf{H}^{-1}\boldsymbol{\Lambda}\mathbf{H}^{-1}$ is recovered.
Proposition \ref{prop:boot} is stated as weak convergence because $\boldsymbol{\Omega}$ is singular, so the joint limiting distribution function need not be continuous.
This does not affect the impulse responses, whose limiting variance is positive under Assumption \ref{ass:nondegen}.

\subsubsection*{ProBAB-MAR modifications are asymptotically negligible}
ProBAB-MAR departs from the above described simplified scheme in three ways, none of which affects \eqref{eq:boot_zeta}.
Throughout, write
\begin{equation*}
    \widetilde{\mathbf{C}}(\delta) = \mathcal{P}\big(\widehat{\mathbf{C}} - \delta\widehat{\boldsymbol{\Psi}}\big),
    \qquad \delta \in [0,1],
\end{equation*}
for the candidate bootstrap DGP at shrinkage factor $\delta$, so that $\widetilde{\mathbf{C}}$ of \eqref{eq:bc} is $\widetilde{\mathbf{C}}(\delta)$ at the selected value and $\widetilde{\mathbf{C}}(0) = \widehat{\mathbf{C}}$.

First, Stage 2 generates from $\widetilde{\mathbf{C}}$ rather than from $\widehat{\mathbf{C}}$.
The estimated bias satisfies $\widehat{\boldsymbol{\Psi}} = O_p(T^{-1})$ \citep{pope_biases_1990}, while the estimator converges at rate $T^{-1/2}$.
The projection does not undo this gap.
The rearrangement of $\mathbf{C}_j$ is exactly rank one, so it carries a single leading singular value and is bounded away from zero whenever $\mathbf{A}_j$ and $\mathbf{B}_j$ are nonzero.
Since $\widehat{\mathbf{C}}$ lies on the Kronecker manifold, $\mathcal{P}(\widehat{\mathbf{C}}) = \widehat{\mathbf{C}}$, and therefore
\begin{equation}\label{eq:dgp_perturb}
    \widetilde{\mathbf{C}} - \widehat{\mathbf{C}}
    = \mathcal{P}\big(\widehat{\mathbf{C}} - \delta\widehat{\boldsymbol{\Psi}}\big) - \mathcal{P}\big(\widehat{\mathbf{C}}\big)
    = O_p(T^{-1}).
\end{equation}
The bootstrap DGP is thus perturbed by an amount an order of magnitude smaller than the sampling error it is meant to approximate, and conditioning on $\widetilde{\mathbf{C}}$ as the bootstrap population value leaves \eqref{eq:boot_zeta} intact.

Second, the same correction is applied within each bootstrap replication, with $\widehat{\boldsymbol{\Psi}}$ reused in place of the nested-bootstrap estimate that this nominally requires.
The two differ by $o_p(T^{-1/2})$ (see Appendix \ref{app:proofs}), which is negligible compared to the $O_p(T^{-1})$ correction itself, so replacing $\widehat{\mathbf{C}}^{*}$ by $\widetilde{\mathbf{C}}^{*}$ leaves \eqref{eq:boot_zeta} intact by the argument of \eqref{eq:dgp_perturb} applied conditionally.

Third, $\delta$ shrinks the correction whenever the bootstrap DGP would otherwise be nonstationary.
Write $\rho(\cdot)$ for the modulus of the largest companion eigenvalue.
Proposition \ref{prop:an} gives $\widehat{\mathbf{C}} \xrightarrow{p} \mathbf{C}$, and $\widetilde{\mathbf{C}}(1) - \widehat{\mathbf{C}} = O_p(T^{-1})$ by \eqref{eq:dgp_perturb}, so $\widetilde{\mathbf{C}}(1) \xrightarrow{p} \mathbf{C}$.
Eigenvalues are continuous in the coefficients, so $\rho(\widetilde{\mathbf{C}}(1)) \xrightarrow{p} \rho(\mathbf{C})$, and $\rho(\mathbf{C}) < 1$ under Assumption \ref{ass:stationary}.
Since $\delta = 1$ lies on the grid, the correction is left intact exactly when $\rho(\widetilde{\mathbf{C}}(1)) < 1$, so
\begin{equation*}
    P(\delta = 1) = P\big(\rho(\widetilde{\mathbf{C}}(1)) < 1\big)
    \; \ge \; P\Big(\big|\rho(\widetilde{\mathbf{C}}(1)) - \rho(\mathbf{C})\big| < 1 - \rho(\mathbf{C})\Big)
    \rightarrow 1 .
\end{equation*}
The adjustment therefore binds with probability approaching zero and leaves the limiting distribution unaffected.

\subsubsection*{Validity for impulse responses}
Since $\theta_{ik,h}=g_{ik,h}(\boldsymbol{\zeta})$ is continuously differentiable in both blocks, asymptotic validity for the impulse response now follows from \eqref{eq:boot_zeta} by the delta method, applied conditionally on the sample in the bootstrap world and unconditionally in the sampling world.
Let $\widetilde{\boldsymbol{\zeta}}^{*}$ collect $\operatorname{vec}(\widetilde{\mathbf{C}}^{*})$ and $\widetilde{\boldsymbol{\sigma}}^{*}$, and write $\widetilde{\theta}_{ik,h}^{*} = g_{ik,h}(\widetilde{\boldsymbol{\zeta}}^{*})$ for the response actually computed in Stage 2.
By the argument of the previous subsection, $\widetilde{\boldsymbol{\zeta}}^{*} - \widehat{\boldsymbol{\zeta}}^{*} = O_p(T^{-1})$, so the bias-corrected and projected draws obey \eqref{eq:boot_zeta} in place of $\widehat{\boldsymbol{\zeta}}^{*}$.
\begin{corollary}[Asymptotic validity of ProBAB-MAR]\label{thm:bootstrap}
Let Assumptions \ref{ass:stationary}--\ref{ass:nondegen} hold and fix $h \ge 0$. Then
\begin{equation}\label{eq:thm_kolmogorov}
    \sup_{x \in \mathbb{R}} \Big|
    P^{*} \Big(\sqrt{T} \big(g_{ik,h}(\widetilde{\boldsymbol{\zeta}}^{*})-g_{ik,h}(\widehat{\boldsymbol{\zeta}})\big)\le x\Big)
    -P \Big(\sqrt{T} \big(g_{ik,h}(\widehat{\boldsymbol{\zeta}})-g_{ik,h}(\boldsymbol{\zeta})\big)\le x\Big)\Big|
    \rightarrow 0
\end{equation}
in probability, and the percentile interval satisfies
\begin{equation}\label{eq:coverage}
    \lim_{T\to\infty}P\Big(\theta_{ik,h}\in
    \big[\widetilde{\theta}_{ik,h}^{*(\alpha/2)},\widetilde{\theta}_{ik,h}^{*(1-\alpha/2)}\big]\Big)=1-\alpha.
\end{equation}
\end{corollary}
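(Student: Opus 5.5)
The plan has three steps: establish conditional asymptotic normality of the bootstrap response, compare it with the sampling-world limit, and translate the resulting Kolmogorov bound into coverage of the percentile interval.

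\textbf{Step 1: sampling and bootstrap limits.} In the sampling world, \eqref{eq:irf_an} gives $\sqrt{T}(g_{ik,h}(\widehat{\boldsymbol{\zeta}})-g_{ik,h}(\boldsymbol{\zeta}))\xrightarrow{d}N(0,\sigma_{ik,h}^2)$. Under Assumption \ref{ass:nondegen}, $\sigma_{ik,h}^2>0$, so this limit law is continuous.

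In the bootstrap world, I start from Proposition \ref{prop:boot} together with the preceding argument that $\widetilde{\boldsymbol{\zeta}}^{*}-\widehat{\boldsymbol{\zeta}}^{*}=O_{p^*}(T^{-1})$ in probability. These give $\sqrt{T}(\widetilde{\boldsymbol{\zeta}}^{*}-\widehat{\boldsymbol{\zeta}})\xrightarrow{d^*}N(\mathbf{0},\boldsymbol{\Omega})$ in probability, by a conditional Slutsky argument. To make this rigorous I will use the subsequence characterization of convergence in probability. For any subsequence there is a further subsequence along which the conditional weak convergence holds almost surely. Along it, $\widehat{\boldsymbol{\zeta}}\to\boldsymbol{\zeta}$ almost surely, which follows from Proposition \ref{prop:an} after passing to a further subsequence.

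Next I apply a mean-value expansion,
\[
\sqrt{T}\big(g(\widetilde{\boldsymbol{\zeta}}^{*})-g(\widehat{\boldsymbol{\zeta}})\big)=\nabla g(\bar{\boldsymbol{\zeta}})^\top\sqrt{T}(\widetilde{\boldsymbol{\zeta}}^{*}-\widehat{\boldsymbol{\zeta}}),
\]
where $\bar{\boldsymbol{\zeta}}$ lies between $\widetilde{\boldsymbol{\zeta}}^{*}$ and $\widehat{\boldsymbol{\zeta}}$. Both points converge to $\boldsymbol{\zeta}$, so continuity of $\nabla g$ gives $\nabla g(\bar{\boldsymbol{\zeta}})\to\nabla g(\boldsymbol{\zeta})$ in $P^*$-probability. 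Since $g$ is a polynomial in the entries of $\mathbf{C}$ and a smooth function of $\boldsymbol{\sigma}$ through $\mathbf{P}$, it is $C^1$ near $\boldsymbol{\zeta}$. It follows that the bootstrap statistic converges conditionally to $N(0,\sigma_{ik,h}^2)$ in probability.

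There is one subtlety here: $g$ is evaluated at $\widetilde{\boldsymbol{\zeta}}^{*}$, which is parametrized through $\operatorname{vec}(\widetilde{\mathbf{C}}^*)$ rather than through $\boldsymbol{\beta}$. I will handle this by writing $g=\bar g\circ(\operatorname{vec}\mathbf{C},\boldsymbol{\sigma})$ and using the $\boldsymbol{\Omega}_C$ form \eqref{eq:OmegaC}. The variance is then $\nabla\bar g^\top\operatorname{blkdiag}(\boldsymbol{\Omega}_C,\boldsymbol{\Omega}_\sigma)\nabla\bar g=\sigma_{ik,h}^2$, which is the same quantity in both worlds.

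\textbf{Step 2: the Kolmogorov bound \eqref{eq:thm_kolmogorov}.} Both distribution functions converge pointwise to the continuous $\Phi(x/\sigma_{ik,h})$. By P\'olya's theorem, pointwise convergence to a continuous limit is uniform in $x$. I apply this once to the deterministic sampling-world sequence, and once to the bootstrap sequence along the almost-sure subsequences of Step 1. The triangle inequality then gives \eqref{eq:thm_kolmogorov} in probability.

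\textbf{Step 3: coverage \eqref{eq:coverage}.} I will write $\widehat F^*$ for the conditional distribution function of the root $R^*=\sqrt{T}(\widetilde\theta^*-\widehat\theta)$. The percentile endpoints are $\widehat\theta+T^{-1/2}\widehat F^{*-1}(q)$ for $q=\alpha/2$ and $q=1-\alpha/2$. By Step 2 and continuity and strict monotonicity of the normal limit, $\widehat F^{*-1}(q)\xrightarrow{p}\sigma_{ik,h}z_q$.

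The coverage event is
\[
-\widehat F^{*-1}(1-\alpha/2)\le\sqrt{T}(\widehat\theta-\theta)\le-\widehat F^{*-1}(\alpha/2).
\]
Symmetry of the normal limit gives $-\sigma_{ik,h} z_{1-\alpha/2}=\sigma_{ik,h}z_{\alpha/2}$, so the bounds converge to $\sigma_{ik,h}z_{\alpha/2}$ and $\sigma_{ik,h}z_{1-\alpha/2}$. Slutsky's lemma then shows the probability of the event tends to $\Phi(z_{1-\alpha/2})-\Phi(z_{\alpha/2})=1-\alpha$.

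I treat $B_2\to\infty$ implicitly, identifying the empirical quantiles with those of $\widehat F^*$. If this needs justification, it follows from Glivenko--Cantelli applied conditionally on the sample.

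\textbf{Main obstacle.} The main obstacle is Step 1, specifically the negligibility claim $\widetilde{\boldsymbol{\zeta}}^{*}-\widehat{\boldsymbol{\zeta}}^{*}=o_{p^*}(T^{-1/2})$, which is what the argument really needs. It involves three ingredients. First, the nearest Kronecker projection must be locally Lipschitz near a rank-one rearrangement with a nonzero singular value. I will get this from a Wedin/Davis--Kahan perturbation bound on the leading singular pair. Second, $\widehat{\boldsymbol{\Psi}}=O_p(T^{-1})$ must hold, which I take as stated in the text. Third, $P^*(\delta^*=1)\to1$ in probability, which follows from the same spectral-radius argument as for $\delta$. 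For the covariance component, re-estimating $\widetilde{\boldsymbol{\sigma}}^*$ from residuals of $\widetilde{\mathbf{C}}^*$ instead of $\widehat{\mathbf{C}}^*$ changes the residuals by $O_{p^*}(T^{-1})\mathbf{y}^*_{t-j}$. This alters the sample covariance by $O_{p^*}(T^{-1})$, which I will show using bounded bootstrap second moments from the law of large numbers in Proposition \ref{prop:boot}.
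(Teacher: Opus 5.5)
Your architecture matches the paper's. You reduce \eqref{eq:thm_kolmogorov} to pointwise conditional weak convergence via the triangle inequality and P\'olya's theorem. You show that bias correction, projection and shrinkage move the draws by $o_{p^{*}}(T^{-1/2})$, so that $\widetilde{\boldsymbol{\zeta}}^{*}$ may be replaced by $\widehat{\boldsymbol{\zeta}}^{*}$. You apply the conditional delta method to Proposition~\ref{prop:boot}, and you invert uniform convergence to a continuous, strictly increasing normal limit. The paper cites Lemma 21.2 of van der Vaart at the point where you spell out the inversion, and your explicit use of symmetry is exactly what the percentile, rather than the basic, interval needs.

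The routes differ in the negligibility step, where the paper's ingredients are simpler than yours.
\begin{itemize}
\item You do not need Wedin or Davis--Kahan. Since $\widehat{\mathbf{C}}$ and $\widehat{\mathbf{C}}^{*}$ are themselves Kronecker products, they are feasible in \eqref{eq:proj}. Feasibility plus the triangle inequality then gives $\|\mathcal{P}(\mathbf{M})-\mathbf{M}_0\|\le 2\|\mathbf{M}-\mathbf{M}_0\|$ globally (Lemma~\ref{lem:proj}), with no spectral gap and no requirement that $\mathbf{C}_j\neq\mathbf{0}$.
\item You need no claim about $P^{*}(\delta^{*}=1)$, or about $\delta$. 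The same bound gives $\|\widetilde{\mathbf{C}}^{*}-\widehat{\mathbf{C}}^{*}\|\le 2\delta^{*}\|\widehat{\boldsymbol{\Psi}}\|\le 2\|\widehat{\boldsymbol{\Psi}}\|$ for every $\delta^{*}\in[0,1]$, including the fallback $\delta^{*}=0$.
\end{itemize}

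The weak point of your plan is taking $\widehat{\boldsymbol{\Psi}}=O_p(T^{-1})$ from the main text. That is a Pope-type second-order bias result for least squares in unrestricted VARs. Nothing in the paper establishes it for the MAR MLE or its bootstrap analogue. Lemma~\ref{lem:bias} instead proves only the rate you actually use, $\|\widehat{\boldsymbol{\Psi}}\|=o_p(T^{-1/2})$. Applying Proposition~\ref{prop:boot} to Stage~1 gives $\sqrt{T}\operatorname{vec}(\widehat{\mathbf{C}}^{*}-\widehat{\mathbf{C}})\xrightarrow{d^{*}}N(\mathbf{0},\boldsymbol{\Omega}_C)$, a centred limit. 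Hence $\sqrt{T}\,\widehat{\boldsymbol{\Psi}}=\mathbb{E}^{*}[\sqrt{T}(\widehat{\mathbf{C}}^{*}-\widehat{\mathbf{C}})]\xrightarrow{p}\mathbf{0}$, given uniform integrability of $\sqrt{T}(\widehat{\mathbf{C}}^{*}-\widehat{\mathbf{C}})$ under $P^{*}$. That condition is not among Assumptions~\ref{ass:stationary}--\ref{ass:nondegen} either. It is, however, aimed at exactly the first-order rate required rather than at an unproved expansion, and you should substitute this argument.

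There are three smaller points.
\begin{itemize}
\item Proposition~\ref{prop:boot} concerns samples generated from $\widehat{\mathbf{C}}$, whereas Stage~2 generates from $\widetilde{\mathbf{C}}$. You should note, as the paper does, that its proof uses the DGP only through consistency and the geometric decay of Lemma~\ref{lem:decay}. It therefore yields $\sqrt{T}(\widehat{\boldsymbol{\zeta}}^{*}-\widetilde{\boldsymbol{\zeta}})\xrightarrow{d^{*}}N(\mathbf{0},\boldsymbol{\Omega})$, which you then recentre at $\widehat{\boldsymbol{\zeta}}$ using $\|\widetilde{\mathbf{C}}-\widehat{\mathbf{C}}\|\le 2\|\widehat{\boldsymbol{\Psi}}\|$.
\item Like the paper, you implicitly need $B_1\to\infty$ as well as $B_2\to\infty$. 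The Monte Carlo error of $\bar{\mathbf{C}}^{*}$ is of order $(TB_1)^{-1/2}$, which is not $o_p(T^{-1/2})$ for fixed $B_1$.
\item Your treatment of $\widetilde{\boldsymbol{\sigma}}^{*}$ (re-estimated from the residuals of $\widetilde{\mathbf{C}}^{*}$) and of the coordinates $(\operatorname{vec}\mathbf{C},\boldsymbol{\sigma})$ with $\operatorname{blkdiag}(\boldsymbol{\Omega}_C,\boldsymbol{\Omega}_\sigma)$ is more careful than the paper's. Its Lemma~\ref{lem:bias} bounds only the coefficient component. With only $o_p(T^{-1/2})$ available, Cauchy--Schwarz and Lemma~\ref{lem:bootlln} make the change in the residual sample covariance $o_{p^{*}}(T^{-1/2})$ rather than $O_{p^{*}}(T^{-1})$. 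That still suffices once you note that the normalized separable estimate is a locally Lipschitz function of that sample covariance.
\end{itemize}
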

The statistic in \eqref{eq:thm_kolmogorov} is centered at $\theta_{ik,h}(\widehat{\boldsymbol{\zeta}})$ rather than at the bootstrap population value $\theta_{ik,h}(\widetilde{\boldsymbol{\zeta}})$.
The two differ by $O_p(T^{-1})$, so the choice is immaterial after scaling by $\sqrt{T}$, and centering at the estimate delivers \eqref{eq:coverage} directly.
Both distribution functions in \eqref{eq:thm_kolmogorov} converge to $N(0,\sigma^{2}_{ik,h})$, which is continuous and strictly increasing because Assumption \ref{ass:nondegen} holds, so the convergence is uniform and the empirical quantiles of the bootstrap draws converge to those of the sampling distribution \citep[Lemma 21.2]{vaart_asymptotic_1998}.

Whether the corollary has content at $h = 0$ depends on the identification scheme.
The reduced-form response $\boldsymbol{\Theta}_0 = \mathbf{I}_{N_1N_2}$ is a known constant, so $\sigma^2_{ik,0} = 0$ and Assumption~\ref{ass:nondegen} excludes it.
Under a scheme in which the impact matrix is a smooth, nonconstant function of $\boldsymbol{\sigma}$, such as a Cholesky factorization or the generalized impulse response of \citet{chen_autoregressive_2021}, the $h = 0$ response inherits uncertainty from the covariance estimate and the assumption is restored, with the exception of the elements that the scheme sets to zero by construction, which remain constants.

\section{Monte Carlo Simulations}
\label{sec:simulations}

We assess the finite-sample coverage of the proposed procedures across three data-generating processes (DGPs): one calibrated to the empirical application, and two based on randomly generated Kronecker coefficients that vary the sample size and the persistence of the process.

\subsection{Simulation Design}
\label{subsec:sim_design}

We compare seven methods for constructing confidence intervals for the impulse responses, organized into two groups according to whether they impose the Kronecker structure of the MAR or treat the system as an unrestricted VAR.
The MAR group contains four intervals: (i) the delta-method intervals of Section \ref{subsec:asymptotic}, centered at the raw estimate $\widehat{\mathbf{C}}$; (ii) the same delta-method intervals recentered at the bias-corrected estimate $\widetilde{\mathbf{C}}$ of Stage 1; (iii) the BAB-MAR percentile intervals of Appendix \ref{app:bab}; and (iv) the ProBAB-MAR percentile intervals of Section \ref{subsec:probab}.
The VAR group comprises the three analogous intervals built from the unrestricted least-squares estimator: the delta-method intervals, their bias-corrected counterpart, and the BAB intervals of Section \ref{subsec:bootstrap}.
Comparing across the two groups isolates the role of the Kronecker restriction, while comparing within each group separates the contribution of bias correction from that of the bootstrap.

For each DGP we draw $R=1000$ independent samples of length $T$.
On every sample we estimate the MAR by maximum likelihood (MLE) and the VAR by ordinary least squares, both at the true lag order $p$, and construct the seven pointwise intervals at each horizon $h = 1, \ldots, H=15$.
The bootstrap procedures use $B_1=1000$ first-stage replications to estimate the bias and $B_2=2000$ second-stage replications to approximate the sampling distribution.

Since bias correction and the Kronecker projection act on the coefficients alone, we evaluate reduced-form responses $\boldsymbol{\Theta}_h\mathbf{e}_k$, which isolate the propagation of a shock from the uncertainty in its impact.
In each design we track a single response, indexing entries by their position $(r,s)$ in $\mathbf{Y}_t$ as in Section \ref{subsec:irf}, and evaluate the intervals on three criteria at every horizon, omitting $h = 0$ since $\boldsymbol{\Theta}_0 = \mathbf{I}_{N_1N_2}$ is known.
Bias is the average over the $R$ replications of the point estimate $\widehat{\theta}_{ik,h}$ minus its true value $\theta_{ik,h}$, and records the downward distortion that shifts impulse responses toward zero and that the bias-corrected procedures are designed to remove.
Empirical coverage is the fraction of replications in which the $95\%$ interval for $\theta_{ik,h}$ contains its true value.
Interval width, averaged across replications, measures efficiency.

\subsection{Empirically Calibrated DGP}
\label{subsec:sim_empirical}

The first DGP is calibrated to the empirical application of Section \ref{sec:application}, so that coverage is assessed under dynamics and error correlations that a practitioner could encounter.
The true coefficients are set to the MLE estimates of a MAR($2$) fitted to year-on-year HICP inflation for five euro area countries (Austria, Belgium, Germany, France, and the Netherlands) in three categories (energy, food, and core), giving $N_1 = 5$, $N_2 = 3$, and $p = 2$.
The innovations are drawn i.i.d.\ from a matrix normal distribution with row covariance $\widehat{\boldsymbol{\Sigma}}_1$ and column covariance $\widehat{\boldsymbol{\Sigma}}_2$ taken from the same fit, so that $\operatorname{Cov}(\operatorname{vec}(\mathbf{U}_t)) = \widehat{\boldsymbol{\Sigma}}_2 \otimes \widehat{\boldsymbol{\Sigma}}_1$ inherits the separable structure maintained by the MLE.
The sample length is set to $T = 200$, close to the empirical sample of $T = 205$, and each series is initialized at zero and simulated with a burn-in of $500$ observations that is subsequently discarded.
The estimated process is persistent, with the largest companion eigenvalue equal to $0.95$, and the fitted row covariance $\widehat{\boldsymbol{\Sigma}}_1$ implies strong contemporaneous correlation across countries, with pairwise correlations between $0.33$ and $0.66$.
The column covariance $\widehat{\boldsymbol{\Sigma}}_2$, by contrast, is close to diagonal, so that innovations are nearly uncorrelated across inflation categories but differ markedly in scale.
Both the MAR and the VAR are estimated at the true lag order $p = 2$.
We report the response of food inflation in the Netherlands, entry $(5,2)$, to a common unit innovation in energy inflation in all five countries, entries $(1,1)$ through $(5,1)$.

Figure \ref{fig:empirical_simulation} reports the results, with the bias of the point estimates, the empirical coverage of the nominal $95\%$ intervals, and the average interval width in the left, middle, and right panels, each across horizons $h = 1, \ldots, 15$.
The bias panel reveals the sharpest contrast between the two models in any of our designs.
With $450$ coefficients estimated from a sample of length $T = 200$, the unrestricted least-squares estimates carry a bias of $-0.03$ at worst (in the case of the VAR), and the first-stage correction reduces this to $-0.005$.
The MAR estimates, drawing on the Kronecker restriction that is correct by construction under this DGP, require only $66$ coefficients and are far less distorted, with a bias of $-0.02$ at worst before correction and close to zero after.

\begin{figure}[!t]
  \centering
  \includegraphics[width=\linewidth]{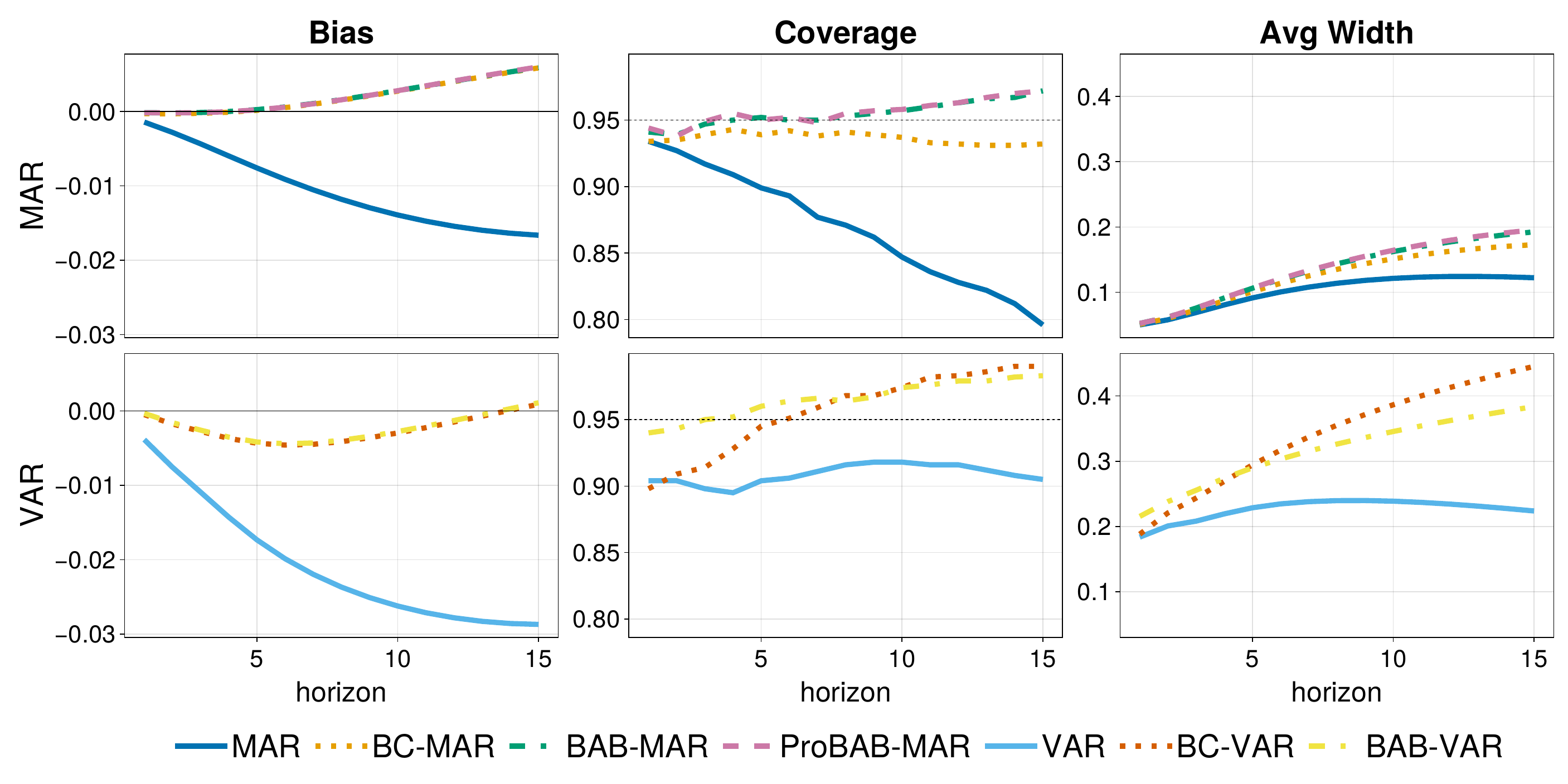}
  \caption{Bias, empirical coverage, and average width of nominal 95\% confidence intervals for the response of Dutch food inflation to a common energy shock under the empirically calibrated MAR(2) DGP ($N_1 = 5$, $N_2 = 3$, $T = 200$), with both models estimated at $p = 2$, across horizons $h = 1, \ldots, 15$.}
  \label{fig:empirical_simulation}
\end{figure}

The coverage panel shows the consequences.
BAB-MAR and ProBAB-MAR deliver near-nominal coverage at all horizons and are almost indistinguishable from one another.
The Kronecker projection that distinguishes them leaves the bootstrap distribution essentially unchanged even at this moderate sample size.
The delta-method intervals undercover, falling to $80\%$, and recentering them at the bias-corrected estimate $\widetilde{\mathbf{C}}$ recovers most of this gap.
The unrestricted BAB-VAR intervals also attain nominal coverage despite the residual bias left by the first-stage correction, and become conservative at longer horizons, where their coverage approaches $100\%$.
The difference between the two model classes therefore lies not in coverage but in the width required to attain it.

The width panel shows that the parsimony of the Kronecker restriction translates into efficiency.
The BAB-VAR intervals are, averaged across horizons, $2.57$ times as wide as their ProBAB-MAR counterparts, and their conservative coverage at longer horizons reflects this lack of precision rather than an advantage.
The MAR procedures therefore attain nominal coverage at a fraction of the width of the unrestricted benchmark.

\begin{figure}[!t]
  \centering
  \includegraphics[width=\linewidth]{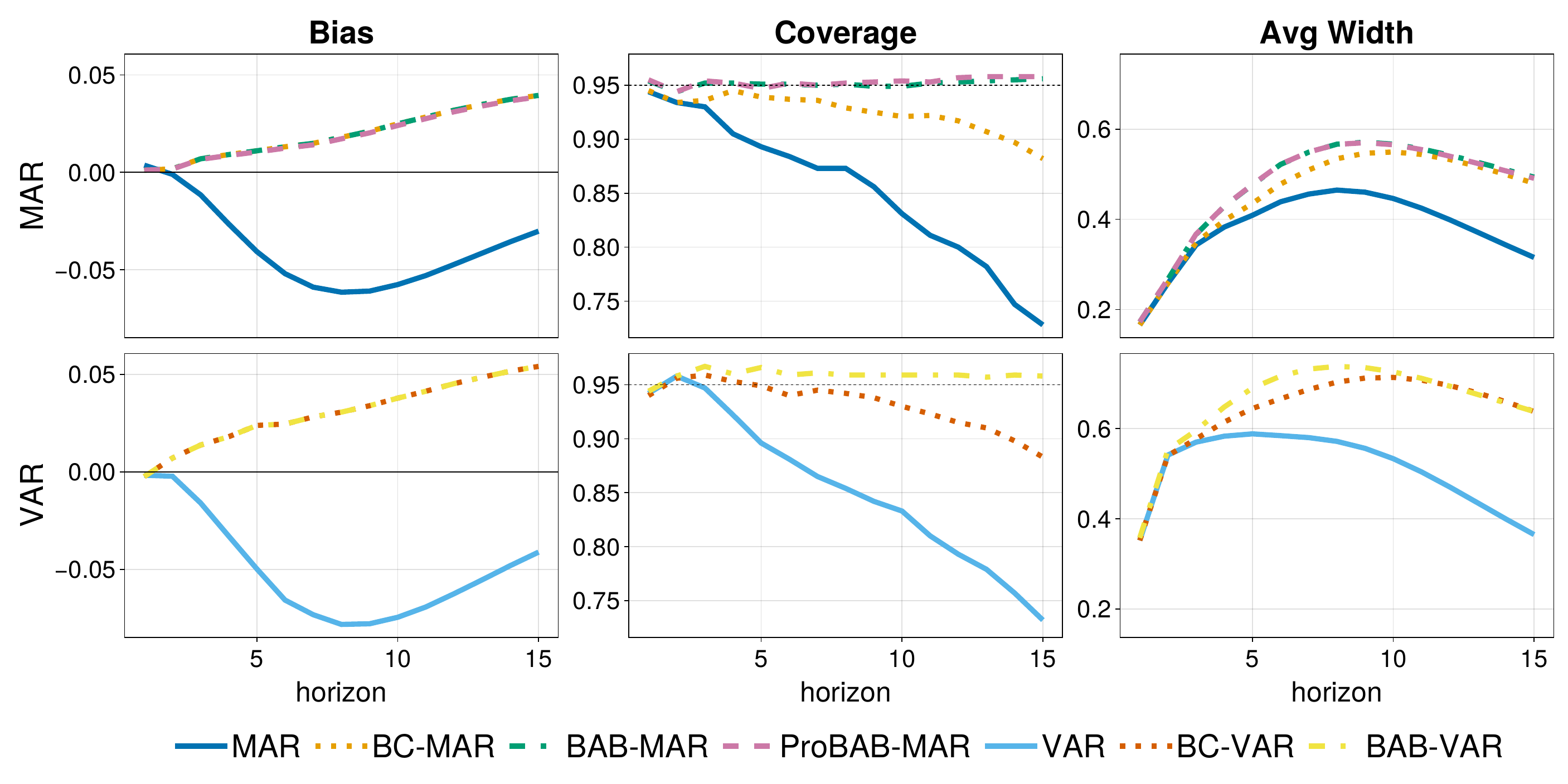}
  \caption{Bias, empirical coverage, and average width of nominal 95\% confidence intervals for impulse responses under the baseline MAR(1) DGP ($N_1 = 3$, $N_2 = 4$, $\rho_{\max} = 0.8$) with $T = 100$, across horizons $h = 1, \ldots, 15$.}
  \label{fig:baseline_obs100}
\end{figure}

\subsection{Coverage Across Sample Sizes}
\label{subsec:sim_samplesize}

The second design isolates the effect of the sample size.
The coefficients $\mathbf{C}_j = \mathbf{B}_j \otimes \mathbf{A}_j$ are randomly generated Kronecker products, rescaled so that the largest companion eigenvalue equals $\rho_{\max} = 0.8$, and the innovations are drawn i.i.d.\ from a matrix normal distribution with covariance $\operatorname{Cov}(\operatorname{vec}(\mathbf{U}_t)) = \sigma^2 \mathbf{I}_{N_1 N_2}$.
The covariance is thus trivially separable, $\boldsymbol{\Sigma} = \boldsymbol{\Sigma}_2 \otimes \boldsymbol{\Sigma}_1$, consistent with the structure maintained by the MLE.
We vary $T \in \{100, 200, 500\}$, holding the dimensions fixed at $N_1 = 3$, $N_2 = 4$, and $p = 1$.
We report the reduced-form response of entry $(2,4)$ to a unit shock in entry $(3,2)$.
Repeating the design for other shock and response entries leaves bias, coverage, and width essentially unchanged, so the results below are representative of the system as a whole.

Figure \ref{fig:baseline_obs100} reports the results for the smallest sample, $T = 100$. 
The bias panel shows the downward bias of Section \ref{subsec:bootstrap} at work in both models, with the raw estimates biased toward zero and the MAR estimates reaching $-0.06$ at the worst horizon.
The first-stage correction removes most of this distortion, leaving a small positive bias of $+0.01$ that rises to $+0.04$ at longer horizons, with the MAR procedures marginally more accurate than their VAR counterparts ($+0.03$ for BAB-MAR and ProBAB-MAR against $+0.05$ for BAB-VAR).

The coverage panel reveals that the delta-method intervals undercover severely, falling to $74\%$ at longer horizons for the MAR.
Recentering them at the bias-corrected point estimate recovers much of the gap but still leaves coverage at $88\%$.
Only the bootstrap-after-bootstrap procedures attain the nominal $95\%$ level at every horizon considered.
Accurate coverage, however, is only worth having if the intervals remain informative, and the width panel shows that the Kronecker restriction delivers exactly this efficiency.
BAB-MAR and ProBAB-MAR deliver intervals roughly two thirds as wide as BAB-VAR at every horizon, with average widths of $0.2$--$0.5$ against $0.3$--$0.7$ over the first five horizons, and the gap persists as the horizon grows.

The larger samples, reported in Figures \ref{fig:baseline_obs200} and \ref{fig:baseline_obs500} in Appendix \ref{app:additional_sims}, display similar results.
The bootstrap procedures remain at the nominal level throughout, while the coverage of the delta-method intervals rises from $74\%$ to $83\%$ at $T = 200$ and to $92\%$ at $T = 500$, with the bias of the point estimates and the interval widths shrinking accordingly.
Even at $T = 500$, a sample length rarely available in quarterly macroeconomic data, the delta-method intervals do not reach the nominal level.

\begin{figure}[!t]
  \centering
  \includegraphics[width=\linewidth]{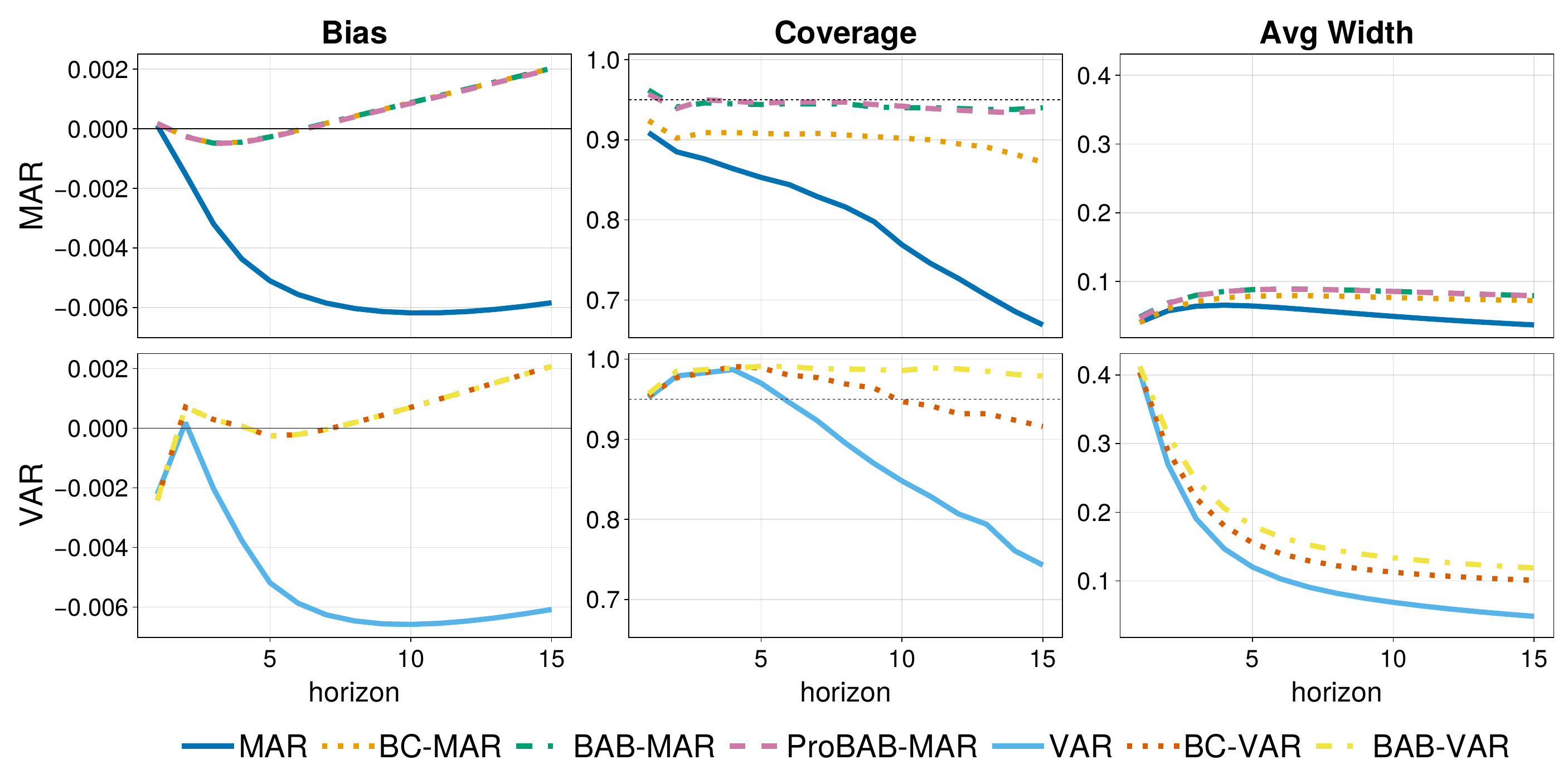}
  \caption{Bias, empirical coverage, and average width of nominal 95\%
  confidence intervals for impulse responses under the MAR(1) DGP
  ($N_1 = 3$, $N_2 = 4$, $T = 100$) with largest companion eigenvalue
  $\rho_{\max} = 0.95$, across horizons $h = 1, \ldots, 15$.}
  \label{fig:persistent_rho95}
\end{figure}
 
\subsection{Near-Unit-Root Behavior}
\label{subsec:sim_persistence}
 
The final design investigates sensitivity to the stability requirement of Assumption \ref{ass:stationary}.
The coefficients are randomly generated Kronecker products as in Section \ref{subsec:sim_samplesize}, rescaled so that the largest companion eigenvalue equals $\rho_{\max} \in \{0.95, 0.98, 0.99\}$ rather than the baseline value of $0.8$, holding $T = 100$, $N_1 = 3$, $N_2 = 4$, and $p = 1$ fixed.
As in Section \ref{subsec:sim_samplesize}, we track the response of entry $(2,4)$ of $\mathbf{Y}_t$ to a shock to entry $(3,2)$.
Every series is therefore stationary and the theory of Section \ref{subsec:bootstrap} formally applies.
Since Corollary \ref{thm:bootstrap} holds at each fixed $\rho_{\max} < 1$ under asymptotics that treat the parameter as fixed as $T$ grows, the guarantee is pointwise rather than uniform, and the quality of the approximation at a given $T$ deteriorates as the boundary is approached \citep{inoue_uniform_2020}.
Three forces compound near the boundary.
The $O(T^{-1})$ bias of the autoregressive estimates grows with persistence, the impulse responses themselves decay only at rate $\rho_{\max}^{h}$ so that estimation error accumulates over the horizons of interest, and the stationarity adjustment of the first bootstrap stage binds increasingly often, attenuating the bias correction precisely when the bias is largest.
The adjustment shrinks the correction for the MAR in 12.4\% of replications at $\rho_{\max} = 0.95$ and in 35.8\% at $\rho_{\max} = 0.99$, compared to 0.01\% under the baseline design, and in 2.7\% of replications at $\rho_{\max} = 0.99$ no correction can be applied at all, leaving the bootstrap to resample from the raw and occasionally explosive estimate.

Figure \ref{fig:persistent_rho95} reports results for $\rho_{\max} = 0.95$, the case closest to our empirical application.
The bias panel shows the anticipated deterioration.
The raw estimates of both models now reach a bias of about $-0.006$ at $h = 10$, which amounts to $35\%$ of the true response for the MAR and $38\%$ for the VAR.
The first stage correction removes nearly all of it over the first ten horizons but overshoots at longer horizons, leaving a positive bias of $0.002$ at $h = 15$, or $15\%$ of the true response there.

The coverage panel shows that persistence mainly affects the delta method.
Its coverage declines steadily with the horizon, to $67\%$ for the MAR and $74\%$ for the VAR at $h = 15$, and recentering at the bias-corrected estimate recovers only part of this gap, leaving $87\%$ and $91\%$.
BAB-MAR and ProBAB-MAR, by contrast, remain close to the nominal level at every horizon, between $93\%$ and $96\%$, while BAB-VAR is conservative, with coverage of $98$ to $99\%$ from $h = 2$ onward.
The width panel shows that the two model groups accumulate uncertainty differently.
The BAB-VAR intervals are widest at the first horizon, where the response is a single coefficient of an unrestricted system with 144 parameters, and narrow over the horizon, in part because the stationarity adjustment truncates the explosive draws that would otherwise dominate the longer horizons.
The MAR intervals start much narrower, reflecting the 24-parameter Kronecker structure, widen over the first five horizons as estimation error compounds through the matrix powers, and then level off.
They remain narrower than their VAR counterparts at every horizon, with a width ratio that rises from 0.11 at $h=1$ and levels off near 0.68 by $h=15$, so that the conservative coverage of BAB-VAR again comes at the cost of wider intervals.

Closer to the boundary, even the bootstrap procedures fall short.
At $\rho_{\max} = 0.99$, reported in Figure \ref{fig:persistent_rho99} in Appendix \ref{app:additional_sims}, BAB-MAR and ProBAB-MAR reach only $89\%$ at the worst horizons, against $94\%$ for BAB-VAR.
This mirrors the behavior of bootstrap impulse response inference in highly persistent unrestricted VARs, where the most accurate intervals, even at long horizons, come from bootstrapping a lag-augmented autoregression with a bias adjustment \citep{inoue_uniform_2020}, estimating the model with a redundant additional lag as \citet{toda_integrated_1995} proposed for Wald tests in possibly integrated systems.
Adapting lag augmentation to the Kronecker structure of the MAR is left for future work.
 
\section{Empirical Application: Euro Area Inflation}\label{sec:application}
 
We consider an application to euro area inflation, in which one dimension of the matrix indexes countries and the other inflation categories.
Section \ref{subsec:data} describes the data and the specification, Section \ref{subsec:ident} sets out the identification scheme, and Section \ref{subsec:results} reports the results.
 
\subsection{Data and Specification}\label{subsec:data}

\begin{table}[!t]
  \centering
  \begin{tabular}{lccccc@{\hspace{2.5em}}lccc}
    \toprule
    \multicolumn{6}{c}{Countries, $\widehat{\mathbf{S}}_1$} & \multicolumn{4}{c}{Categories, $\widehat{\mathbf{S}}_2$} \\
    \cmidrule(r){1-6} \cmidrule(l){7-10}
       & AT & BE & DE & FR & NL & & Energy & Food & Core \\
    \midrule
    AT & 1.00 &      &      &      &      & Energy & 1.00    &         &      \\
    BE & 0.56 & 1.00 &      &      &      & Food   & 0.11    & 1.00    &      \\
    DE & 0.55 & 0.57 & 1.00 &      &      & Core   & $-0.03$ & $-0.02$ & 1.00 \\
    FR & 0.58 & 0.66 & 0.59 & 1.00 &      &        &         &         &      \\
    NL & 0.33 & 0.43 & 0.47 & 0.40 & 1.00 &        &         &         &      \\
    \bottomrule
  \end{tabular}
  \caption{Contemporaneous structure of the MAR(2) innovations. The left panel reports the correlation matrix $\widehat{\mathbf{S}}_1$ implied by $\widehat{\boldsymbol{\Sigma}}_1$ and the right panel the correlation matrix $\widehat{\mathbf{S}}_2$ implied by $\widehat{\boldsymbol{\Sigma}}_2$.}
  \label{tab:cov}
\end{table}

We use monthly year-on-year HICP inflation rates obtained from Eurostat for $N_1 = 5$ euro area countries (Austria, Belgium, Germany, France, and the Netherlands) and $N_2 = 3$ categories (energy, food, and core), from December 2001 to December 2019.
Each month therefore delivers an $N_1 \times N_2$ matrix $\mathbf{Y}_t$, giving $217$ observations in total.
We end the sample in 2019 so that the estimates are not driven by the pandemic period.
In line with the model of Section \ref{subsec:MAR}, which contains no intercept, each series is demeaned prior to estimation.

We select the lag order over $p = 0, \ldots, 12$ by the Schwarz, Hannan--Quinn and Akaike criteria.
The former two agree within each model, selecting $p = 2$ for the MAR and $p = 1$ for the VAR, whereas the Akaike criterion selects $p = 12$ for both.
Since all candidate lags are evaluated on the same common sample of $205$ observations, the criteria are comparable across models, and both the Schwarz and Hannan--Quinn criteria prefer the MAR to the VAR.
We proceed with $p = 2$ lags for the MAR and $p = 1$ lag for the VAR for a fair comparison of the confidence intervals.

The MAR is left estimating $66$ coefficients against $225$ for the VAR.
The selected MAR(2) is persistent, with a largest companion eigenvalue of $0.95$.
Table \ref{tab:cov} reports the correlation matrices implied by the fitted covariances. 
This is the configuration under which the empirically calibrated DGP of Section \ref{subsec:sim_empirical} was simulated.

\subsection{Identification}\label{subsec:ident}

We identify the shocks recursively through the Cholesky factors of $\boldsymbol{\Sigma}_1$ and $\boldsymbol{\Sigma}_2$, with energy ordered first among the three categories, and, as in the block Cholesky impulse response of \citet{billio_bayesian_2023}, shock energy inflation in all five countries jointly.
We normalize the shock to raise energy inflation by one percentage point on impact in each country, so that under the separable covariance its impact vector is
\begin{equation*}
    \mathbf{v} = \frac{\boldsymbol{\Sigma}_2\mathbf{e}_1}{\sigma_{2,11}} \otimes \boldsymbol{\iota}_{N_1},
\end{equation*}
with $\mathbf{e}_1 \in \mathbb{R}^{N_2}$ selecting energy, $\sigma_{2,11}$ the leading diagonal entry of $\boldsymbol{\Sigma}_2$, and $\boldsymbol{\iota}_{N_1}$ the $N_1$-vector of ones, and the horizon-$h$ response is $\boldsymbol{\Theta}_h\mathbf{v}$.
Food and core thus respond on impact by the same amount in every country.
The impact vector depends on neither $\boldsymbol{\Sigma}_1$ nor the ordering of the countries, and on $\boldsymbol{\Sigma}_2$ only through its first column, so that the only restriction is that energy inflation does not respond within the month to food or core inflation.
This is the assumption that energy prices are predetermined at monthly frequency, which \citet{kilian_energy_2008} argues is a reasonable approximation for monthly, though not for annual, data, and which is standard in the literature on energy pass-through that orders energy first in a recursive scheme \citep{conflitti_oil_2019, borrallo_transmission_2026}.
Because energy is ordered first, the shock is the innovation to energy inflation itself, and it therefore combines the supply and demand shocks in energy markets that can affect the price level quite differently \citep{kilian_oil_2009}.
When energy prices rise because demand is strong, food and core prices may rise for the same reason, so the responses below average over these sources, and part of what we call pass-through may reflect common demand conditions rather than the transmission of energy costs.

\subsection{Results}\label{subsec:results}

\begin{figure}[htbp]
  \centering
  \includegraphics[width=\linewidth]{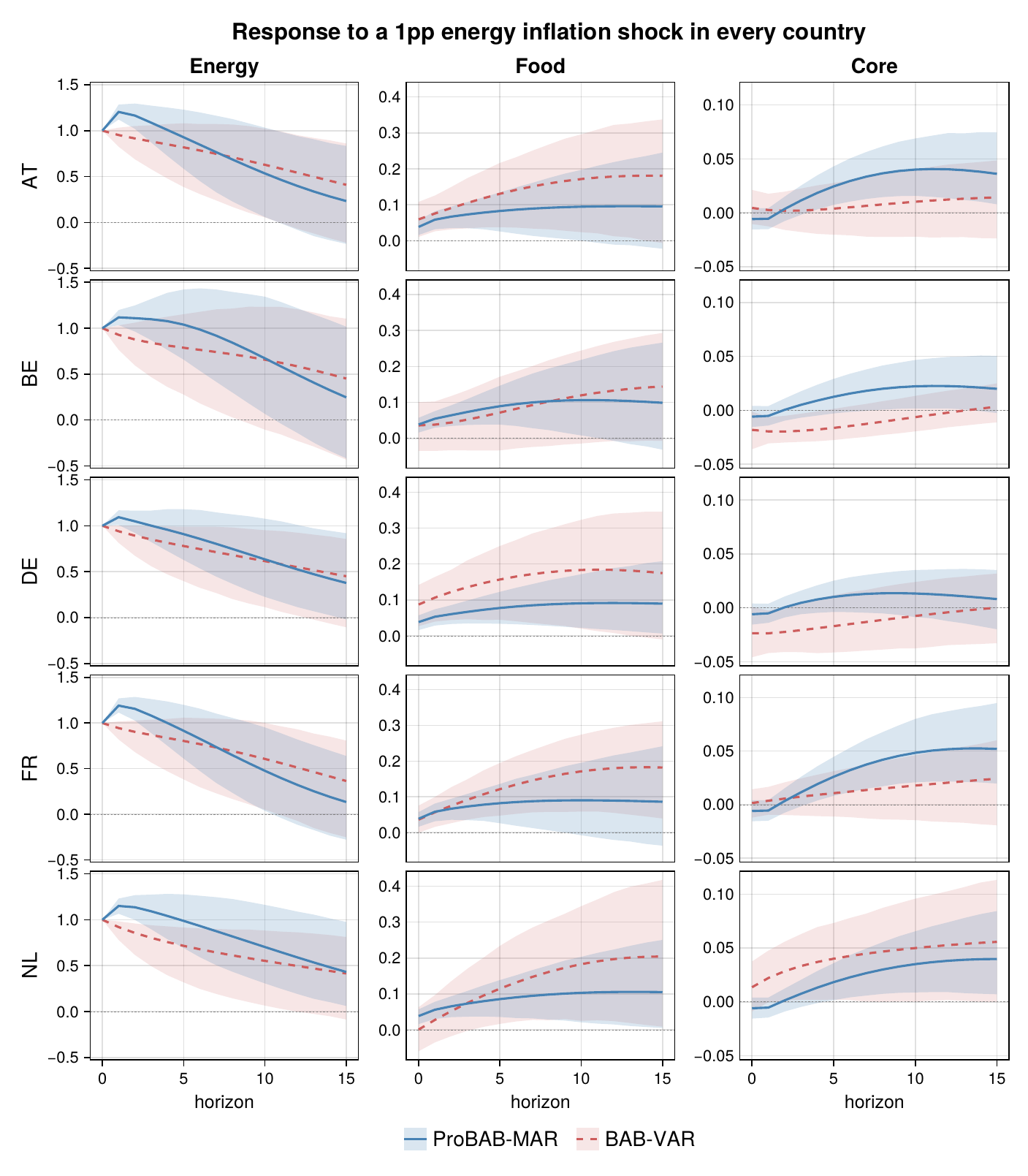}
  \caption{Responses to a shock raising energy inflation by one percentage point on impact in each of the five countries, with $95\%$ bootstrap confidence intervals. Rows index the responding country and columns the responding category, and panels in the same column share a scale. Solid lines with shaded bands are ProBAB-MAR; dashed lines with bands are BAB-VAR on the vectorized system.}
  \label{fig:application}
\end{figure}

Figure \ref{fig:application} reports the responses to a shock that raises energy inflation by one percentage point on impact in each of the five countries, for the three categories in each country, over horizons $h = 0, \ldots, 15$.
The solid blue line and band give the ProBAB-MAR point estimates and $95\%$ percentile intervals of Section \ref{subsec:probab}, computed with $B_1 = 1000$ first-stage and $B_2 = 2000$ second-stage replications, and the dashed red line and band give the BAB-VAR counterparts applied to the vectorized system.
Both are computed under the identification of Section \ref{subsec:ident}, with the MAR estimated at lag order $p = 2$ and the VAR at $p = 1$.

The point estimates from the two models track each other closely, while the MAR intervals are narrower, by a factor of roughly $1.4$ in mean and $1.3$ in median across responses and horizons $h=1$ to $h=15$. 
The gain is largest on impact and fades with the horizon, as in the persistent designs of Section \ref{subsec:sim_persistence}.
Estimating the VAR at the same lag order $p = 2$ widens this gap further, to $1.6$ in mean and $1.4$ in median, so the narrower MAR intervals do not stem from the difference in lag order (Appendix \ref{app:additional_application}).
The stationarity adjustment of Section \ref{subsec:probab} also binds far more often for the VAR.
It shrinks the bias correction in $99\%$ of the VAR bootstrap replications, against $23\%$ of the MAR replications, and shrinks it heavily, so that on average only $18\%$ of the correction is applied to the VAR, against $92\%$ to the MAR.

The Kronecker structure also shapes how the MAR responses can be read.
Since the impact on food and core is small, as shown in Table \ref{tab:cov}, the transmission of the shock into the other categories builds up only through the dynamics.
At $h = 1$ the matricized response is rank one by Proposition \ref{prop:rank}, so that every country responds with the same pattern across categories, scaled by a country-specific factor, and differences in the composition of the responses across countries emerge only from $h = 2$ onward.
Energy inflation itself is persistent, halving between $h = 10$ and $h = 14$ in every country, a persistence that the year-on-year measurement makes partly mechanical over the first twelve months, and remaining significant through $h = 9$ in Austria and Belgium, $h = 10$ in France, $h = 13$ in Germany, and to the end of the horizon in the Netherlands.
Pass-through into food, the extent to which the energy shock transmits into food inflation, is gradual but sizable, peaking at $0.09$ to $0.11$ percentage points after ten to thirteen months, whereas pass-through into core is smaller and slower, at most $0.05$ percentage points, significant from $h = 4$ or $h = 5$ onward in Austria, France and the Netherlands, only at $h = 10$ and $h = 11$ in Belgium, and at no horizon in Germany.
The composition of the responses thus differs across countries, with Germany showing no significant pass-through into core despite a persistent energy response, while in France and the Netherlands the core response peaks only at the end of the horizon.

These patterns are in line with existing evidence for the euro area.
\citet{borrallo_transmission_2026}, who also measure food inflation year on year, find that its response to fuel price shocks peaks twelve to fifteen months after the shock, consistent with a price-level adjustment spread over about a year, while \citet{conflitti_oil_2019}, working with month-on-month rates, find that the pass-through of oil prices into core inflation is zero within the month but small and persistent thereafter.
Since year-on-year rates return to zero once the price level stops rising, the core responses still rising at $h = 15$ in France and the Netherlands point to pass-through that continues beyond the first year rather than to a base effect.
This is the heterogeneity that the relaxation of the rank bound from $h = 2$ onward allows, and that a MAR(1) would rule out.

\section{Conclusion}
\label{sec:conclusion}

This paper develops impulse response inference for stable matrix autoregressive models.
We derive the joint asymptotic distribution of the coefficient and covariance estimators of the MAR($p$), which yields delta-method standard errors for identified impulse responses. In addition, we propose ProBAB-MAR, a bias-corrected bootstrap that projects the bias-corrected coefficients back onto the space of Kronecker products and attains asymptotically correct coverage.
We also show that the rank-one structure of MAR($1$) impulse responses relaxes gradually with the horizon for $p \geq 2$, so that the MAR($p$) admits increasingly heterogeneous responses across the two dimensions.

In our simulations, the delta-method intervals undercover in small samples, whereas ProBAB-MAR attains close to nominal coverage in all but the most persistent designs, with intervals substantially narrower than those of the unrestricted VAR bootstrap.
In the application to euro area inflation, a common energy shock is persistent and passes through gradually into food and, to a lesser extent and with marked differences across countries, into core inflation. 
Replication material for all analyses is available at \url{https://github.com/ivanuricardo/matrix_irf_reproduction}.

Several limitations point to directions for future work.
The efficiency gains rest on the Kronecker structure of the coefficients, and imposing it when it fails can bias impulse responses and distort coverage \citep{olea_primer_2025}.
A test of this restriction would therefore be a natural complement to our procedures.
The bootstrap after bootstrap is costly, since each of its $B_1 + B_2$ replications re-estimates the MAR by an iterative algorithm, and an analytical bias correction in the spirit of \citet{pope_biases_1990} would remove the first stage.
We have focused on stable processes, and our simulations show that the accuracy of all procedures deteriorates near the unit root.
A possible unit root would instead call for a lag-augmented autoregression, for which \citet{inoue_uniform_2020} establish the uniform asymptotic validity of bootstrap impulse-response inference, and adapting this to the MAR, as well as extending the framework to cointegrated matrix-valued systems, is left for future work.
Finally, the identification schemes we consider are recursive, and structural identification that exploits the matrix structure (see e.g., \citealp{bucci_structural_2026, lara_structural_2026}), as well as the extension of our results to tensor autoregressions \citep{li_multilinear_2021}, are natural next steps.

\section*{Acknowledgments}

The third author is financially supported by a grant from the Dutch Research Council (NWO), research programme Vidi (VI.Vidi.211.032).
Previous versions of this paper were presented at NESG 2026 and we gratefully acknowledge comments by the participants. 

\bibliographystyle{asa}
\bibliography{references}

\newpage

\appendix
\counterwithin{equation}{section}
\renewcommand{\theequation}{\thesection.\arabic{equation}}

\section{Proofs}\label{app:proofs}
For a generic matrix $\mathbf{X}$, we write $\mathbf{X}^\top$ for its transpose, $\|\mathbf{X}\|$ for its Frobenius norm, and $\operatorname{vec}(\mathbf{X})$ for its column-wise vectorization, with $\operatorname{unvec}_{N_1,N_2}(\cdot)$ denoting the inverse operation that reshapes an $N_1 N_2$-vector back into an $N_1 \times N_2$ matrix.
We use $\otimes$ for the Kronecker product, $\mathbf{I}_N$ for the $N \times N$ identity matrix, and $\mathbb{S}^N$ for the set of symmetric $N \times N$ matrices.
Furthermore, we use the following identities throughout.

\begin{lemma}[Kronecker identities]\label{lem:kron}
For $\mathbf{A}, \mathbf{B} \in \mathbb{R}^{N_1 \times N_2}$,
$\mathbf{C}, \boldsymbol{\Sigma}_1 \in \mathbb{R}^{N_1 \times N_1}$, and
$\mathbf{D}, \boldsymbol{\Sigma}_2 \in \mathbb{R}^{N_2 \times N_2}$,
\begin{enumerate}
    \item $(\mathbf{D} \otimes \mathbf{C})\operatorname{vec}(\mathbf{A}) = \operatorname{vec}(\mathbf{C}\mathbf{A}\mathbf{D}^\top)$;
    \item $\operatorname{tr}\big[(\mathbf{D} \otimes \mathbf{C})\operatorname{vec}(\mathbf{A})\operatorname{vec}(\mathbf{B})^\top\big] = \operatorname{tr}\big(\mathbf{B}^\top\mathbf{C}\mathbf{A}\mathbf{D}^\top\big)$;
    \item $\operatorname{tr}\big[(\mathbf{D} \otimes \mathbf{C})(\boldsymbol{\Sigma}_2 \otimes \boldsymbol{\Sigma}_1)\big] = \operatorname{tr}(\mathbf{D}\boldsymbol{\Sigma}_2)\operatorname{tr}(\mathbf{C}\boldsymbol{\Sigma}_1)$.
\end{enumerate}
\end{lemma}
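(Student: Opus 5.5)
All three identities are direct entrywise or vec/trace manipulations, and I would prove them in order, each building on the previous one.

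For part 1, I would verify the standard identity $\operatorname{vec}(\mathbf{C}\mathbf{A}\mathbf{D}^\top) = (\mathbf{D}\otimes\mathbf{C})\operatorname{vec}(\mathbf{A})$ column by column. The $n$-th column of $\mathbf{C}\mathbf{A}\mathbf{D}^\top$ is
$$\sum_{m} d_{nm}\,\mathbf{C}\mathbf{a}_m,$$
where $\mathbf{a}_m$ is the $m$-th column of $\mathbf{A}$. This is exactly the $n$-th block row of $(\mathbf{D}\otimes\mathbf{C})$, whose $(n,m)$ block is $d_{nm}\mathbf{C}$, applied to the stacked columns $\operatorname{vec}(\mathbf{A}) = (\mathbf{a}_1^\top,\ldots,\mathbf{a}_{N_2}^\top)^\top$. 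Both sides are $N_1N_2$-vectors, so the dimensions match.

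For part 2, I would use cyclicity of the trace to write
$$\operatorname{tr}\big[(\mathbf{D}\otimes\mathbf{C})\operatorname{vec}(\mathbf{A})\operatorname{vec}(\mathbf{B})^\top\big] = \operatorname{vec}(\mathbf{B})^\top(\mathbf{D}\otimes\mathbf{C})\operatorname{vec}(\mathbf{A}).$$
Part 1 turns this into $\operatorname{vec}(\mathbf{B})^\top\operatorname{vec}(\mathbf{C}\mathbf{A}\mathbf{D}^\top)$. Finally, the identity $\operatorname{vec}(\mathbf{X})^\top\operatorname{vec}(\mathbf{Y}) = \operatorname{tr}(\mathbf{X}^\top\mathbf{Y})$ for same-sized matrices (both sides equal $\sum_{ij}x_{ij}y_{ij}$) gives $\operatorname{tr}(\mathbf{B}^\top\mathbf{C}\mathbf{A}\mathbf{D}^\top)$.

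For part 3, I would apply the mixed-product property, $(\mathbf{D}\otimes\mathbf{C})(\boldsymbol{\Sigma}_2\otimes\boldsymbol{\Sigma}_1) = (\mathbf{D}\boldsymbol{\Sigma}_2)\otimes(\mathbf{C}\boldsymbol{\Sigma}_1)$. Then I would use $\operatorname{tr}(\mathbf{X}\otimes\mathbf{Y}) = \operatorname{tr}(\mathbf{X})\operatorname{tr}(\mathbf{Y})$, which holds because the diagonal blocks of $\mathbf{X}\otimes\mathbf{Y}$ are $x_{nn}\mathbf{Y}$, so the total trace is $\sum_n x_{nn}\operatorname{tr}(\mathbf{Y})$.

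None of these steps is a real obstacle. The only point needing care is the index bookkeeping in part 1: one must confirm that the block ordering of $\mathbf{D}\otimes\mathbf{C}$ matches column-wise vectorization. That ordering is exactly what places $\mathbf{D}$, not $\mathbf{C}$, as the left factor.
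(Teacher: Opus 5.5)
Your proposal is correct and follows the same route as the paper: part 1 is the vectorization rule, part 2 combines part 1 with $\operatorname{vec}(\mathbf{B})^\top\operatorname{vec}(\mathbf{X}) = \operatorname{tr}(\mathbf{B}^\top\mathbf{X})$, and part 3 combines the mixed-product rule with $\operatorname{tr}(\mathbf{X}\otimes\mathbf{Y}) = \operatorname{tr}(\mathbf{X})\operatorname{tr}(\mathbf{Y})$. The only difference is that you verify these standard facts directly, column by column for part 1 and via the diagonal blocks for the trace of a Kronecker product, whereas the paper cites them from L\"utkepohl's appendix.
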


\begin{proof}
Part 1 is the vectorization rule \citep[App.~A.12]{lutkepohl_new_2005}. Part 2 follows
from part 1 and $\operatorname{vec}(\mathbf{B})^\top\operatorname{vec}(\mathbf{X})
= \operatorname{tr}(\mathbf{B}^\top\mathbf{X})$, and part 3 from the mixed-product rule
together with $\operatorname{tr}(\mathbf{X}\otimes\mathbf{Y})
= \operatorname{tr}(\mathbf{X})\operatorname{tr}(\mathbf{Y})$
\citep[App.~A.11--A.12]{lutkepohl_new_2005}.
\end{proof}

\subsection{Rank of the  Matricized Impulse Responses}\label{app:rank}

\begin{proof}[Proof of Proposition \ref{prop:rank}]
Recall that $\boldsymbol{\Theta}_h$ are the MMA coefficients at horizon $h$ and $\mathbf{v} = \mathbf{v}_2 \otimes \mathbf{v}_1$ with $\mathbf{v}_1 \in \mathbb{R}^{N_1}$ and $\mathbf{v}_2 \in \mathbb{R}^{N_2}$.
We establish the expansion
    \begin{equation} \label{app:eq:kron_expansion}
        \boldsymbol{\Theta}_h \mathbf{v}
        = \sum_{\ell=1}^{c_h} \mathbf{b}^{(\ell)} \otimes \mathbf{a}^{(\ell)},
        \qquad\text{equivalently}\qquad
        \operatorname{unvec}_{N_1,N_2}(\boldsymbol{\Theta}_h \mathbf{v})
        = \sum_{\ell=1}^{c_h} \mathbf{a}^{(\ell)} \big(\mathbf{b}^{(\ell)}\big)^{\top},
    \end{equation}
by induction on $h$.
For $h = 0$ it holds with $c_0 = 1$, $\mathbf{a}^{(1)} = \mathbf{v}_1$ and $\mathbf{b}^{(1)} = \mathbf{v}_2$, since $\boldsymbol{\Theta}_0 \mathbf{v} = \mathbf{v} = \mathbf{v}_2 \otimes \mathbf{v}_1$.
Suppose \eqref{app:eq:kron_expansion} holds at horizons $0, \ldots, h-1$ for some $h \geq 1$, and write $\mathbf{a}_{h-j}^{(\ell)}$ and $\mathbf{b}_{h-j}^{(\ell)}$, $\ell = 1, \ldots, c_{h-j}$, for the factor vectors at horizon $h - j$.
Post-multiplying the recursion 
\begin{equation}
    \boldsymbol{\Theta}_0 = \mathbf{I}_{N_1 N_2}, \qquad
    \boldsymbol{\Theta}_h = \sum_{j=1}^{\min(p,h)} \mathbf{C}_j \boldsymbol{\Theta}_{h-j}, \qquad h \geq 1 \nonumber
\end{equation} 
by $\mathbf{v}$ and applying the mixed-product rule $(\mathbf{B} \otimes \mathbf{A})(\mathbf{b} \otimes \mathbf{a}) = (\mathbf{B}\mathbf{b}) \otimes (\mathbf{A}\mathbf{a})$ termwise, we have
\begin{equation*} 
    \boldsymbol{\Theta}_h \mathbf{v}
    = \sum_{j=1}^{\min(p,h)} (\mathbf{B}_j \otimes \mathbf{A}_j) \boldsymbol{\Theta}_{h-j} \mathbf{v}
    = \sum_{j=1}^{\min(p,h)} \sum_{\ell=1}^{c_{h-j}} \big(\mathbf{B}_j \mathbf{b}_{h-j}^{(\ell)}\big) \otimes \big(\mathbf{A}_j \mathbf{a}_{h-j}^{(\ell)}\big),
\end{equation*}
a sum of $\sum_{j=1}^{\min(p,h)} c_{h-j} = c_h$ Kronecker products, which is \eqref{app:eq:kron_expansion} at horizon $h$.
Since the  $\operatorname{unvec}_{N_1,N_2}(\mathbf{b} \otimes \mathbf{a}) = \mathbf{a}\mathbf{b}^\top$ operator is linear, applying it to \eqref{app:eq:kron_expansion} gives $\operatorname{unvec}_{N_1,N_2}(\boldsymbol{\Theta}_h \mathbf{v}) = \sum_{\ell=1}^{c_h} \mathbf{a}^{(\ell)} (\mathbf{b}^{(\ell)})^\top$, a sum of $c_h$ matrices of rank at most one.
Subadditivity of the rank, $\operatorname{rank}(\mathbf{X} + \mathbf{Y}) \leq \operatorname{rank}(\mathbf{X}) + \operatorname{rank}(\mathbf{Y})$, bounds the rank of the sum by $c_h$, and any $N_1 \times N_2$ matrix has rank at most $\min(N_1, N_2)$, so  the rank bound follows.
\end{proof}

\begin{proof}[Proof of Corollary \ref{cor:rank}]
Collect the factor vectors of \eqref{app:eq:kron_expansion} in $\boldsymbol{\mathcal{A}}_h = [\mathbf{a}^{(1)} \cdots \mathbf{a}^{(c_h)}] \in \mathbb{R}^{N_1 \times c_h}$ and $\boldsymbol{\mathcal{B}}_h = [\mathbf{b}^{(1)} \cdots \mathbf{b}^{(c_h)}] \in \mathbb{R}^{N_2 \times c_h}$, so that $\operatorname{unvec}_{N_1,N_2}(\boldsymbol{\Theta}_h \mathbf{v}) = \boldsymbol{\mathcal{A}}_h\boldsymbol{\mathcal{B}}_h^\top$.
Linear independence of $\{\mathbf{a}^{(\ell)}\}$ means that $\boldsymbol{\mathcal{A}}_h$ has $c_h$ linearly independent columns in $\mathbb{R}^{N_1}$, so $c_h \leq N_1$ and $\operatorname{rank}(\boldsymbol{\mathcal{A}}_h) = c_h$.
The same argument applied to $\boldsymbol{\mathcal{B}}_h$ gives $c_h \leq N_2$ and $\operatorname{rank}(\boldsymbol{\mathcal{B}}_h) = c_h$, hence $\min\{c_h, N_1, N_2\} = c_h$.
Because $\boldsymbol{\mathcal{A}}_h$ has full column rank, $\operatorname{rank}\big(\operatorname{unvec}_{N_1,N_2}(\boldsymbol{\Theta}_h \mathbf{v})\big) = \operatorname{rank}(\boldsymbol{\mathcal{B}}_h^\top) = c_h$, so \eqref{eq:rank_bound} holds with equality.
\end{proof}

\subsection{Asymptotic Inference}\label{app:asymptotic}

This subsection proves Proposition \ref{prop:an} and derives the Jacobian $\mathbf{J}_\beta$ of \eqref{eq:OmegaC}.
The coefficient block of the limit is known for the MAR($1$) \citep{chen_autoregressive_2021} and at general lag order \citep{li_multilinear_2021}, but three features of our setting require the expansion to be given in full.
First, the parameter vector includes $\boldsymbol{\sigma}$, whose score is quadratic rather than linear in $\mathbf{U}_t$.
Second, the MAR($p$) carries $p+1$ scale indeterminacies rather than one, so the correction that restores invertibility has rank $p+1$.
Finally, the cross-lag blocks $\mathbf{B}_i \otimes \mathbf{A}_j$, $i \neq j$, of the stacked Kronecker product must be discarded to pass from $(\widehat{\mathbf{A}}, \widehat{\mathbf{B}})$ to $\operatorname{vec}(\widehat{\mathbf{C}})$.
Section \ref{app:score} collects the score, the gradients of the normalizations and the constrained sandwich on which the argument rests, and shows why the coefficient and covariance blocks separate.
Section \ref{app:an} carries out the expansion, and Section \ref{app:jacobian} derives $\mathbf{J}_\beta$.

\subsubsection{Score and Constrained Sandwich}\label{app:score}

Recall $\boldsymbol{\zeta} = (\boldsymbol{\beta}^\top, \boldsymbol{\sigma}^\top)^\top \in \mathbb{R}^d$ from Section \ref{subsec:asymptotic} and let $\mathcal{F}_t = \sigma(\mathbf{U}_s : s \leq t)$.
The contribution of observation $t$ to the score of \eqref{eq:loglik} is $\mathbf{s}_t = (\mathbf{s}_{\beta,t}^\top, \mathbf{s}_{\sigma,t}^\top)^\top$, with a coefficient block linear in the innovations,
\begin{equation}\label{eq:score_beta}
    \mathbf{s}_{\beta,t} = \mathbf{W}_t\boldsymbol{\Sigma}^{-1}\mathbf{u}_t,
    \qquad
    \mathbf{W}_t =
    \begin{bmatrix}
        \mathbf{X}_t \mathbf{B}^\top \otimes \mathbf{I}_{N_1} \\
        \mathbf{I}_{N_2} \otimes \mathbf{X}_t^\top \mathbf{A}^\top
    \end{bmatrix}
    \in \mathbb{R}^{p(N_1^2 + N_2^2) \times N_1 N_2},
\end{equation}
and a covariance block quadratic in them,
\begin{equation}\label{eq:score_sigma}
    \mathbf{s}_{\sigma,t} = \frac{1}{2}
    \begin{bmatrix}
        \mathbf{D}_{N_1}^\top \operatorname{vec} \big(\boldsymbol{\Sigma}_1^{-1}
        (\mathbf{U}_t\boldsymbol{\Sigma}_2^{-1}\mathbf{U}_t^\top - N_2\boldsymbol{\Sigma}_1)
        \boldsymbol{\Sigma}_1^{-1}\big) \\[3pt]
        \mathbf{D}_{N_2}^\top \operatorname{vec} \big(\boldsymbol{\Sigma}_2^{-1}
        (\mathbf{U}_t^\top\boldsymbol{\Sigma}_1^{-1}\mathbf{U}_t - N_1\boldsymbol{\Sigma}_2)
        \boldsymbol{\Sigma}_2^{-1}\big)
    \end{bmatrix},
\end{equation}
where $\mathbf{D}_N \in \mathbb{R}^{N^2 \times N(N+1)/2}$ is the duplication matrix, defined by $\operatorname{vec}(\mathbf{S}) = \mathbf{D}_N\operatorname{vech}(\mathbf{S})$ for $\mathbf{S} \in \mathbb{S}^N$.
For fixed $\mathbf{M} \in \mathbb{R}^{N_2 \times N_2}$, Lemma \ref{lem:kron} with $\mathbb{E}(\mathbf{u}_t\mathbf{u}_t^\top) = \boldsymbol{\Sigma}_2 \otimes \boldsymbol{\Sigma}_1$ gives $\operatorname{tr}[\mathbf{F}\,\mathbb{E}(\mathbf{U}_t\mathbf{M}\mathbf{U}_t^\top)] = \operatorname{tr}(\mathbf{M}\boldsymbol{\Sigma}_2)\operatorname{tr}(\mathbf{F}\boldsymbol{\Sigma}_1)$ for every $\mathbf{F} \in \mathbb{R}^{N_1 \times N_1}$, so that
\begin{equation}\label{eqA:UMU}
    \mathbb{E}(\mathbf{U}_t\mathbf{M}\mathbf{U}_t^\top) = \operatorname{tr}(\mathbf{M}\boldsymbol{\Sigma}_2)\,\boldsymbol{\Sigma}_1,
\end{equation}
and likewise $\mathbb{E}(\mathbf{U}_t^\top\mathbf{M}\mathbf{U}_t) = \operatorname{tr}(\mathbf{M}\boldsymbol{\Sigma}_1)\,\boldsymbol{\Sigma}_2$ for $\mathbf{M} \in \mathbb{R}^{N_1 \times N_1}$.
In particular, $\mathbf{s}_{\sigma,t}$ has mean zero.
As $\mathbf{W}_t$ is $\mathcal{F}_{t-1}$-measurable and $\mathbf{U}_t$ is independent of $\mathcal{F}_{t-1}$, $\mathbf{s}_t$ is a martingale difference sequence.
The same measurability controls every average in which an innovation meets a regressor.
For any fixed conformable $\mathbf{M}$,
\begin{equation}\label{eqA:mds}
    \frac{1}{T}\sum_{t=1}^{T} \mathbf{U}_t\mathbf{M}\mathbf{X}_t^\top = O_p(T^{-1/2}),
\end{equation}
since each block $\mathbf{Y}_{t-j}$ of $\mathbf{X}_t$ is $\mathcal{F}_{t-1}$-measurable while
$\mathbb{E}(\mathbf{U}_t\mid\mathcal{F}_{t-1}) = \mathbf{0}$, so the summands form a
martingale difference sequence, square integrable under Assumptions
\ref{ass:stationary} and \ref{ass:iid-plus}.
We refer to \eqref{eqA:mds} as the predeterminedness of the regressors.

Since $\mathbf{B}_j \otimes \mathbf{A}_j = (c^{-1}\mathbf{B}_j) \otimes (c\mathbf{A}_j)$ and $\boldsymbol{\Sigma}_2 \otimes \boldsymbol{\Sigma}_1 = (c^{-1}\boldsymbol{\Sigma}_2) \otimes (c\boldsymbol{\Sigma}_1)$ for every $c > 0$, the likelihood is flat along $p+1$ directions, which span the null space of the expected Hessian $\mathbf{H}_0 = \mathbb{E}(-\partial\mathbf{s}_t/\partial\boldsymbol{\zeta}^\top)$.
The normalizations $\|\mathbf{A}_j\| = 1$, $j = 1, \ldots, p$, and $\|\boldsymbol{\Sigma}_1\| = 1$ remove these directions, and we collect their gradients with respect to $\boldsymbol{\zeta}$ in
\begin{equation}\label{eq:Gamma}
    \boldsymbol{\Gamma} = \operatorname{blkdiag}(\boldsymbol{\Gamma}_\beta, \boldsymbol{\Gamma}_\sigma) \in \mathbb{R}^{d \times (p+1)},
    \qquad
    \boldsymbol{\Gamma}_\beta =
    \begin{bmatrix}
        \operatorname{blkdiag}(\mathbf{a}_1, \ldots, \mathbf{a}_p) \\ \mathbf{0}
    \end{bmatrix},
    \qquad
    \boldsymbol{\Gamma}_\sigma =
    \begin{bmatrix}
        \mathbf{D}_{N_1}^\top\operatorname{vec}(\boldsymbol{\Sigma}_1) \\ \mathbf{0}
    \end{bmatrix},
\end{equation}
with $\mathbf{a}_j = \operatorname{vec}(\mathbf{A}_j)$.
The two row blocks of $\boldsymbol{\Gamma}_\beta$ correspond to $\operatorname{vec}(\mathbf{A})$ and $\operatorname{vec}(\mathbf{B}^\top)$, and those of $\boldsymbol{\Gamma}_\sigma$ to $\operatorname{vech}(\boldsymbol{\Sigma}_1)$ and $\operatorname{vech}(\boldsymbol{\Sigma}_2)$.
Each column of $\boldsymbol{\Gamma}$ has unit inner product with the tangent direction of its own normalization and is orthogonal to the other $p$.
Hence $\boldsymbol{\Gamma}\boldsymbol{\Gamma}^\top$ is positive definite on the null space of $\mathbf{H}_0$, and $\mathbf{H} = \mathbf{H}_0 + \boldsymbol{\Gamma}\boldsymbol{\Gamma}^\top$ is positive definite.
With $\boldsymbol{\Lambda} = \mathbb{E}(\mathbf{s}_t\mathbf{s}_t^\top)$, the asymptotic variance is the constrained sandwich
\begin{equation}\label{eq:Omega}
    \boldsymbol{\Omega} = \mathbf{H}^{-1}\boldsymbol{\Lambda}\mathbf{H}^{-1}.
\end{equation}

Both $\mathbf{H}$ and $\boldsymbol{\Lambda}$ are block diagonal between $\boldsymbol{\beta}$ and $\boldsymbol{\sigma}$, and we write $\mathbf{H}_\beta$, $\boldsymbol{\Lambda}_\sigma$, and so on, for their diagonal blocks.
For $\mathbf{H}$, the product $\boldsymbol{\Gamma}\boldsymbol{\Gamma}^\top$ is block diagonal by \eqref{eq:Gamma}.
The off-diagonal block of $\mathbf{H}_0$ vanishes because the derivative of $\mathbf{s}_{\beta,t}$ with respect to $\boldsymbol{\sigma}$ is linear in $\mathbf{u}_t$ with $\mathcal{F}_{t-1}$-measurable coefficients.
For $\boldsymbol{\Lambda}$, $\mathbf{s}_{\sigma,t}$ depends on $\mathbf{U}_t$ alone, so
\begin{equation*}
    \mathbb{E}\big(\mathbf{s}_{\beta,t}\mathbf{s}_{\sigma,t}^\top\big)
    = \mathbb{E}(\mathbf{W}_t)\boldsymbol{\Sigma}^{-1}\mathbb{E}\big(\mathbf{u}_t\mathbf{s}_{\sigma,t}^\top\big) = \mathbf{0},
\end{equation*}
since $\mathbf{W}_t$ is linear in $\mathbf{X}_t$ and the series are demeaned.
The third moments of the innovations, which enter through $\mathbb{E}(\mathbf{u}_t\mathbf{s}_{\sigma,t}^\top)$, therefore play no role, and no symmetry restriction on $\mathbf{U}_t$ is required.
It follows that $\boldsymbol{\Omega} = \operatorname{blkdiag}(\boldsymbol{\Omega}_\beta, \boldsymbol{\Omega}_\sigma)$.

\subsubsection{Joint Asymptotic Normality}\label{app:an}

Throughout, $\Delta_A = \widehat{\mathbf{A}} - \mathbf{A}$, $\Delta_B = \widehat{\mathbf{B}} - \mathbf{B}$ and $\Delta_i = \widehat{\boldsymbol{\Sigma}}_i - \boldsymbol{\Sigma}_i$ ($i = 1, 2$).
The expansion takes as given the $\sqrt{T}$ rate of the following lemma, which for the coefficient block of the MAR($1$) is part of Theorem 4 of \citet{chen_autoregressive_2021}.

\begin{lemma}[Rate]\label{lem:rate}
Let Assumptions \ref{ass:stationary} and \ref{ass:iid-plus} hold.
Moreover, let $\widehat{\boldsymbol{\zeta}}$ be a maximizer of \eqref{eq:loglik} over the set $\mathcal{Z}$ of parameters satisfying the normalizations of Section \ref{app:score}.
Then
\begin{equation}\label{eqA:rate}
    \widehat{\boldsymbol{\zeta}} - \boldsymbol{\zeta} = O_p(T^{-1/2}).
\end{equation}
\end{lemma}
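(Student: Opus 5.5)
The plan is to prove the rate in a reparametrization that removes the normalizations, and then transfer it back to $\boldsymbol{\zeta}$. Let $\boldsymbol{\phi} = (\operatorname{vec}(\mathbf{C}_1)^\top, \ldots, \operatorname{vec}(\mathbf{C}_p)^\top, \operatorname{vec}(\boldsymbol{\Sigma})^\top)^\top$ with $\mathbf{C}_j = \mathbf{B}_j \otimes \mathbf{A}_j$ and $\boldsymbol{\Sigma} = \boldsymbol{\Sigma}_2 \otimes \boldsymbol{\Sigma}_1$. The likelihood \eqref{eq:loglik} depends on $\boldsymbol{\zeta}$ only through $\boldsymbol{\phi}$. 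Let $\mathcal{M}$ be the image of $\mathcal{Z}$, the set of $(\mathbf{C},\boldsymbol{\Sigma})$ for which each rearranged $\mathbf{C}_j$ and the rearranged $\boldsymbol{\Sigma}$ have rank at most one.

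The map $\boldsymbol{\zeta} \mapsto \boldsymbol{\phi}$ on $\mathcal{Z}$ is smooth with a smooth local inverse near the truth. This holds because $\mathbf{A}_j, \mathbf{B}_j \neq \mathbf{0}$ and $\boldsymbol{\Sigma}_1,\boldsymbol{\Sigma}_2 \succ 0$, and the Frobenius and sign normalizations then pick the factors uniquely from the leading singular pair of the rearrangement, whose singular value is bounded away from zero. Concretely, $\mathcal{M}$ is locally a smooth embedded manifold of dimension $d-(p+1)$ at $\boldsymbol{\phi}_0$, and the inverse chart is Lipschitz there. It therefore suffices to show consistency and then $\widehat{\boldsymbol{\phi}} - \boldsymbol{\phi}_0 = O_p(T^{-1/2})$.

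\emph{Consistency.} First concentrate out the scale. For fixed $\mathbf{C}$, the criterion $T^{-1}\ell_T$ equals
\begin{equation*}
-\tfrac12\big[\log|\boldsymbol{\Sigma}| + \operatorname{tr}(\boldsymbol{\Sigma}^{-1}\widehat{\mathbf{S}}(\mathbf{C}))\big], \qquad \widehat{\mathbf{S}}(\mathbf{C}) = T^{-1}\textstyle\sum_t \mathbf{u}_t(\mathbf{C})\mathbf{u}_t(\mathbf{C})^\top .
\end{equation*}
By ergodicity, $T^{-1}\sum_t \mathbf{x}_t\mathbf{x}_t^\top \to \boldsymbol{\Gamma}_x \succ 0$, and by \eqref{eqA:mds},
\begin{equation*}
\widehat{\mathbf{S}}(\mathbf{C}) = \boldsymbol{\Sigma}_0 + (\mathbf{C}-\mathbf{C}_0)\boldsymbol{\Gamma}_x(\mathbf{C}-\mathbf{C}_0)^\top + o_p(1)
\end{equation*}
uniformly on bounded sets of $\mathbf{C}$. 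The limiting criterion is uniquely maximized on $\mathcal{M}$ at $\boldsymbol{\phi}_0$, by the usual Gaussian argument ($\log|\boldsymbol{\Sigma}| + \operatorname{tr}(\boldsymbol{\Sigma}^{-1}\mathbf{S})$ is minimized at $\boldsymbol{\Sigma}=\mathbf{S}$, and $\mathbf{S} \succeq \boldsymbol{\Sigma}_0$ with equality iff $\mathbf{C}=\mathbf{C}_0$).

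Non-compactness is the delicate part and the step I expect to be the main obstacle. Large $\|\mathbf{C}\|$ is ruled out because the quadratic term $\boldsymbol{\Gamma}_x$ makes $\widehat{\mathbf{S}}(\mathbf{C})$ and hence $\log|\widehat{\mathbf{S}}|$ blow up. Degenerate or unbounded $\boldsymbol{\Sigma}$ is ruled out because $\widehat{\mathbf{S}}(\mathbf{C}) \succeq \widehat{\mathbf{S}}_{\mathrm{OLS}} \to \boldsymbol{\Sigma}_0 \succ 0$, since the unrestricted least-squares residual covariance is a lower bound. Together these confine $\widehat{\boldsymbol{\phi}}$ to a compact set with probability approaching one. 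The set $\mathcal{M}$ is closed, because rank $\le 1$ is closed and the positive-definite part is kept away from the boundary by the bound above. A standard argmax theorem then gives $\widehat{\boldsymbol{\phi}} \xrightarrow{p} \boldsymbol{\phi}_0$. This mirrors the convexity argument of Theorem 4 of \citet{chen_autoregressive_2021}, with the extra $\boldsymbol{\Sigma}$ block handled by the lower bound.

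\emph{Rate.} Take a local chart $\boldsymbol{\eta} \mapsto \boldsymbol{\phi}(\boldsymbol{\eta})$ of $\mathcal{M}$ at $\boldsymbol{\eta}_0$; equivalently, work with $\boldsymbol{\zeta}$ and the penalized Hessian $\mathbf{H} = \mathbf{H}_0 + \boldsymbol{\Gamma}\boldsymbol{\Gamma}^\top$ of Section \ref{app:score}. Expand
\begin{equation*}
T^{-1}\big[\ell_T(\widehat{\boldsymbol{\eta}}) - \ell_T(\boldsymbol{\eta}_0)\big] = \bar{\mathbf{s}}_T^\top(\widehat{\boldsymbol{\eta}}-\boldsymbol{\eta}_0) - \tfrac12(\widehat{\boldsymbol{\eta}}-\boldsymbol{\eta}_0)^\top \mathbf{H}_T(\bar{\boldsymbol{\eta}})(\widehat{\boldsymbol{\eta}}-\boldsymbol{\eta}_0) \ge 0 .
\end{equation*}
The average score $\bar{\mathbf{s}}_T$ is $O_p(T^{-1/2})$. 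Its coefficient block is a square-integrable martingale difference average by \eqref{eqA:mds}; its covariance block is an i.i.d.\ mean-zero average by \eqref{eqA:UMU}, with finite variance from the fourth moments of Assumption \ref{ass:iid-plus}.

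By consistency and a uniform law of large numbers for the second derivatives, $\mathbf{H}_T(\bar{\boldsymbol{\eta}}) \to \mathbf{H}$, which is positive definite on the tangent space of $\mathcal{M}$. Its coefficient part is positive definite by $\boldsymbol{\Gamma}_x, \boldsymbol{\Sigma} \succ 0$ and injectivity of $\mathbf{J}_\beta$ modulo the scale directions; its covariance part is the Gaussian information of a separable covariance. The display then gives
\begin{equation*}
\lambda_{\min}\|\widehat{\boldsymbol{\eta}}-\boldsymbol{\eta}_0\|^2 \le 2\|\bar{\mathbf{s}}_T\|\,\|\widehat{\boldsymbol{\eta}}-\boldsymbol{\eta}_0\|(1+o_p(1)),
\end{equation*}
so $\widehat{\boldsymbol{\eta}}-\boldsymbol{\eta}_0 = O_p(T^{-1/2})$. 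The Lipschitz inverse chart transfers this to \eqref{eqA:rate}.
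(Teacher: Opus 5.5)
Your proof is correct. Its skeleton matches the paper's: the least-squares decomposition of the criterion, Gaussian identification, consistency, and then a quadratic expansion with positive definite curvature. The two technical devices differ, however.

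For consistency, the paper argues directly through the convexity of $h(\cdot,\breve{\boldsymbol{\Sigma}})$. It first localizes $\widehat{\boldsymbol{\Sigma}}^{-1}$, then the coefficients by comparison with $\bar{\boldsymbol{\zeta}}^{\circ}$, then the factors through the rearrangement. You use the same lower bound $\widehat{\mathbf{S}}(\mathbf{C}) \succeq \breve{\boldsymbol{\Sigma}}$ together with the blow-up of $\log|\widehat{\mathbf{S}}(\mathbf{C})|$ to confine the estimator to a compact set, and then invoke an argmax theorem. This makes the treatment of non-compactness more transparent.

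For the rate, the paper exploits scale invariance: a maximizer over $\mathcal{Z}$ is an unconstrained stationary point, so $\nabla F_T(\widehat{\boldsymbol{\zeta}}) = \mathbf{0}$. It then controls the flat directions by adding $\boldsymbol{\Gamma}\boldsymbol{\Gamma}^\top(\widehat{\boldsymbol{\zeta}}-\boldsymbol{\zeta})$, which the normalizations make $O_p(\|\widehat{\boldsymbol{\zeta}}-\boldsymbol{\zeta}\|^2)$. You instead quotient out the flat directions geometrically, by passing to the Kronecker manifold $\mathcal{M}$ and a local chart. You then use the basic inequality from a second-order expansion of the criterion.

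The trade-offs are as follows.
\begin{itemize}
\item Your route does not need stationarity at $\widehat{\boldsymbol{\zeta}}$ and would survive constraints that are not mere rescalings. It does require the embedded-manifold structure and a Lipschitz inverse chart, and its curvature matrix lives in chart coordinates.
\item The paper's route stays in $\boldsymbol{\zeta}$ and produces exactly the linearization $\mathbf{H}(\widehat{\boldsymbol{\zeta}}-\boldsymbol{\zeta}) = T^{-1}\sum_t \mathbf{s}_t + \cdots$ that is reused for Proposition \ref{prop:an} and the bootstrap.
\end{itemize}

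One small correction. Your aside that one may ``equivalently'' work in $\boldsymbol{\zeta}$ with $\mathbf{H}$ holds only once you add the observation $\boldsymbol{\Gamma}^\top(\widehat{\boldsymbol{\zeta}}-\boldsymbol{\zeta}) = O(\|\widehat{\boldsymbol{\zeta}}-\boldsymbol{\zeta}\|^2)$. The basic inequality in $\boldsymbol{\zeta}$ involves only the singular $\mathbf{H}_0$, which gives no control along the scale directions. Likewise, the limit of your chart Hessian is the pull-back of the information onto the tangent space of $\mathcal{M}$, not $\mathbf{H}$ itself.
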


\begin{proof}
The argument extends the proof of Theorem 4 of \citet{chen_autoregressive_2021}, which rests on the same decomposition of the likelihood.
Write $\mathbf{y}_t = \operatorname{vec}(\mathbf{Y}_t) = \mathbf{C}\mathbf{x}_t + \mathbf{u}_t$ and let
\begin{equation*}
    \widehat{\boldsymbol{\Phi}} = \frac{1}{T}\sum_{t=1}^{T}\mathbf{x}_t\mathbf{x}_t^\top,
    \qquad
    \breve{\mathbf{C}} = \Big(\frac{1}{T}\sum_{t=1}^{T}\mathbf{y}_t\mathbf{x}_t^\top\Big)\widehat{\boldsymbol{\Phi}}^{-1},
    \qquad
    \breve{\boldsymbol{\Sigma}} = \frac{1}{T}\sum_{t=1}^{T}(\mathbf{y}_t - \breve{\mathbf{C}}\mathbf{x}_t)(\mathbf{y}_t - \breve{\mathbf{C}}\mathbf{x}_t)^\top,
\end{equation*}
the last two being the least squares estimates of the unrestricted VAR($p$).
For every $\bar{\mathbf{C}}$,
\begin{equation*}
    \frac{1}{T}\sum_{t=1}^{T}(\mathbf{y}_t - \bar{\mathbf{C}}\mathbf{x}_t)(\mathbf{y}_t - \bar{\mathbf{C}}\mathbf{x}_t)^\top
    = \breve{\boldsymbol{\Sigma}} + (\bar{\mathbf{C}} - \breve{\mathbf{C}})\widehat{\boldsymbol{\Phi}}(\bar{\mathbf{C}} - \breve{\mathbf{C}})^\top.
\end{equation*}
Hence, up to a constant, the negative log likelihood of equation \eqref{eq:loglik} divided by $T$ is
\begin{equation}\label{eqA:F}
    F_T(\bar{\boldsymbol{\zeta}})
    = \frac{1}{2}\log|\bar{\boldsymbol{\Sigma}}|
    + \frac{1}{2}\operatorname{tr}\big(\bar{\boldsymbol{\Sigma}}^{-1}\big[\breve{\boldsymbol{\Sigma}} + (\bar{\mathbf{C}} - \breve{\mathbf{C}})\widehat{\boldsymbol{\Phi}}(\bar{\mathbf{C}} - \breve{\mathbf{C}})^\top\big]\big),
\end{equation}
where $\bar{\mathbf{C}}$ and $\bar{\boldsymbol{\Sigma}} = \bar{\boldsymbol{\Sigma}}_2 \otimes \bar{\boldsymbol{\Sigma}}_1$ are formed from $\bar{\boldsymbol{\zeta}}$, and $\widehat{\boldsymbol{\zeta}}$ minimizes $F_T$ over $\mathcal{Z}$.
Their limits are $\mathbf{C}$, $\boldsymbol{\Sigma}$ and $\boldsymbol{\Phi} = \mathbb{E}(\mathbf{x}_t\mathbf{x}_t^\top)$, $\widehat{\boldsymbol{\Phi}} \xrightarrow{p} \boldsymbol{\Phi}$ by the ergodic theorem and
\begin{equation*}
    \breve{\mathbf{C}} - \mathbf{C} = \Big(\frac{1}{T}\sum_{t=1}^{T}\mathbf{u}_t\mathbf{x}_t^\top\Big)\widehat{\boldsymbol{\Phi}}^{-1},
    \qquad
    \breve{\boldsymbol{\Sigma}} - \boldsymbol{\Sigma} = \frac{1}{T}\sum_{t=1}^{T}(\mathbf{u}_t\mathbf{u}_t^\top - \boldsymbol{\Sigma}) - (\breve{\mathbf{C}} - \mathbf{C})\widehat{\boldsymbol{\Phi}}(\breve{\mathbf{C}} - \mathbf{C})^\top
\end{equation*}
are both $O_p(T^{-1/2})$, the first by the argument behind \eqref{eqA:mds} and the second by Chebyshev's inequality under the fourth moments of Assumption \ref{ass:iid-plus}.
Replacing $\breve{\mathbf{C}}$, $\breve{\boldsymbol{\Sigma}}$ and $\widehat{\boldsymbol{\Phi}}$ in \eqref{eqA:F} by their limits gives
\begin{equation*}
    F(\bar{\boldsymbol{\zeta}})
    = \frac{1}{2}\log|\bar{\boldsymbol{\Sigma}}|
    + \frac{1}{2}\operatorname{tr}\big(\bar{\boldsymbol{\Sigma}}^{-1}\big[\boldsymbol{\Sigma} + (\bar{\mathbf{C}} - \mathbf{C})\boldsymbol{\Phi}(\bar{\mathbf{C}} - \mathbf{C})^\top\big]\big).
\end{equation*}

Write $\bar{\boldsymbol{\Omega}} = \bar{\boldsymbol{\Sigma}}^{-1}$ and, as in the proof of Theorem 4 of \citet{chen_autoregressive_2021}, $h(\bar{\boldsymbol{\Omega}}, \mathbf{S}) = -\log|\bar{\boldsymbol{\Omega}}| + \operatorname{tr}(\bar{\boldsymbol{\Omega}}\mathbf{S})$, so that
\begin{equation*}
    2F(\bar{\boldsymbol{\zeta}})
    = h(\bar{\boldsymbol{\Omega}}, \boldsymbol{\Sigma})
    + \operatorname{tr}\big[\bar{\boldsymbol{\Omega}}(\bar{\mathbf{C}} - \mathbf{C})\boldsymbol{\Phi}(\bar{\mathbf{C}} - \mathbf{C})^\top\big].
\end{equation*}
By their Proposition 9, $h(\cdot, \boldsymbol{\Sigma})$ has gradient $\boldsymbol{\Sigma} - \bar{\boldsymbol{\Omega}}^{-1}$ and positive definite Hessian $\bar{\boldsymbol{\Omega}}^{-1} \otimes \bar{\boldsymbol{\Omega}}^{-1}$, so it is strictly convex and uniquely minimized at $\bar{\boldsymbol{\Sigma}} = \boldsymbol{\Sigma}$.
As $\|\bar{\mathbf{A}}_j\| = \|\mathbf{A}_j\| = 1$, $\bar{\mathbf{C}}_j = \mathbf{C}_j$ forces $(\bar{\mathbf{A}}_j, \bar{\mathbf{B}}_j) = \pm(\mathbf{A}_j, \mathbf{B}_j)$, and the same argument with positive definite factors and $\|\bar{\boldsymbol{\Sigma}}_1\| = 1$ forces $\bar{\boldsymbol{\Sigma}}_i = \boldsymbol{\Sigma}_i$.
Hence $\boldsymbol{\zeta}$ is the unique minimizer of $F$ over $\mathcal{Z}$, which contains $\widehat{\boldsymbol{\zeta}}$ because a sign change leaves $F_T$ unchanged.

For consistency, let $\boldsymbol{\Omega} = \boldsymbol{\Sigma}^{-1}$ and $\breve{\boldsymbol{\Omega}} = \breve{\boldsymbol{\Sigma}}^{-1}$, which minimizes $h(\cdot, \breve{\boldsymbol{\Sigma}})$ and converges to $\boldsymbol{\Omega}$.
By \eqref{eqA:F},
\begin{equation*}
    2F_T(\bar{\boldsymbol{\zeta}})
    = h(\bar{\boldsymbol{\Omega}}, \breve{\boldsymbol{\Sigma}})
    + \operatorname{tr}\big[\bar{\boldsymbol{\Omega}}(\bar{\mathbf{C}} - \breve{\mathbf{C}})\widehat{\boldsymbol{\Phi}}(\bar{\mathbf{C}} - \breve{\mathbf{C}})^\top\big]
    \geq h(\bar{\boldsymbol{\Omega}}, \breve{\boldsymbol{\Sigma}}),
\end{equation*}
while $2F_T(\boldsymbol{\zeta}) = h(\breve{\boldsymbol{\Omega}}, \breve{\boldsymbol{\Sigma}}) + o_p(1)$.
Fix $c \in (0, \lambda_{\min}(\boldsymbol{\Omega})/2)$.
A second-order expansion of $h(\cdot, \breve{\boldsymbol{\Sigma}})$ about $\breve{\boldsymbol{\Omega}}$ gives, with probability approaching one, $h(\bar{\boldsymbol{\Omega}}, \breve{\boldsymbol{\Sigma}}) \geq h(\breve{\boldsymbol{\Omega}}, \breve{\boldsymbol{\Sigma}}) + \kappa c^2$ whenever $\|\bar{\boldsymbol{\Omega}} - \boldsymbol{\Omega}\| = c$, with $\kappa > 0$ depending only on $\boldsymbol{\Omega}$, and by convexity the same bound holds whenever $\|\bar{\boldsymbol{\Omega}} - \boldsymbol{\Omega}\| \geq c$.
Every such $\bar{\boldsymbol{\zeta}}$ thus has $F_T(\bar{\boldsymbol{\zeta}}) > F_T(\boldsymbol{\zeta})$, so $\|\widehat{\boldsymbol{\Omega}} - \boldsymbol{\Omega}\| < c$ with probability approaching one.
For the coefficients, let $\bar{\boldsymbol{\zeta}}$ have $\|\bar{\boldsymbol{\Omega}} - \boldsymbol{\Omega}\| < c$, and let $\bar{\boldsymbol{\zeta}}^\circ \in \mathcal{Z}$ have the same covariance factors and the true $(\mathbf{A}, \mathbf{B})$.
Then
\begin{equation*}
    2F_T(\bar{\boldsymbol{\zeta}}) - 2F_T(\bar{\boldsymbol{\zeta}}^\circ)
    \geq \lambda_{\min}(\bar{\boldsymbol{\Omega}})\lambda_{\min}(\widehat{\boldsymbol{\Phi}})\|\bar{\mathbf{C}} - \breve{\mathbf{C}}\|^2
    - \lambda_{\max}(\bar{\boldsymbol{\Omega}})\lambda_{\max}(\widehat{\boldsymbol{\Phi}})\|\mathbf{C} - \breve{\mathbf{C}}\|^2,
\end{equation*}
in which $\lambda_{\min}(\bar{\boldsymbol{\Omega}}) \geq \lambda_{\min}(\boldsymbol{\Omega})/2$ and the last term is $o_p(1)$ uniformly in such $\bar{\boldsymbol{\zeta}}$.
The difference is therefore positive, with probability approaching one so $\widehat{\mathbf{C}} \xrightarrow{p} \mathbf{C}$.
For the factors, $\bar{\mathbf{a}}_j = \operatorname{vec}(\bar{\mathbf{A}}_j)$, the rearrangement of \citet{chen_autoregressive_2021} maps $\bar{\mathbf{C}}_j$ to $\mathbf{M}_j = \bar{\mathbf{a}}_j\operatorname{vec}(\bar{\mathbf{B}}_j)^\top$, and $\|\bar{\mathbf{a}}_j\| = 1$ gives $\bar{\mathbf{a}}_j\bar{\mathbf{a}}_j^\top = \mathbf{M}_j\mathbf{M}_j^\top/\|\mathbf{M}_j\|^2$.
This gives $\widehat{\mathbf{a}}_j \xrightarrow{p} \mathbf{a}_j$ and then $\operatorname{vec}(\widehat{\mathbf{B}}_j) = \widehat{\mathbf{M}}_j^\top\widehat{\mathbf{a}}_j \xrightarrow{p} \operatorname{vec}(\mathbf{B}_j)$.
The covariance factors are handled identically and hence $\widehat{\boldsymbol{\zeta}} \xrightarrow{p} \boldsymbol{\zeta}$.

For the rate, rescaling $(\bar{\mathbf{A}}_j, \bar{\mathbf{B}}_j)$ or $(\bar{\boldsymbol{\Sigma}}_1, \bar{\boldsymbol{\Sigma}}_2)$ leaves $F_T$ unchanged.
Hence $\widehat{\boldsymbol{\zeta}}$ minimizes $F_T$ over the set of all parameters and $\nabla F_T(\widehat{\boldsymbol{\zeta}}) = \mathbf{0}$.
Every term of $\nabla F_T(\boldsymbol{\zeta})$ carries $\breve{\mathbf{C}} - \mathbf{C}$ or $\breve{\boldsymbol{\Sigma}} - \boldsymbol{\Sigma}$, so $\nabla F_T(\boldsymbol{\zeta}) = O_p(T^{-1/2})$.
The normalizations give $\mathbf{a}_j^\top(\widehat{\mathbf{a}}_j - \mathbf{a}_j) = -\frac{1}{2}\|\widehat{\mathbf{a}}_j - \mathbf{a}_j\|^2$ and $\operatorname{tr}(\boldsymbol{\Sigma}_1\Delta_1) = -\frac{1}{2}\|\Delta_1\|^2$.
Adding $\boldsymbol{\Gamma}\boldsymbol{\Gamma}^\top(\widehat{\boldsymbol{\zeta}} - \boldsymbol{\zeta})$ to both sides,
\begin{equation*}
    (\mathbf{H}_{0} + \boldsymbol{\Gamma}\boldsymbol{\Gamma}^\top)(\widehat{\boldsymbol{\zeta}} - \boldsymbol{\zeta})
    = O_p(T^{-1/2}) + O_p\big(\|\widehat{\boldsymbol{\zeta}} - \boldsymbol{\zeta}\|^2\big).
\end{equation*}
Since $\mathbf{H}_{0} + \boldsymbol{\Gamma}\boldsymbol{\Gamma}^\top \xrightarrow{p} \mathbf{H}$, which is positive definite by Section \ref{app:score}, and $\widehat{\boldsymbol{\zeta}} - \boldsymbol{\zeta} = o_p(1)$, we obtain $\|\widehat{\boldsymbol{\zeta}} - \boldsymbol{\zeta}\|\,(1 + o_p(1)) = O_p(T^{-1/2})$.
\end{proof}

Since the sample second moments of $\mathbf{X}_t$ are $O_p(1)$ by Assumption \ref{ass:stationary} and the ergodic theorem, \eqref{eqA:rate} makes any term carrying two deviations $O_p(T^{-1})$, and we drop such terms without comment.
In particular, $\widehat{\boldsymbol{\Sigma}}_j^{-1} = \boldsymbol{\Sigma}_j^{-1} - \boldsymbol{\Sigma}_j^{-1}\Delta_j\boldsymbol{\Sigma}_j^{-1} + O_p(T^{-1})$.
Together with the predeterminedness \eqref{eqA:mds}, these are the only devices the expansion uses.

\begin{proof}[Proof of Proposition \ref{prop:an}]
Differentiating the likelihood in \eqref{eq:loglik} with respect to $(\mathbf{A}, \mathbf{B}, \boldsymbol{\Sigma}_1, \boldsymbol{\Sigma}_2)$ and evaluating at the estimates gives four first-order conditions,
\begin{equation}\label{eqA:foc}
\begin{aligned}
    \sum_{t=1}^{T} \widehat{\boldsymbol{\Sigma}}_1^{-1}\big(\widehat{\mathbf{A}}\mathbf{X}_t\widehat{\mathbf{B}}^\top - \mathbf{Y}_t\big)\widehat{\boldsymbol{\Sigma}}_2^{-1}\widehat{\mathbf{B}}\mathbf{X}_t^\top &= \mathbf{0}, \\
    \sum_{t=1}^{T} \mathbf{X}_t^\top\widehat{\mathbf{A}}^\top\widehat{\boldsymbol{\Sigma}}_1^{-1}\big(\widehat{\mathbf{A}}\mathbf{X}_t\widehat{\mathbf{B}}^\top - \mathbf{Y}_t\big)\widehat{\boldsymbol{\Sigma}}_2^{-1} &= \mathbf{0}, \\
    -T N_2 \widehat{\boldsymbol{\Sigma}}_1^{-1}
      + \sum_{t=1}^{T}\widehat{\boldsymbol{\Sigma}}_1^{-1}\widehat{\mathbf{U}}_t\widehat{\boldsymbol{\Sigma}}_2^{-1}\widehat{\mathbf{U}}_t^\top\widehat{\boldsymbol{\Sigma}}_1^{-1} &= \mathbf{0}, \\
    -T N_1 \widehat{\boldsymbol{\Sigma}}_2^{-1}
      + \sum_{t=1}^{T}\widehat{\boldsymbol{\Sigma}}_2^{-1}\widehat{\mathbf{U}}_t^\top\widehat{\boldsymbol{\Sigma}}_1^{-1}\widehat{\mathbf{U}}_t\widehat{\boldsymbol{\Sigma}}_2^{-1} &= \mathbf{0},
\end{aligned}
\end{equation}
where $\widehat{\mathbf{U}}_t = \mathbf{Y}_t - \widehat{\mathbf{A}}\mathbf{X}_t\widehat{\mathbf{B}}^\top$ and the second line is the transpose of the gradient with respect to $\mathbf{B}$, taken so that it aligns with the first.

\paragraph{The coefficient conditions.}
Start with the first-order condition for $\mathbf{A}$.
Substituting $\mathbf{Y}_t = \mathbf{A}\mathbf{X}_t\mathbf{B}^\top + \mathbf{U}_t$,
$\widehat{\mathbf{A}} = \mathbf{A} + \Delta_A$ and $\widehat{\mathbf{B}} = \mathbf{B} + \Delta_B$,
the fitted residual is
\begin{equation}\label{eqA:resid}
    \widehat{\mathbf{A}}\mathbf{X}_t\widehat{\mathbf{B}}^\top - \mathbf{Y}_t
    = \Delta_A\mathbf{X}_t\mathbf{B}^\top
    + \mathbf{A}\mathbf{X}_t\Delta_B^\top
    + \Delta_A\mathbf{X}_t\Delta_B^\top
    - \mathbf{U}_t,
\end{equation}
the level term $\mathbf{A}\mathbf{X}_t\mathbf{B}^\top$ having dropped out.
Dividing by $T$ and moving the innovation to the right, the first-order condition for $\mathbf{A}$ is the exact rearrangement, with the estimates $\widehat{\boldsymbol{\Sigma}}_j$ still in place,
\begin{align*}
    \frac{1}{T}\sum_t \widehat{\boldsymbol{\Sigma}}_1^{-1}
    \big(\Delta_A\mathbf{X}_t\mathbf{B}^\top
       + \mathbf{A}\mathbf{X}_t\Delta_B^\top
       + \Delta_A\mathbf{X}_t\Delta_B^\top\big)
    \widehat{\boldsymbol{\Sigma}}_2^{-1}(\mathbf{B} + \Delta_B)\mathbf{X}_t^\top \\
    = \frac{1}{T}\sum_t \widehat{\boldsymbol{\Sigma}}_1^{-1}\mathbf{U}_t
    \widehat{\boldsymbol{\Sigma}}_2^{-1}(\mathbf{B} + \Delta_B)\mathbf{X}_t^\top.
\end{align*}
Expanding the factor $(\mathbf{B} + \Delta_B)$ and the two $\widehat{\boldsymbol{\Sigma}}_j^{-1}$, every term on the left carries at least one deviation, and all but the two in which $\Delta_A$ or $\Delta_B$ appears alone carry two.
On the right, \eqref{eqA:mds} makes the average $O_p(T^{-1/2})$, so every term there that carries a deviation is again of the order of a product of two.
What remains is
\begin{equation*}
    \frac{1}{T}\sum_t\Big(\boldsymbol{\Sigma}_1^{-1}\Delta_A\mathbf{X}_t\mathbf{B}^\top\boldsymbol{\Sigma}_2^{-1}\mathbf{B}\mathbf{X}_t^\top
    + \boldsymbol{\Sigma}_1^{-1}\mathbf{A}\mathbf{X}_t\Delta_B^\top\boldsymbol{\Sigma}_2^{-1}\mathbf{B}\mathbf{X}_t^\top\Big)
    = \frac{1}{T}\sum_t \boldsymbol{\Sigma}_1^{-1}\mathbf{U}_t\boldsymbol{\Sigma}_2^{-1}\mathbf{B}\mathbf{X}_t^\top + O_p(T^{-1}).
\end{equation*}
Vectorizing with Lemma \ref{lem:kron}.1 and using the symmetry of $\boldsymbol{\Sigma}_2$,
\begin{small}
\begin{align*}
    \frac{1}{T}\sum_t\Big[&\big(\mathbf{X}_t\mathbf{B}^\top\boldsymbol{\Sigma}_2^{-1}\mathbf{B}\mathbf{X}_t^\top\otimes\boldsymbol{\Sigma}_1^{-1}\big)\operatorname{vec}(\Delta_A)
    + \big(\mathbf{X}_t\mathbf{B}^\top\boldsymbol{\Sigma}_2^{-1}\otimes\boldsymbol{\Sigma}_1^{-1}\mathbf{A}\mathbf{X}_t\big)\operatorname{vec}(\Delta_B^\top)\Big] \\
    &= \frac{1}{T}\sum_t\big(\mathbf{X}_t\mathbf{B}^\top\boldsymbol{\Sigma}_2^{-1}\otimes\boldsymbol{\Sigma}_1^{-1}\big)\operatorname{vec}(\mathbf{U}_t) + O_p(T^{-1}).
\end{align*}
\end{small}
Using \eqref{eqA:resid} again in the first-order condition for $\mathbf{B}$ and proceeding
identically,
\begin{small}
\begin{align*}
    &\frac{1}{T} \sum_t \left( (\boldsymbol{\Sigma}_{2}^{-1} \mathbf{B} \mathbf{X}_t^{\top} \otimes \mathbf{X}_t^{\top} \mathbf{A}^{\top} \boldsymbol{\Sigma}_{1}^{-1}) \operatorname{vec}(\Delta_A) + (\boldsymbol{\Sigma}_{2}^{-1} \otimes \mathbf{X}_t^{\top} \mathbf{A}^{\top} \boldsymbol{\Sigma}_{1}^{-1} \mathbf{A} \mathbf{X}_t) \operatorname{vec}(\Delta_B^{\top}) \right) \\
    &= \frac{1}{T} \sum_t (\boldsymbol{\Sigma}_{2}^{-1} \otimes \mathbf{X}_t^{\top} \mathbf{A}^{\top} \boldsymbol{\Sigma}_{1}^{-1}) \operatorname{vec}(\mathbf{U}_{t}) + O_{p}(T^{-1}).
\end{align*}
\end{small}
With $\boldsymbol{\Sigma} = \boldsymbol{\Sigma}_2 \otimes \boldsymbol{\Sigma}_1$ and $\mathbf{W}_t$ as in \eqref{eq:score_beta}, the two first-order conditions stack as
\begin{equation}\label{eqA:coefsys}
    \frac{1}{T}\sum_{t=1}^{T}\mathbf{W}_t\boldsymbol{\Sigma}^{-1}\mathbf{W}_t^\top
    \begin{bmatrix}
        \operatorname{vec}(\Delta_{A}) \\
        \operatorname{vec}(\Delta_{B}^{\top}) \\
    \end{bmatrix}
    = \frac{1}{T}\sum_{t=1}^{T}\mathbf{W}_t\boldsymbol{\Sigma}^{-1}\operatorname{vec}(\mathbf{U}_t)
      + O_p(T^{-1}).
\end{equation}

\paragraph{The covariance conditions.}
We next turn to the first-order conditions associated with the covariances.
Write $\mathbf{S}_1 = T^{-1}\sum_t \mathbf{U}_t\boldsymbol{\Sigma}_2^{-1}\mathbf{U}_t^\top$.
By \eqref{eqA:UMU}, $\mathbb{E}(\mathbf{S}_1) = N_{2} \boldsymbol{\Sigma}_{1}$.
The estimated residual is $\widehat{\mathbf{U}}_t = \mathbf{U}_t - \boldsymbol{\Upsilon}_t$ with $\boldsymbol{\Upsilon}_t = \Delta_A\mathbf{X}_t\mathbf{B}^\top + \mathbf{A}\mathbf{X}_t\Delta_B^\top + O_p(T^{-1})$, which is the only place the lag order enters.

Divide  the first-order condition for $\boldsymbol{\Sigma}_1$ by $T/2$ and expand the four hatted objects in turn, replacing any sample average that already multiplies a deviation by its probability limit, which is $O_p(T^{-1})$,
\begin{small}
\begin{align*}
    \frac{1}{T}\sum_t\widehat{\boldsymbol{\Sigma}}_1^{-1}\widehat{\mathbf{U}}_t\widehat{\boldsymbol{\Sigma}}_2^{-1}\widehat{\mathbf{U}}_t^\top\widehat{\boldsymbol{\Sigma}}_1^{-1}
    &= \boldsymbol{\Sigma}_1^{-1}\mathbf{S}_1\boldsymbol{\Sigma}_1^{-1}
     - 2N_2\boldsymbol{\Sigma}_1^{-1}\Delta_1\boldsymbol{\Sigma}_1^{-1}
     - \boldsymbol{\Sigma}_1^{-1}\Big[\frac{1}{T}\sum_t\mathbf{U}_t\boldsymbol{\Sigma}_2^{-1}\Delta_2\boldsymbol{\Sigma}_2^{-1}\mathbf{U}_t^\top\Big]\boldsymbol{\Sigma}_1^{-1} \\
    &\quad - \boldsymbol{\Sigma}_1^{-1}\Big[\frac{1}{T}\sum_t\big(\boldsymbol{\Upsilon}_t\boldsymbol{\Sigma}_2^{-1}\mathbf{U}_t^\top + \mathbf{U}_t\boldsymbol{\Sigma}_2^{-1}\boldsymbol{\Upsilon}_t^\top\big)\Big]\boldsymbol{\Sigma}_1^{-1}
    + O_p(T^{-1}),
\end{align*}
\end{small}
the factor $2N_2$ arising because the two occurrences of $\widehat{\boldsymbol{\Sigma}}_1^{-1}$ each contribute
\begin{equation*}
    \boldsymbol{\Sigma}_1^{-1}\Delta_1\boldsymbol{\Sigma}_1^{-1}(N_2\boldsymbol{\Sigma}_1)\boldsymbol{\Sigma}_1^{-1}
    = N_2 \boldsymbol{\Sigma}_1^{-1}\Delta_1\boldsymbol{\Sigma}_1^{-1}.
\end{equation*}
Applying \eqref{eqA:UMU} with the symmetric $\mathbf{M} = \boldsymbol{\Sigma}_2^{-1}\Delta_2\boldsymbol{\Sigma}_2^{-1}$ reduces the $\Delta_2$ term to $\operatorname{tr}(\boldsymbol{\Sigma}_2^{-1}\Delta_2)\boldsymbol{\Sigma}_1 + O_p(T^{-1})$.
For the residual term, the randomness in $t$ is carried by $\mathbf{X}_t$, so
\begin{equation*}
    \frac{1}{T}\sum_t \boldsymbol{\Upsilon}_t\boldsymbol{\Sigma}_2^{-1}\mathbf{U}_t^\top
    = \Delta_A\Big[\frac{1}{T}\sum_t\mathbf{X}_t\mathbf{B}^\top\boldsymbol{\Sigma}_2^{-1}\mathbf{U}_t^\top\Big]
    + \mathbf{A}\Big[\frac{1}{T}\sum_t\mathbf{X}_t\Delta_B^\top\boldsymbol{\Sigma}_2^{-1}\mathbf{U}_t^\top\Big]
    + O_p(T^{-1}),
\end{equation*}
and each bracket is $O_p(T^{-1/2})$ by \eqref{eqA:mds}.
Multiplied by $\Delta_A$ or $\Delta_B$ the whole term is $O_p(T^{-1})$ and drops.
Had $\mathbb{E}(\mathbf{X}_t\mathbf{U}_t^\top)$ been nonzero the brackets would be $O_p(1)$ and the term would survive at order $O_p(T^{-1/2})$.
This is the exact counterpart of the discard made in Step 1, and it is why estimating $\mathbf{A}$ and $\mathbf{B}$ has no first-order effect on the covariance factors.

Collecting and pre- and post-multiplying by $\boldsymbol{\Sigma}_1$, which rescales a linear equation without changing its solution,
\begin{equation}\label{eqA:cov1}
    N_2\Delta_1 + \operatorname{tr}\big(\boldsymbol{\Sigma}_2^{-1}\Delta_2\big)\boldsymbol{\Sigma}_1
    = \frac{1}{T}\sum_t \mathbf{G}_{1t} + O_p(T^{-1}),
    \qquad
    \mathbf{G}_{1t} = \mathbf{U}_t\boldsymbol{\Sigma}_2^{-1}\mathbf{U}_t^\top - N_2\boldsymbol{\Sigma}_1 .
\end{equation}
The  first-order condition for $\boldsymbol{\Sigma}_2$ is the mirror image with the two dimensions exchanged,
\begin{equation}\label{eqA:cov2}
    N_1\Delta_2 + \operatorname{tr}\big(\boldsymbol{\Sigma}_1^{-1}\Delta_1\big)\boldsymbol{\Sigma}_2
    = \frac{1}{T}\sum_t \mathbf{G}_{2t} + O_p(T^{-1}),
    \qquad
    \mathbf{G}_{2t} = \mathbf{U}_t^\top\boldsymbol{\Sigma}_1^{-1}\mathbf{U}_t - N_1\boldsymbol{\Sigma}_2,
\end{equation}
and $\mathbf{G}_{1t}$, $\mathbf{G}_{2t}$ are symmetric with mean zero by \eqref{eqA:UMU}.
The two equations are coupled only through the scalars $\operatorname{tr}(\boldsymbol{\Sigma}_2^{-1}\Delta_2)$ and $\operatorname{tr}(\boldsymbol{\Sigma}_1^{-1}\Delta_1)$.

Undoing the rescaling by $\boldsymbol{\Sigma}_1$ and $\boldsymbol{\Sigma}_2$ returns \eqref{eqA:cov1}--\eqref{eqA:cov2} to the form of the raw score \eqref{eq:score_sigma}: with $\mathbf{c}_i = \mathbf{D}_{N_i}^\top\operatorname{vec}(\boldsymbol{\Sigma}_i^{-1})$,
\begin{equation}\label{eqA:covraw}
\begin{gathered}
    \mathbf{H}_{0,\sigma}
    \begin{bmatrix}
        \operatorname{vech}(\Delta_{1}) \\
        \operatorname{vech}(\Delta_{2})
    \end{bmatrix}
    = \frac{1}{T}\sum_{t=1}^{T}\mathbf{s}_{\sigma,t} + O_p(T^{-1}), \\[4pt]
    \mathbf{H}_{0,\sigma}
    = \frac12
    \begin{bmatrix}
        N_2 \mathbf{D}_{N_1}^\top(\boldsymbol{\Sigma}_1^{-1}\otimes\boldsymbol{\Sigma}_1^{-1})\mathbf{D}_{N_1}
        & \mathbf{c}_1\mathbf{c}_2^\top \\[3pt]
        \mathbf{c}_2\mathbf{c}_1^\top
        & N_1 \mathbf{D}_{N_2}^\top(\boldsymbol{\Sigma}_2^{-1}\otimes\boldsymbol{\Sigma}_2^{-1})\mathbf{D}_{N_2}
    \end{bmatrix}.
\end{gathered}
\end{equation}
Each equation is pre- and post-multiplied by $\boldsymbol{\Sigma}_i^{-1}$ and then vectorized with $\tfrac{1}{2}\mathbf{D}_{N_i}^\top$.
Both steps can be undone on symmetric arguments, so the two forms have the same solution.

\paragraph{The normalizations.}
The $p+1$ normalizations of Section \ref{app:score}, $\|\mathbf{A}_j\| = 1$ for
$j = 1, \ldots, p$ and $\|\boldsymbol{\Sigma}_1\| = 1$, are imposed on the estimates as
well as on the true parameters.
From $\|\widehat{\mathbf{A}}_j\| = \|\mathbf{A}_j\| = 1$,
\begin{equation*}
    \|\widehat{\mathbf{a}}_j - \mathbf{a}_j\|^2 = 2 - 2\widehat{\mathbf{a}}_j^\top\mathbf{a}_j
    \qquad\Longrightarrow\qquad
    \mathbf{a}_j^\top(\widehat{\mathbf{a}}_j - \mathbf{a}_j) = -\frac{1}{2}\|\widehat{\mathbf{a}}_j - \mathbf{a}_j\|^2 = O_p(T^{-1}),
\end{equation*}
and identically, from $\|\widehat{\boldsymbol{\Sigma}}_1\| = \|\boldsymbol{\Sigma}_1\| = 1$, $\langle\boldsymbol{\Sigma}_1, \Delta_1\rangle = -\frac{1}{2}\|\Delta_1\|^2 = O_p(T^{-1})$, where $\langle\mathbf{X},\mathbf{Y}\rangle = \operatorname{tr}(\mathbf{X}\mathbf{Y})$ for symmetric arguments.
Collecting the $p+1$ gradients in the matrix $\boldsymbol{\Gamma}$ of \eqref{eq:Gamma}, whose covariance row is $\mathbf{D}_{N_1}^\top\operatorname{vec}(\boldsymbol{\Sigma}_1)$ because the trace counts off-diagonal entries twice, these read
\begin{equation*}
    \boldsymbol{\Gamma}^\top (\widehat{\boldsymbol{\zeta}} - \boldsymbol{\zeta}) = O_p(T^{-1})
    \qquad\text{componentwise.}
\end{equation*}
Completing the system by premultiplying the stacked system by $\begin{bmatrix}\mathbf{I} & \boldsymbol{\Gamma} \end{bmatrix}$ adds $\boldsymbol{\Gamma}\boldsymbol{\Gamma}^\top(\widehat{\boldsymbol{\zeta}} - \boldsymbol{\zeta})$ to the left-hand side at a cost of $O_p(T^{-1})$, which is the rank-$(p+1)$ correction $\mathbf{H} = \mathbf{H}_0 +
\boldsymbol{\Gamma}\boldsymbol{\Gamma}^\top$ of Section \ref{app:score}.

\paragraph{The stacked system and the limit.}
Combining \eqref{eqA:coefsys} and \eqref{eqA:covraw}, replacing the coefficient sample average by its limit and adding the correction of Step 3,
\begin{equation}\label{eqA:stacked}
    \mathbf{H} (\widehat{\boldsymbol{\zeta}} - \boldsymbol{\zeta})
    = \frac{1}{T}\sum_{t=1}^{T}\mathbf{s}_t + O_p(T^{-1}),
\end{equation}
with $\mathbf{s}_t = (\mathbf{s}_{\beta,t}^\top, \mathbf{s}_{\sigma,t}^\top)^\top$ the score of \eqref{eq:score_beta}--\eqref{eq:score_sigma}.
The system is block diagonal because the two off-diagonal blocks were discarded in Steps 1 and 2, both as consequences of $\mathbb{E}(\mathbf{X}_t\mathbf{U}_t^\top) = \mathbf{0}$.

By construction $\{\mathbf{s}_t, \mathcal{F}_t\}$ is a stationary ergodic martingale difference sequence.
$\mathbf{W}_t$ is $\mathcal{F}_{t-1}$-measurable with $\mathbb{E}(\mathbf{u}_t\mid\mathcal{F}_{t-1}) = \mathbf{0}$, and $\mathbb{E}(\mathbf{G}_{it}) = \mathbf{0}$ by \eqref{eqA:UMU}.
Its second moment $\boldsymbol{\Lambda} = \mathbb{E}(\mathbf{s}_t\mathbf{s}_t^\top)$ is finite under Assumption \ref{ass:iid-plus}, the coefficient block needing only $\mathbb{E}\|\mathbf{X}_t\|^2 \mathbb{E}\|\mathbf{U}_t\|^2$ and the covariance block, being quadratic in $\mathbf{U}_t$, the fourth moment.
Applying the martingale central limit theorem of \citet{billingsley_lindeberg-levy_1961}, $T^{-1/2}\sum_t\mathbf{s}_t \xrightarrow{d} N(\mathbf{0}, \boldsymbol{\Lambda})$, so that \eqref{eqA:stacked} gives \eqref{eq:an} with $\boldsymbol{\Omega} = \mathbf{H}^{-1}\boldsymbol{\Lambda}\mathbf{H}^{-1}$, which is \eqref{eq:Omega}.

In the coefficient block, the matrix on the left of \eqref{eqA:coefsys} converges to $\boldsymbol{\Xi} = \mathbb{E}(\mathbf{W}_t\boldsymbol{\Sigma}^{-1}\mathbf{W}_t^\top)$, so $\mathbf{H}_{0,\beta} = \boldsymbol{\Xi}$, and $\mathbb{E}(\mathbf{u}_t\mathbf{u}_t^\top \mid \mathcal{F}_{t-1}) = \boldsymbol{\Sigma}$ gives $\boldsymbol{\Lambda}_\beta = \boldsymbol{\Xi}$ as well.
Hence,
\begin{align*}
\boldsymbol{\Omega}_\beta = \mathbf{H}_\beta^{-1}\boldsymbol{\Xi}\mathbf{H}_\beta^{-1} \qquad \text{with} \qquad \mathbf{H}_\beta = \boldsymbol{\Xi} + \boldsymbol{\Gamma}_\beta\boldsymbol{\Gamma}_\beta^\top
\end{align*}

\end{proof}

\subsubsection{From \texorpdfstring{$(\widehat{\mathbf{A}}, \widehat{\mathbf{B}})$}{(A,B)} to \texorpdfstring{$\operatorname{vec}(\widehat{\mathbf{C}})$}{vec(C)}}\label{app:jacobian}

Proposition \ref{prop:an} delivers the joint 
distribution for $\widehat{\boldsymbol{\zeta}}$, but impulse responses depend on the coefficients only through $\mathbf{C} = [\mathbf{C}_1 \cdots \mathbf{C}_p]$.
The passage from one to the other is a composition of four linear maps, and identifies the Jacobian $\mathbf{J}_\beta$ of \eqref{eq:OmegaC}.

First, the commutation matrix $\mathbf{K} = \operatorname{blkdiag}(\mathbf{I}_{N_1^2p}, \mathbf{K}^*)$, with $\mathbf{K}^*$ defined by $\operatorname{vec}(\mathbf{B}) = \mathbf{K}^*\operatorname{vec}(\mathbf{B}^\top)$, converts the second block of $\boldsymbol{\beta}$ from $\operatorname{vec}(\mathbf{B}^\top)$ to $\operatorname{vec}(\mathbf{B})$.
Second, expanding $\operatorname{vec}(\widehat{\mathbf{B}})\otimes\operatorname{vec}(\widehat{\mathbf{A}})$ about $\operatorname{vec}(\mathbf{B})\otimes\operatorname{vec}(\mathbf{A})$ and using $\mathbf{x}\otimes\mathbf{y} = (\mathbf{I}\otimes\mathbf{y})\mathbf{x}$,
\begin{equation*}
    \operatorname{vec}(\widehat{\mathbf{B}})\otimes\operatorname{vec}(\widehat{\mathbf{A}})
    - \operatorname{vec}(\mathbf{B})\otimes\operatorname{vec}(\mathbf{A})
    = \mathbf{V}
    \begin{bmatrix}
        \operatorname{vec}(\Delta_A) \\ \operatorname{vec}(\Delta_B)
    \end{bmatrix}
    + \operatorname{vec}(\Delta_B)\otimes\operatorname{vec}(\Delta_A),
\end{equation*}
where $\mathbf{V} = \begin{bmatrix} \operatorname{vec}(\mathbf{B})\otimes\mathbf{I}_{N_1^2p} & \mathbf{I}_{N_2^2p}\otimes\operatorname{vec}(\mathbf{A})\end{bmatrix}$ and the last term is $O_p(T^{-1})$.
Third, the permutation $\boldsymbol{\Pi}$ defined by
\begin{equation*}
    \boldsymbol{\Pi}^\top\operatorname{vec}(\mathbf{B}\otimes\mathbf{A})
    = \operatorname{vec}(\mathbf{B})\otimes\operatorname{vec}(\mathbf{A})
\end{equation*}
returns the vectorized Kronecker product itself \citep[Sec. 3]{van_loan_approximation_1993}.
Together these give
\begin{equation}\label{eq:an_kron_p}
    \sqrt{T}\big(\operatorname{vec}(\widehat{\mathbf{B}}\otimes\widehat{\mathbf{A}})
    - \operatorname{vec}(\mathbf{B}\otimes\mathbf{A})\big)
    \xrightarrow{d} N\big(\mathbf{0},
    \boldsymbol{\Pi}\mathbf{V}\mathbf{K} \boldsymbol{\Omega}_\beta
    \mathbf{K}^\top\mathbf{V}^\top\boldsymbol{\Pi}^\top\big).
\end{equation}

The fourth map has no counterpart at $p = 1$.
The full Kronecker product $\mathbf{B}\otimes\mathbf{A}\in\mathbb{R}^{N_1N_2\times N_1N_2p^2}$ contains cross-lag blocks $\mathbf{B}_i\otimes\mathbf{A}_j$ for $i\neq j$, which the MAR($p$) does not admit, so only the diagonal blocks $\mathbf{C}_j = \mathbf{B}_j\otimes\mathbf{A}_j$ are to be retained.
Define the selection matrix $\mathbf{Q}\in\mathbb{R}^{N_1N_2p\times N_1N_2p^2}$ satisfying
\begin{equation}\label{eq:Q_def}
    \mathbf{Q}\operatorname{vec}(\mathbf{X}_t) =
    \begin{bmatrix}
        \operatorname{vec}(\mathbf{Y}_{t-1}) \\ \vdots \\ \operatorname{vec}(\mathbf{Y}_{t-p})
    \end{bmatrix}
    = \mathbf{x}_t \in \mathbb{R}^{N_1N_2p},
\end{equation}
which keeps the rows of $\operatorname{vec}(\mathbf{X}_t)$ belonging to the diagonal blocks of the block-diagonal regressor and discards the zeros.
The columns of $\mathbf{Q}^\top$ select exactly the columns of $\mathbf{B}\otimes\mathbf{A}$ belonging to those blocks, so $(\mathbf{B}\otimes\mathbf{A})\mathbf{Q}^\top = \mathbf{C}$ and $\operatorname{vec}(\mathbf{C}) = (\mathbf{Q}\otimes\mathbf{I}_{N_1N_2})\operatorname{vec}(\mathbf{B}\otimes\mathbf{A})$.
Applying this to \eqref{eq:an_kron_p},
\begin{equation}\label{eq:an_C_app}
    \sqrt{T}\big(\operatorname{vec}(\widehat{\mathbf{C}}) - \operatorname{vec}(\mathbf{C})\big)
    \xrightarrow{d} N\big(\mathbf{0}, \boldsymbol{\Omega}_C\big),
    \qquad
    \boldsymbol{\Omega}_C = \mathbf{J}_\beta \boldsymbol{\Omega}_\beta \mathbf{J}_\beta^\top,
\end{equation}
which is \eqref{eq:OmegaC} with the Jacobian identified as
\begin{equation*}
    \mathbf{J}_\beta
    = \frac{\partial\operatorname{vec}(\mathbf{C})}{\partial\boldsymbol{\beta}^\top}
    = (\mathbf{Q}\otimes\mathbf{I}_{N_1N_2}) \boldsymbol{\Pi} \mathbf{V} \mathbf{K}
    \in \mathbb{R}^{(N_1N_2)^2p \times p(N_1^2+N_2^2)}.
\end{equation*}
Since $\mathbf{J}_\beta$ maps a space of dimension $p(N_1^2+N_2^2)$ into one of dimension $p(N_1N_2)^2$, $\boldsymbol{\Omega}_C$ is singular, with rank at most $p(N_1^2+N_2^2-1)$ once the $p$ flat directions of $\boldsymbol{\Omega}_\beta$ are accounted for.
When $p = 1$ we have $\mathbf{Q} = \mathbf{I}$ and \eqref{eq:an_C_app} reduces to the Kronecker limit of \citet{chen_autoregressive_2021}.

\subsection{Bootstrap Inference}\label{app:bootstrap}

\subsubsection{Bootstrap Setup and Preliminary Lemmas}\label{app:lemmas}

We write
\begin{equation}\label{eqB:Semp}
    \widehat{\boldsymbol{\Sigma}}_{emp} = \frac{1}{T}\sum_{t=1}^{T}\widehat{\mathbf{u}}_t\widehat{\mathbf{u}}_t^\top,
    \qquad
    \widehat{\mathbf{S}}_T = \widehat{\boldsymbol{\Sigma}}_{emp} - \widehat{\boldsymbol{\Sigma}}_2 \otimes \widehat{\boldsymbol{\Sigma}}_1.
\end{equation}
The bootstrap sample is generated recursively as
\begin{equation}\label{eqB:dgp}
    \mathbf{y}^{*}_{t} = \sum_{j=1}^{p}\widehat{\mathbf{C}}_j \mathbf{y}^{*}_{t-j} + \mathbf{u}^{*}_{t},
    \qquad t = 1,\ldots,T,
    \qquad \mathbf{y}^{*}_{-p+1} = \cdots = \mathbf{y}^{*}_{0} = \mathbf{0},
\end{equation}
with $\mathbf{u}^{*}_{1},\ldots,\mathbf{u}^{*}_{T}$ drawn independently from the empirical distribution of $\{\widehat{\mathbf{u}}_{t}\}_{t=1}^{T}$.
The MMA coefficients \eqref{eq:mma} propagate a shock to $\mathbf{y}_t$.
The corresponding coefficients for the regressor stack are obtained by collecting $p$ of them,
\begin{equation}\label{eqB:Rdef}
    \mathbf{D}_{j} =
    \begin{bmatrix}
        \boldsymbol{\Theta}_{j-1}^{\top} & \cdots & \boldsymbol{\Theta}_{j-p}^{\top}
    \end{bmatrix}^{\top}
    \in \mathbb{R}^{pN \times N},
    \qquad N = N_1 N_2,
\end{equation}
with the convention $\boldsymbol{\Theta}_{h} = \mathbf{0}$ for $h < 0$, so that $\mathbf{x}_{t} = \sum_{j \geq 1} \mathbf{D}_{j}\mathbf{u}_{t-j}$ and $\|\mathbf{D}_{j}\|^{2} = \sum_{i=1}^{p}\|\boldsymbol{\Theta}_{j-i}\|^{2}$.
We write $\widehat{\boldsymbol{\Theta}}_{j}$ and $\widehat{\mathbf{D}}_{j}$ for the same objects formed from $\widehat{\mathbf{C}}$.
We write $\mathbf{X}^{*}_t$, $\mathbf{x}^{*}_t$, $\mathbf{W}^{*}_t$, $\mathbf{s}^{*}_t$, $\mathbf{H}^{*}$, $\boldsymbol{\Gamma}^{*}$ for the bootstrap counterparts of the objects of Appendix \ref{app:asymptotic}, all evaluated at $\widehat{\boldsymbol{\zeta}}$, and set $\mathcal{F}^{*}_{t} = \sigma(\mathbf{u}^{*}_{1},\ldots,\mathbf{u}^{*}_{t})$.
Likewise, $\Delta^{*}_{A} = \widehat{\mathbf{A}}^{*} - \widehat{\mathbf{A}}$, $\Delta^{*}_{B} = \widehat{\mathbf{B}}^{*} - \widehat{\mathbf{B}}$ and $\Delta^{*}_{i} = \widehat{\boldsymbol{\Sigma}}^{*}_{i} - \widehat{\boldsymbol{\Sigma}}_{i}$ ($i = 1, 2$) denote the bootstrap deviations, and $\boldsymbol{\Upsilon}^{*}_{t} = \Delta^{*}_{A}\mathbf{X}^{*}_{t}\widehat{\mathbf{B}}^{\top} + \widehat{\mathbf{A}}\mathbf{X}^{*}_{t}\Delta^{*\top}_{B} + \Delta^{*}_{A}\mathbf{X}^{*}_{t}\Delta^{*\top}_{B}$ the counterpart of $\boldsymbol{\Upsilon}_{t}$, so that $\widehat{\mathbf{U}}^{*}_{t} = \mathbf{U}^{*}_{t} - \boldsymbol{\Upsilon}^{*}_{t}$.
For a bootstrap quantity $\mathbf{Z}^{*}_{T}$ we write $\mathbf{Z}^{*}_{T} \xrightarrow{p^{*}}_{p} \mathbf{z}$ when $P^{*}(\|\mathbf{Z}^{*}_{T}-\mathbf{z}\|>\epsilon) \xrightarrow{p} 0$ for every $\epsilon>0$, and $\mathbf{Z}^{*}_{T} = o_{p^{*}}(1)$ for the same statement with $\mathbf{z}=\mathbf{0}$.
We use without further comment the conditional Markov and Chebyshev inequalities, which turn a statement about $\mathbb{E}^{*}$ and $\operatorname{Var}^{*}$ into one about $\xrightarrow{p^{*}}_{p}$; the conditional Slutsky lemma and continuous mapping theorem; the martingale central limit theorem for triangular arrays of \citet{brown_martingale_1971}, which replaces \citet{billingsley_lindeberg-levy_1961} because $\{\mathbf{y}^{*}_t\}$ is not stationary; and P\'olya's theorem in probability.

\begin{lemma}[Geometric decay]\label{lem:decay}
Let Assumption \ref{ass:stationary} hold and let $\mathbf{F}$, $\widehat{\mathbf{F}}$ be the companion matrices of $\mathbf{C}$, $\widehat{\mathbf{C}}$.
For every $\bar{\rho} \in (\rho(\mathbf{F}), 1)$ there is a finite $K$ such that, with probability approaching one,
\begin{equation}\label{eqB:decay}
    \max\big\{\|\boldsymbol{\Theta}_{j}\|,\; \|\widehat{\boldsymbol{\Theta}}_{j}\|,\; \|\mathbf{D}_{j}\|,\; \|\widehat{\mathbf{D}}_{j}\|\big\}
    \leq K\bar{\rho}^{\,j},
    \qquad j \geq 0 .
\end{equation}
\end{lemma}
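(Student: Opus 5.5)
The plan is to express every object in \eqref{eqB:decay} through powers of the companion matrix and then use a spectral-radius argument that is uniform in a neighbourhood of $\mathbf{F}$. Let $\mathbf{J} = [\mathbf{I}_N, \mathbf{0}, \ldots, \mathbf{0}] \in \mathbb{R}^{N \times pN}$. Iterating the recursion \eqref{eq:mma} in companion form gives $\boldsymbol{\Theta}_j = \mathbf{J}\mathbf{F}^j\mathbf{J}^\top$ for $j \geq 0$, and likewise $\widehat{\boldsymbol{\Theta}}_j = \mathbf{J}\widehat{\mathbf{F}}^j\mathbf{J}^\top$. Hence $\|\boldsymbol{\Theta}_j\| \leq \|\mathbf{J}\|^2\|\mathbf{F}^j\|$, and it suffices to bound $\|\mathbf{F}^j\|$ and $\|\widehat{\mathbf{F}}^j\|$ by a constant times $\bar{\rho}^{\,j}$.

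For $\mathbf{D}_j$, the identity $\|\mathbf{D}_j\|^2 = \sum_{i=1}^p\|\boldsymbol{\Theta}_{j-i}\|^2$ together with the convention $\boldsymbol{\Theta}_h = \mathbf{0}$ for $h<0$ gives $\|\mathbf{D}_j\|^2 \leq pK_0^2\bar{\rho}^{\,2(j-p)}$. This follows from the bound $\|\boldsymbol{\Theta}_h\| \leq K_0\bar{\rho}^{\,h}$ and the fact that $\bar\rho^{\,j-i}\le \bar\rho^{\,j-p}$ for $i\le p$. Enlarging the constant to $K = \sqrt{p}\,K_0\bar{\rho}^{-p}$ absorbs the finite shift. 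The same step applies verbatim to $\widehat{\mathbf{D}}_j$.

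The core step, and the only one that needs care, is a bound on $\|\widehat{\mathbf{F}}^j\|$ that is uniform over all $j$ on an event of probability approaching one. I would avoid Jordan forms, because they are not continuous in the matrix. Instead, fix $\rho_1$ with $\rho(\mathbf{F}) < \rho_1 < \bar{\rho}$. Gelfand's formula gives an integer $m$ with $\|\mathbf{F}^m\| \leq \rho_1^m$. The map $\mathbf{G}\mapsto\mathbf{G}^m$ is continuous, so there is $\eta>0$ such that $\|\mathbf{G}-\mathbf{F}\|<\eta$ implies $\|\mathbf{G}^m\| \leq \bar{\rho}^{\,m}$ and $\|\mathbf{G}\| \leq \|\mathbf{F}\|+1$. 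For any such $\mathbf{G}$, write $j = qm + r$ with $0\le r<m$. Submultiplicativity then gives
\[
\|\mathbf{G}^j\| \leq \|\mathbf{G}^m\|^q\|\mathbf{G}\|^r \leq \bar{\rho}^{\,qm}(\|\mathbf{F}\|+1)^r \leq \big[(\|\mathbf{F}\|+1)/\bar{\rho}\big]^{m}\,\bar{\rho}^{\,j},
\]
so a single constant $K_1$ works for every $\mathbf{G}$ in the $\eta$-ball, including $\mathbf{F}$ itself.

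Finally, Proposition \ref{prop:an} gives $\widehat{\mathbf{C}} \xrightarrow{p} \mathbf{C}$. Since $\widehat{\mathbf{F}}$ differs from $\mathbf{F}$ only in its first block row, $\|\widehat{\mathbf{F}}-\mathbf{F}\| = \|\widehat{\mathbf{C}}-\mathbf{C}\|$, and therefore $P(\|\widehat{\mathbf{F}}-\mathbf{F}\|<\eta)\to 1$. On this event all four bounds hold simultaneously with $K = \max\{1,\sqrt{p}\,\bar\rho^{-p}\}\,\|\mathbf{J}\|^2K_1$, which proves \eqref{eqB:decay}.

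The main obstacle is getting uniformity in $j$ and in the random $\widehat{\mathbf{F}}$ at the same time. Continuity of eigenvalues alone does not deliver this, because the constant in a Jordan-based bound can explode near defective matrices. The fixed-block-length $m$ device above is designed to avoid that problem.
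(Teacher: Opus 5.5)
Your proof is correct, and it handles the one delicate point (uniformity in $j$ and over a neighbourhood of $\mathbf{F}$) by a different device than the paper.

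The paper takes the similarity $\mathbf{S}$ built in the proof of Lemma 5.6.10 of Horn and Johnson (Schur triangularization followed by a diagonal scaling). This makes the fixed norm $\mathbf{M}\mapsto\|\mathbf{S}^{-1}\mathbf{M}\mathbf{S}\|_1$ satisfy $\|\mathbf{S}^{-1}\mathbf{F}\mathbf{S}\|_1\le\rho(\mathbf{F})+\varepsilon$. This norm is submultiplicative and Lipschitz in $\mathbf{M}$ with constant $\kappa=\|\mathbf{S}\|_1\|\mathbf{S}^{-1}\|_1$. So every $\mathbf{M}$ with $\kappa\|\mathbf{M}-\mathbf{F}\|_1\le\varepsilon$ is a one-step contraction with factor $\bar\rho$ in that norm. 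This gives $\|\mathbf{M}^j\|_1\le\kappa\bar\rho^{\,j}$ with no remainder term, and a factor $\sqrt{pN}$ converts to the Frobenius norm.

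You instead apply Gelfand's formula at a fixed block length $m$ and use continuity of $\mathbf{G}\mapsto\mathbf{G}^m$. You pay for the remainder $\mathbf{G}^r$, $r<m$, with the crude factor $[(\|\mathbf{F}\|+1)/\bar\rho]^m$, all directly in the Frobenius norm. In both arguments the certificate (the matrix $\mathbf{S}$, or the pair $(m,\eta)$) depends only on $\mathbf{F}$, which is what avoids the Jordan-form instability you correctly flag.

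What each route buys:
\begin{itemize}
\item Yours relies only on the statement of Gelfand's formula, valid in any matrix norm, rather than on the construction inside a textbook proof. You also make explicit the factor $\bar\rho^{-p}$ that absorbs the lag shift in $\mathbf{D}_j$, which the paper leaves implicit.
\item The paper's gives a quantitative neighbourhood, of radius $\varepsilon/\kappa$ in $\|\cdot\|_1$, whereas your $\eta$ comes from a non-constructive continuity argument.
\end{itemize}

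One cosmetic slip: the Frobenius norm of the identity is $\sqrt{pN}$, not $1$, so your display $\|\mathbf{G}^j\|\le\|\mathbf{G}^m\|^q\|\mathbf{G}\|^r$ fails at $j=0$ as written. Taking $K_1\ge\sqrt{pN}$ repairs it. Your final bound on $\boldsymbol{\Theta}_0$ holds regardless, since $\|\boldsymbol{\Theta}_0\|=\sqrt{N}\le N K_1=\|\mathbf{J}\|^2K_1$.
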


\begin{proof}
Set $\varepsilon = \tfrac{1}{2}(\bar{\rho} - \rho(\mathbf{F}))$.
By the proof of Lemma 5.6.10 in \citet{horn_matrix_2013} there is a nonsingular
$\mathbf{S} \in \mathbb{C}^{pN \times pN}$ with
$\|\mathbf{S}^{-1}\mathbf{F}\mathbf{S}\|_{1} \leq \rho(\mathbf{F}) + \varepsilon$.
Hence, with $\kappa = \|\mathbf{S}\|_{1}\|\mathbf{S}^{-1}\|_{1}$, every $\mathbf{M}$ with
$\kappa\|\mathbf{M} - \mathbf{F}\|_{1} \leq \varepsilon$ satisfies
\begin{align*}
    \big\|\mathbf{S}^{-1}\mathbf{M}\mathbf{S}\big\|_{1} &= \|\mathbf{S}^{-1} \mathbf{M} \mathbf{S} - \mathbf{S}^{-1} \mathbf{F} \mathbf{S} + \mathbf{S}^{-1} \mathbf{F} \mathbf{S} \|_1 \\
							&\leq \| \mathbf{S}^{-1} (\mathbf{M} - \mathbf{F}) \mathbf{S} \|_1 + \| \mathbf{S}^{-1} \mathbf{F} \mathbf{S} \|_1 \\
							&\leq \| \mathbf{S} \|_1 \| \mathbf{S}^{-1} \|_1 \| \mathbf{M} - \mathbf{F} \|_1 + \| \mathbf{S}^{-1} \mathbf{F} \mathbf{S} \|_1 \\
    &= \big\|\mathbf{S}^{-1}\mathbf{F}\mathbf{S}\big\|_{1} + \kappa\big\|\mathbf{M} - \mathbf{F}\big\|_{1} \\
    &\leq \rho(\mathbf{F}) + 2\varepsilon = \bar{\rho}.
\end{align*}
Taking powers of $\mathbf{M}$ yields
\begin{align*}
    \big\|\mathbf{M}^{j}\big\|_{1} = \big\|\mathbf{S}\big(\mathbf{S}^{-1}\mathbf{M}\mathbf{S}\big)^{j}\mathbf{S}^{-1}\big\|_{1}
    \leq \kappa\bar{\rho}^{\,j}.
\end{align*}
Proposition \ref{prop:an} gives $\widehat{\mathbf{C}} \xrightarrow{p} \mathbf{C}$, so
$P(\kappa\|\widehat{\mathbf{F}} - \mathbf{F}\|_{1} \leq \varepsilon) \to 1$, and \eqref{eqB:decay}
follows from $\boldsymbol{\Theta}_{j} = \mathbf{J}\mathbf{F}^{j}\mathbf{J}^{\top}$ with
$\mathbf{J} = [\mathbf{I}_{N}\; \mathbf{0} \cdots \mathbf{0}]$, together with
$\|\widehat{\mathbf{D}}_{j}\|^{2} = \sum_{i=1}^{p}\|\widehat{\boldsymbol{\Theta}}_{j-i}\|^{2}$ and
$\|\cdot\| \leq \sqrt{pN}\|\cdot\|_{1}$.
\end{proof}

\begin{lemma}[Bootstrap innovations]\label{lem:bootmom}
Let Assumptions \ref{ass:stationary} and \ref{ass:iid-plus} hold.
Then
\begin{equation*}
    \mathbb{E}^{*}(\mathbf{u}^{*}_{t}) = \mathbf{0},
    \qquad
    \mathbb{E}^{*}\big(\mathbf{u}^{*}_{t}\mathbf{u}^{*\top}_{t}\big) = \widehat{\boldsymbol{\Sigma}}_{emp} \xrightarrow{p} \boldsymbol{\Sigma},
    \qquad
    \mu^{*}_{4} := \mathbb{E}^{*}\big\|\mathbf{u}^{*}_{t}\big\|^{4} = O_{p}(1).
\end{equation*}
\end{lemma}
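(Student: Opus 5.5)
The first two claims are immediate from the resampling scheme. Conditional on the sample, each $\mathbf{u}^{*}_t$ is uniform on $\{\widehat{\mathbf{u}}_1,\ldots,\widehat{\mathbf{u}}_T\}$, so
\begin{equation*}
\mathbb{E}^{*}(\mathbf{u}^{*}_t) = T^{-1}\sum_t \widehat{\mathbf{u}}_t, \qquad \mathbb{E}^{*}(\mathbf{u}^{*}_t\mathbf{u}^{*\top}_t) = \widehat{\boldsymbol{\Sigma}}_{emp}, \qquad \mathbb{E}^{*}\|\mathbf{u}^{*}_t\|^4 = T^{-1}\sum_t\|\widehat{\mathbf{u}}_t\|^4 .
\end{equation*}
For the mean, I will read the statement as referring to recentred residuals, which the algorithm implicitly uses since the series are demeaned. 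If the residuals are not recentred, I will show that $T^{-1}\sum_t\widehat{\mathbf{u}}_t = O_p(T^{-1/2})$ and carry this as the demeaning remainder mentioned in the text. In that case I note that recentring changes $\widehat{\boldsymbol{\Sigma}}_{emp}$ and the fourth moment only by $O_p(T^{-1})$. So the real content of the lemma is the two probability limits.

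For these I write $\widehat{\mathbf{u}}_t = \mathbf{u}_t - (\widehat{\mathbf{C}}-\mathbf{C})\mathbf{x}_t$, which holds because $\widehat{\mathbf{u}}_t = \mathbf{y}_t - \widehat{\mathbf{C}}\mathbf{x}_t$. Expanding gives
\begin{equation*}
\widehat{\boldsymbol{\Sigma}}_{emp} = \frac1T\sum_t \mathbf{u}_t\mathbf{u}_t^\top - (\widehat{\mathbf{C}}-\mathbf{C})\frac1T\sum_t\mathbf{x}_t\mathbf{u}_t^\top - \Big(\frac1T\sum_t\mathbf{u}_t\mathbf{x}_t^\top\Big)(\widehat{\mathbf{C}}-\mathbf{C})^\top + (\widehat{\mathbf{C}}-\mathbf{C})\widehat{\boldsymbol{\Phi}}(\widehat{\mathbf{C}}-\mathbf{C})^\top .
\end{equation*}
The first term tends to $\boldsymbol{\Sigma}$ by the law of large numbers for i.i.d.\ $\mathbf{u}_t$ with finite second moments (Assumption \ref{ass:iid-plus}). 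The cross terms are $O_p(T^{-1/2})\cdot O_p(T^{-1/2})$ by Proposition \ref{prop:an}, through \eqref{eq:OmegaC}, and by \eqref{eqA:mds}. The last term is $O_p(T^{-1})$ because $\widehat{\boldsymbol{\Phi}} = O_p(1)$ by the ergodic theorem. Hence $\widehat{\boldsymbol{\Sigma}}_{emp}\xrightarrow{p}\boldsymbol{\Sigma}$. The same expansion also gives the bound on $T^{-1}\sum_t\widehat{\mathbf{u}}_t$ needed for the mean claim.

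For the fourth moment I use $\|\mathbf{a}+\mathbf{b}\|^4\le 8(\|\mathbf{a}\|^4+\|\mathbf{b}\|^4)$, which gives
\begin{equation*}
\mu_4^{*} \le 8\,T^{-1}\sum_t\|\mathbf{u}_t\|^4 + 8\,\|\widehat{\mathbf{C}}-\mathbf{C}\|^4\, T^{-1}\sum_t\|\mathbf{x}_t\|^4 .
\end{equation*}
The first average is $O_p(1)$ by Markov's inequality, since $\mathbb{E}\|\mathbf{u}_t\|^4<\infty$. For the second, the moving-average representation $\mathbf{x}_t=\sum_{j\ge1}\mathbf{D}_j\mathbf{u}_{t-j}$, together with the geometric decay of Lemma \ref{lem:decay} (the deterministic part suffices), gives $\mathbb{E}\|\mathbf{x}_t\|^4<\infty$ by Minkowski's inequality in $L^4$: $(\mathbb{E}\|\mathbf{x}_t\|^4)^{1/4}\le\sum_j\|\mathbf{D}_j\|(\mathbb{E}\|\mathbf{u}\|^4)^{1/4}<\infty$. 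So that average is $O_p(1)$ as well, and it is multiplied by $\|\widehat{\mathbf{C}}-\mathbf{C}\|^4=O_p(T^{-2})$. Therefore $\mu^{*}_4=O_p(1)$.

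The only delicate point is the mean-zero claim, which holds exactly only after recentring. Everything else is routine once the rate from Lemma \ref{lem:rate} and the predeterminedness \eqref{eqA:mds} are in hand.
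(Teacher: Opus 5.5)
Your proof is correct and follows essentially the paper's route: you reduce the three claims to sample moments of the residuals, obtain the limit of $\widehat{\boldsymbol{\Sigma}}_{emp}$ from consistency of $\widehat{\mathbf{C}}$ together with the predeterminedness \eqref{eqA:mds}, and control the fourth moment through $\mathbb{E}\|\mathbf{x}_t\|^4<\infty$ via Minkowski's inequality and the geometric decay of Lemma \ref{lem:decay}, writing out the residual expansion and the $c_r$ bound that the paper leaves implicit. Your caveat on the first claim is well taken: the paper attributes exact mean zero to the demeaning convention, but residuals from a no-intercept fit on demeaned data average to zero only up to $O_p(T^{-1})$, and exactly only after the explicit recentring performed in Algorithm \ref{alg:bab}.
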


\begin{proof}
Since $P^{*}$ is the empirical distribution of $\{\widehat{\mathbf{u}}_{t}\}_{t=1}^{T}$, the three statements read
\begin{equation*}
    \frac{1}{T}\sum_{t=1}^{T}\widehat{\mathbf{u}}_{t} = \mathbf{0},
    \qquad
    \frac{1}{T}\sum_{t=1}^{T}\widehat{\mathbf{u}}_{t}\widehat{\mathbf{u}}^{\top}_{t} \xrightarrow{p} \boldsymbol{\Sigma},
    \qquad
    \frac{1}{T}\sum_{t=1}^{T}\big\|\widehat{\mathbf{u}}_{t}\big\|^{4} = O_{p}(1),
\end{equation*}
the first holding by the demeaning convention.
By Lemma \ref{lem:decay} and Minkowski's inequality, $(\mathbb{E}\|\mathbf{y}_{t}\|^{4})^{1/4} \leq \sum_{j\geq0}\|\boldsymbol{\Theta}_{j}\|(\mathbb{E}\|\mathbf{u}_{t}\|^{4})^{1/4} < \infty$ under Assumption \ref{ass:iid-plus}, so $\mathbb{E}\|\mathbf{x}_{t}\|^{4} < \infty$ and $\{(\mathbf{u}_{t}, \mathbf{x}_{t})\}$ is stationary and ergodic.
Given that $\widehat{\mathbf{C}} \xrightarrow{p} \mathbf{C}$ by Proposition \ref{prop:an}, the second and third statements follow from \eqref{eqA:mds}.
\end{proof}

\begin{lemma}[Orthogonality]\label{lem:orth}
For all $\mathbf{F}_1 \in \mathbb{S}^{N_1}$ and $\mathbf{F}_2 \in \mathbb{S}^{N_2}$,
\begin{equation*}
    \operatorname{tr}\big[(\widehat{\boldsymbol{\Sigma}}_2^{-1} \otimes \mathbf{F}_1)\widehat{\mathbf{S}}_{T}\big] = 0,
    \qquad
    \operatorname{tr}\big[(\mathbf{F}_2 \otimes \widehat{\boldsymbol{\Sigma}}_1^{-1})\widehat{\mathbf{S}}_{T}\big] = 0 .
\end{equation*}
\end{lemma}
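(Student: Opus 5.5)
The plan is to read both identities as restatements of the covariance first-order conditions \eqref{eqA:foc}, written in vectorized form. Write $\widehat{\mathbf{u}}_t = \operatorname{vec}(\widehat{\mathbf{U}}_t)$ for the MLE residuals. For the first identity, take an arbitrary symmetric $\mathbf{F}_1$ and expand the trace by linearity:
\begin{equation*}
    \operatorname{tr}\big[(\widehat{\boldsymbol{\Sigma}}_2^{-1} \otimes \mathbf{F}_1)\widehat{\boldsymbol{\Sigma}}_{emp}\big]
    - \operatorname{tr}\big[(\widehat{\boldsymbol{\Sigma}}_2^{-1} \otimes \mathbf{F}_1)(\widehat{\boldsymbol{\Sigma}}_2 \otimes \widehat{\boldsymbol{\Sigma}}_1)\big].
\end{equation*}
I treat the two terms separately.

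For the empirical term, apply Lemma~\ref{lem:kron}.2 with $\mathbf{A} = \mathbf{B} = \widehat{\mathbf{U}}_t$, $\mathbf{C} = \mathbf{F}_1$ and $\mathbf{D} = \widehat{\boldsymbol{\Sigma}}_2^{-1}$. This turns each summand into $\operatorname{tr}(\widehat{\mathbf{U}}_t^\top \mathbf{F}_1 \widehat{\mathbf{U}}_t \widehat{\boldsymbol{\Sigma}}_2^{-1})$, which equals $\operatorname{tr}(\mathbf{F}_1 \widehat{\mathbf{U}}_t \widehat{\boldsymbol{\Sigma}}_2^{-1}\widehat{\mathbf{U}}_t^\top)$ by cyclicity (symmetry of $\widehat{\boldsymbol{\Sigma}}_2$ is used here). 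Averaging over $t$ gives $\operatorname{tr}(\mathbf{F}_1 \mathbf{M}_1)$ with $\mathbf{M}_1 = T^{-1}\sum_t \widehat{\mathbf{U}}_t\widehat{\boldsymbol{\Sigma}}_2^{-1}\widehat{\mathbf{U}}_t^\top$.

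For the population term, Lemma~\ref{lem:kron}.3 gives $\operatorname{tr}(\mathbf{I}_{N_2})\operatorname{tr}(\mathbf{F}_1\widehat{\boldsymbol{\Sigma}}_1) = N_2\operatorname{tr}(\mathbf{F}_1\widehat{\boldsymbol{\Sigma}}_1)$. The difference of the two terms is therefore $\operatorname{tr}\big[\mathbf{F}_1(\mathbf{M}_1 - N_2\widehat{\boldsymbol{\Sigma}}_1)\big]$. The third line of \eqref{eqA:foc}, pre- and post-multiplied by $\widehat{\boldsymbol{\Sigma}}_1$ and divided by $T$, states exactly $\mathbf{M}_1 = N_2\widehat{\boldsymbol{\Sigma}}_1$, so the first identity holds.

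The second identity follows in the same way with the roles exchanged. Lemma~\ref{lem:kron}.2 with $\mathbf{C} = \widehat{\boldsymbol{\Sigma}}_1^{-1}$ and $\mathbf{D} = \mathbf{F}_2$ gives $\operatorname{tr}(\widehat{\mathbf{U}}_t^\top\widehat{\boldsymbol{\Sigma}}_1^{-1}\widehat{\mathbf{U}}_t\mathbf{F}_2)$. Lemma~\ref{lem:kron}.3 gives $N_1\operatorname{tr}(\mathbf{F}_2\widehat{\boldsymbol{\Sigma}}_2)$. The fourth line of \eqref{eqA:foc} then says $T^{-1}\sum_t\widehat{\mathbf{U}}_t^\top\widehat{\boldsymbol{\Sigma}}_1^{-1}\widehat{\mathbf{U}}_t = N_1\widehat{\boldsymbol{\Sigma}}_2$, which gives the result.

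The one point needing care is whether the first-order conditions hold exactly despite the normalization $\|\widehat{\boldsymbol{\Sigma}}_1\|=1$. I would handle this as in the proof of Lemma~\ref{lem:rate}: rescaling $(\boldsymbol{\Sigma}_1,\boldsymbol{\Sigma}_2)\mapsto(c\boldsymbol{\Sigma}_1,c^{-1}\boldsymbol{\Sigma}_2)$ leaves the likelihood unchanged. Hence $\widehat{\boldsymbol{\zeta}}$ is an unconstrained stationary point, the Lagrange multiplier vanishes, and \eqref{eqA:foc} holds with equality, not just up to $O_p(T^{-1})$. The symmetry of $\mathbf{F}_i$ is only needed so that the gradient with respect to a symmetric matrix identifies the full matrix condition. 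Because the matrix condition holds entrywise, the identity in fact holds for all $\mathbf{F}_i$.
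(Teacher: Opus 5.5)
Your proof is correct and is essentially the paper's own: Lemma~\ref{lem:kron}.2 and Lemma~\ref{lem:kron}.3 turn the two traces into $\operatorname{tr}\big(\mathbf{F}_1 T^{-1}\sum_t \widehat{\mathbf{U}}_t\widehat{\boldsymbol{\Sigma}}_2^{-1}\widehat{\mathbf{U}}_t^\top\big)$ and $N_2\operatorname{tr}(\mathbf{F}_1\widehat{\boldsymbol{\Sigma}}_1)$, which coincide by the third condition of \eqref{eqA:foc}, and the second identity follows in the same way from the fourth condition. Your justification that these conditions hold exactly (the likelihood is flat along the normalized direction, so the multiplier vanishes) is the one the paper gives in the proof of Proposition~\ref{prop:boot}, and your remark that symmetry of $\mathbf{F}_i$ is not actually needed is also right.
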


\begin{proof}
Fix $\mathbf{F}_1 \in \mathbb{S}^{N_1}$. Then
\begin{align*}
    \operatorname{tr}\big[(\widehat{\boldsymbol{\Sigma}}_2^{-1} \otimes \mathbf{F}_1)\widehat{\boldsymbol{\Sigma}}_{emp}\big]
    &\overset{\text{L\ref{lem:kron}.2}}{=}
    \frac{1}{T}\sum_{t=1}^{T}\operatorname{tr}\big(\mathbf{F}_1\widehat{\mathbf{U}}_t\widehat{\boldsymbol{\Sigma}}_2^{-1}\widehat{\mathbf{U}}_t^\top\big) \\
    &= N_2\operatorname{tr}\big(\mathbf{F}_1\widehat{\boldsymbol{\Sigma}}_1\big) \\
    &\overset{\text{L\ref{lem:kron}.3}}{=}
    \operatorname{tr}\big[(\widehat{\boldsymbol{\Sigma}}_2^{-1} \otimes \mathbf{F}_1)(\widehat{\boldsymbol{\Sigma}}_2 \otimes \widehat{\boldsymbol{\Sigma}}_1)\big],
\end{align*}
where the equality from the first to the second line follows from the third condition of \eqref{eqA:foc}.
Subtracting and using linearity of the trace gives the first claim.
The second is identical, fixing $\mathbf{F}_2 \in \mathbb{S}^{N_2}$ and using $\operatorname{tr}[(\mathbf{F}_2 \otimes \widehat{\boldsymbol{\Sigma}}_1^{-1})\widehat{\mathbf{u}}_t\widehat{\mathbf{u}}_t^\top] = \operatorname{tr}(\mathbf{F}_2\widehat{\mathbf{U}}_t^\top\widehat{\boldsymbol{\Sigma}}_1^{-1}\widehat{\mathbf{U}}_t)$.
\end{proof}

\begin{lemma}[Bootstrap predeterminedness]\label{lem:bootmds}
Let Assumptions \ref{ass:stationary} and \ref{ass:iid-plus} hold.
Then,
\begin{equation}\label{eqB:xmom}
    \mathbb{E}^{*}\big\|\mathbf{x}^{*}_{t}\big\|^{2} \leq K\big\|\widehat{\boldsymbol{\Sigma}}_{emp}\big\| = O_{p}(1),
\end{equation}
and, for every $\mathbf{M}$ with $\|\mathbf{M}\| = O_{p}(1)$,
\begin{equation*}
    \mathbb{E}^{*}\Big\|\frac{1}{T}\sum_{t=1}^{T}\mathbf{U}^{*}_{t}\mathbf{M}\mathbf{X}^{*\top}_{t}\Big\|^{2}
    \leq \frac{K \|\mathbf{M}\|^{2}\big\|\widehat{\boldsymbol{\Sigma}}_{emp}\big\|^{2}}{T},
    \qquad\text{so}\qquad
    \frac{1}{T}\sum_{t=1}^{T}\mathbf{U}^{*}_{t}\mathbf{M}\mathbf{X}^{*\top}_{t} = O_{p^{*}}\big(T^{-1/2}\big).
\end{equation*}
\end{lemma}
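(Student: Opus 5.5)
The plan is to work with the moving average representation of the bootstrap process, conditional on the sample. Since the recursion \eqref{eqB:dgp} starts at zero, it gives exactly
\begin{equation*}
    \mathbf{x}^{*}_{t} = \sum_{j=1}^{t-1}\widehat{\mathbf{D}}_{j}\mathbf{u}^{*}_{t-j},
\end{equation*}
a finite sum with no initial-value term. Here the $\widehat{\mathbf{D}}_j$ are the regressor-stack coefficients of \eqref{eqB:Rdef}, formed from $\widehat{\mathbf{C}}$.

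We work on the event of Lemma \ref{lem:decay}, whose probability tends to one, so that $\|\widehat{\mathbf{D}}_j\| \le K\bar\rho^{\,j}$. The conditional claims are then to hold on this event, which suffices for statements ``in probability''.

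\textbf{Bound \eqref{eqB:xmom}.} Under $P^{*}$ the $\mathbf{u}^{*}_s$ are i.i.d.\ with mean zero and covariance $\widehat{\boldsymbol{\Sigma}}_{emp}$ (Lemma \ref{lem:bootmom}). The cross terms therefore vanish, and
\begin{equation*}
    \mathbb{E}^{*}\|\mathbf{x}^{*}_t\|^2 = \sum_{j} \operatorname{tr}\big(\widehat{\mathbf{D}}_j\widehat{\boldsymbol{\Sigma}}_{emp}\widehat{\mathbf{D}}_j^\top\big) \le \|\widehat{\boldsymbol{\Sigma}}_{emp}\| \sum_j \|\widehat{\mathbf{D}}_j\|^2 \le \frac{K^2}{1-\bar\rho^2}\|\widehat{\boldsymbol{\Sigma}}_{emp}\|.
\end{equation*}
Renaming the constant as $K$ gives the bound, and it is $O_p(1)$ because $\widehat{\boldsymbol{\Sigma}}_{emp}\xrightarrow{p}\boldsymbol{\Sigma}$.

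\textbf{Second claim.} Set $\mathbf{Z}_t = \mathbf{U}^{*}_t\mathbf{M}\mathbf{X}^{*\top}_t$. We treat $\mathbf{M}$ as fixed given the sample, which is how it enters later, through sample-determined quantities such as $\widehat{\boldsymbol{\Sigma}}_2^{-1}\widehat{\mathbf{B}}$. The matrix $\mathbf{X}^{*}_t$ is $\mathcal{F}^{*}_{t-1}$-measurable, while $\mathbf{U}^{*}_t$ is independent of $\mathcal{F}^{*}_{t-1}$ with $\mathbb{E}^{*}\mathbf{U}^{*}_t=\mathbf{0}$. Hence $\{\mathbf{Z}_t\}$ is a martingale difference sequence under $P^{*}$, and the cross terms $\mathbb{E}^{*}\langle \mathbf{Z}_t,\mathbf{Z}_s\rangle$, $s<t$, vanish by conditioning on $\mathcal{F}^{*}_{t-1}$. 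This gives
\begin{equation*}
    \mathbb{E}^{*}\Big\|\frac1T\sum_t \mathbf{Z}_t\Big\|^2 = \frac{1}{T^2}\sum_t \mathbb{E}^{*}\|\mathbf{Z}_t\|^2 .
\end{equation*}
For each term we condition on $\mathcal{F}^{*}_{t-1}$ and use submultiplicativity of the Frobenius norm:
\begin{equation*}
    \mathbb{E}^{*}\|\mathbf{Z}_t\|^2 \le \|\mathbf{M}\|^2\,\mathbb{E}^{*}\|\mathbf{U}^{*}_t\|^2\,\mathbb{E}^{*}\|\mathbf{X}^{*}_t\|^2 .
\end{equation*}
The three factors are controlled as follows.
\begin{itemize}
    \item $\mathbb{E}^{*}\|\mathbf{U}^{*}_t\|^2 = \operatorname{tr}\widehat{\boldsymbol{\Sigma}}_{emp} \le \sqrt{N}\,\|\widehat{\boldsymbol{\Sigma}}_{emp}\|$.
    \item $\|\mathbf{X}^{*}_t\| = \|\mathbf{x}^{*}_t\|$, because the block-diagonal $\mathbf{X}^{*}_t$ contains exactly the entries of $\mathbf{x}^{*}_t$.
    \item $\mathbb{E}^{*}\|\mathbf{x}^{*}_t\|^2 \le K\|\widehat{\boldsymbol{\Sigma}}_{emp}\|$ by \eqref{eqB:xmom}, uniformly in $t$.
\end{itemize}
Summing over $t$ and absorbing constants into $K$ gives the $K\|\mathbf{M}\|^2\|\widehat{\boldsymbol{\Sigma}}_{emp}\|^2/T$ bound. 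The conditional Markov inequality then turns this into $O_{p^{*}}(T^{-1/2})$, in probability, since $\|\mathbf{M}\|$ and $\|\widehat{\boldsymbol{\Sigma}}_{emp}\|$ are $O_p(1)$.

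\textbf{Main obstacle.} The delicate point is uniformity. The constant $K$ must not depend on $t$ or $T$, and the bootstrap process is a triangular array, not stationary. The zero initialization together with the geometric decay of Lemma \ref{lem:decay}, which holds simultaneously for $\widehat{\mathbf{C}}$ on an event of probability approaching one, is exactly what delivers a $t$-uniform bound. We would state the conclusions on that event and note that its complement has vanishing probability.
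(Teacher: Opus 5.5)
Your proposal is correct and follows the paper's proof essentially step for step. The paper also uses the truncated moving-average representation with the geometric decay of Lemma \ref{lem:decay} to obtain \eqref{eqB:xmom}, and martingale-difference orthogonality plus the independence of $\mathbf{U}^{*}_t$ from $\mathbf{X}^{*}_t$ to obtain the $T^{-1}$ bound. Your explicit remarks, that $\mathbf{M}$ must be sample-measurable and that the bounds hold on the high-probability event of Lemma \ref{lem:decay}, only make precise what the paper leaves implicit.
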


\begin{proof}
    We have
\begin{equation*}
    \mathbb{E}^{*}\big\|\mathbf{x}^{*}_{t}\big\|^{2}
    =
    \sum_{j=1}^{t-1}\operatorname{tr}\big(\widehat{\mathbf{D}}_{j}\widehat{\boldsymbol{\Sigma}}_{emp}\widehat{\mathbf{D}}^{\top}_{j}\big)
    \leq \big\|\widehat{\boldsymbol{\Sigma}}_{emp}\big\|\sum_{j\geq1}\big\|\widehat{\mathbf{D}}_{j}\big\|^{2}
    \overset{\text{L\ref{lem:decay}}}{\leq} K\big\|\widehat{\boldsymbol{\Sigma}}_{emp}\big\| ,
\end{equation*}
which is $O_{p}(1)$ by Lemma \ref{lem:bootmom}.
With $\|\mathbf{U}^{*}_{t}\| = \|\mathbf{u}^{*}_{t}\|$ and $\|\mathbf{X}^{*}_{t}\| = \|\mathbf{x}^{*}_{t}\|$ by \eqref{eq:Q_def},
\begin{align*}
    \mathbb{E}^{*}\Big\|\frac{1}{T}\sum_{t=1}^{T}\mathbf{U}^{*}_{t}\mathbf{M}\mathbf{X}^{*\top}_{t}\Big\|^{2} 
    &= \frac{1}{T^{2}}\sum_{t=1}^{T}\mathbb{E}^{*}\big\|\mathbf{U}^{*}_{t}\mathbf{M}\mathbf{X}^{*\top}_{t}\big\|^{2} \\
    &\leq \frac{\|\mathbf{M}\|^{2}}{T^{2}}\sum_{t=1}^{T}
    \operatorname{tr}\big(\widehat{\boldsymbol{\Sigma}}_{emp}\big)\,\mathbb{E}^{*}\big\|\mathbf{x}^{*}_{t}\big\|^{2} \\
    &\overset{\eqref{eqB:xmom}}{\leq}
    \frac{K \|\mathbf{M}\|^{2}\big\|\widehat{\boldsymbol{\Sigma}}_{emp}\big\|^{2}}{T}.
\end{align*}
\end{proof}

\begin{lemma}[Bootstrap LLN]\label{lem:bootlln}
Let Assumptions \ref{ass:stationary} and \ref{ass:iid-plus} hold.
Then
\begin{equation*}
    \frac{1}{T}\sum_{t=1}^{T}\mathbf{x}^{*}_{t}\mathbf{x}^{*\top}_{t}
    \xrightarrow{p^{*}}_{p}
    \boldsymbol{\Phi} := \mathbb{E}\big(\mathbf{x}_{t}\mathbf{x}_{t}^{\top}\big),
    \qquad\text{and}\qquad
    \mathbf{H}^{*} \xrightarrow{p^{*}}_{p} \mathbf{H} .
\end{equation*}
\end{lemma}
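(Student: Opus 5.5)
The plan is to prove the bootstrap LLN by a conditional mean and variance computation based on the moving average representation of the bootstrap process. The convergence $\mathbf{H}^{*} \xrightarrow{p^{*}}_{p} \mathbf{H}$ then follows by continuous mapping.

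\textbf{Step 1: the conditional mean.} Since $\mathbf{y}^{*}_{-p+1} = \cdots = \mathbf{y}^{*}_0 = \mathbf{0}$, the recursion \eqref{eqB:dgp} gives the exact finite representation $\mathbf{x}^{*}_t = \sum_{j=1}^{t-1}\widehat{\mathbf{D}}_j\mathbf{u}^{*}_{t-j}$. Independence of the $\mathbf{u}^{*}_s$ under $P^{*}$ and Lemma \ref{lem:bootmom} then yield
\begin{equation*}
\mathbb{E}^{*}(\mathbf{x}^{*}_t\mathbf{x}^{*\top}_t) = \sum_{j=1}^{t-1}\widehat{\mathbf{D}}_j\widehat{\boldsymbol{\Sigma}}_{emp}\widehat{\mathbf{D}}_j^\top .
\end{equation*}
I compare this with $\boldsymbol{\Phi} = \sum_{j\ge1}\mathbf{D}_j\boldsymbol{\Sigma}\mathbf{D}_j^\top$. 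On the event of Lemma \ref{lem:decay}, the tail $\sum_{j\ge t}$ is bounded by $K^2\bar\rho^{2t}\|\widehat{\boldsymbol{\Sigma}}_{emp}\|$, so averaging over $t$ contributes $O_p(T^{-1})$. For each fixed $j$, $\widehat{\mathbf{D}}_j \xrightarrow{p} \mathbf{D}_j$ by $\widehat{\mathbf{C}}\xrightarrow{p}\mathbf{C}$, and $\widehat{\boldsymbol{\Sigma}}_{emp}\xrightarrow{p}\boldsymbol{\Sigma}$ by Lemma \ref{lem:bootmom}. The geometric bound gives dominated convergence of the series uniformly on that event. Hence $T^{-1}\sum_t\mathbb{E}^{*}(\mathbf{x}^{*}_t\mathbf{x}^{*\top}_t)\xrightarrow{p}\boldsymbol{\Phi}$.

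\textbf{Step 2: the conditional variance (main obstacle).} I need $\operatorname{Var}^{*}$ of each entry of $T^{-1}\sum_t\mathbf{x}^{*}_t\mathbf{x}^{*\top}_t$ to be $O_p(T^{-1})$. Writing the average as $T^{-1}\sum_t\sum_{j,k}\widehat{\mathbf{D}}_j\mathbf{u}^{*}_{t-j}\mathbf{u}^{*\top}_{t-k}\widehat{\mathbf{D}}_k^\top$, I bound the conditional covariance of the summands at times $t$ and $s$. Products of four bootstrap innovations have nonzero centred covariance only when their indices pair up. Each pairing is controlled by $\mu^{*}_4$, or by $\|\widehat{\boldsymbol{\Sigma}}_{emp}\|^2$, times factors $\|\widehat{\mathbf{D}}_j\|\|\widehat{\mathbf{D}}_k\|\|\widehat{\mathbf{D}}_{j'}\|\|\widehat{\mathbf{D}}_{k'}\|$ with the index constraint linking $j,k$ to $j',k'$ through $t-s$. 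Lemma \ref{lem:decay} then gives
\begin{equation*}
|\operatorname{Cov}^{*}| \le K'(\mu^{*}_4+\|\widehat{\boldsymbol{\Sigma}}_{emp}\|^2)\,\bar\rho^{|t-s|}.
\end{equation*}
Summing over $t,s$ gives a variance bound $O_p(T^{-1})$, since $\mu^{*}_4 = O_p(1)$. The careful index bookkeeping in this step is where the fourth moments enter and is the delicate part. An alternative is to truncate at lag $m$ and use $m$-dependence, but I would try the direct covariance bound first. A conditional Chebyshev inequality then combines with Step 1 to give the first claim.

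\textbf{Step 3: the Hessian.} $\mathbf{H}^{*}$ is built from $T^{-1}\sum_t\mathbf{W}^{*}_t\widehat{\boldsymbol{\Sigma}}^{-1}\mathbf{W}^{*\top}_t$, whose entries are fixed linear combinations of entries of $T^{-1}\sum_t\mathbf{X}^{*}_t\otimes\mathbf{X}^{*}_t$ type products, hence of $T^{-1}\sum_t\mathbf{x}^{*}_t\mathbf{x}^{*\top}_t$ via $\mathbf{Q}$. The coefficients of these combinations depend continuously on $\widehat{\boldsymbol{\zeta}}$, as does the covariance block $\mathbf{H}_{0,\sigma}$ in \eqref{eqA:covraw} together with $\boldsymbol{\Gamma}^{*}\boldsymbol{\Gamma}^{*\top}$. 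Since $\widehat{\boldsymbol{\zeta}}\xrightarrow{p}\boldsymbol{\zeta}$ by Proposition \ref{prop:an}, the conditional continuous mapping theorem gives $\mathbf{H}^{*}\xrightarrow{p^{*}}_{p}\mathbf{H}$.
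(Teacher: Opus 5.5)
Your Steps 1--2 prove the first claim correctly, though you organize the argument differently from the paper. The paper substitutes the truncated MMA form, exchanges the order of summation, and splits the average into a same-lag part $\mathbf{M}^{*}_{2}=\sum_{s}\widehat{\mathbf{D}}_{s}\mathbf{V}^{*}_{T}(s)\widehat{\mathbf{D}}_{s}^{\top}$ and cross-lag parts $\mathbf{M}^{*}_{1}=\mathbf{M}^{*\top}_{3}$, built from the mean-zero lagged cross moments $\mathbf{R}^{*}_{T}(h,s)$. Each $\mathbf{V}^{*}_{T}(s)$ and $\mathbf{R}^{*}_{T}(h,s)$ is an average of independent or uncorrelated terms. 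So each piece needs only a second-moment bound of order $\mu^{*}_{4}/T$, Lemma~\ref{lem:decay} sums the pieces, and no four-index bookkeeping arises. You instead bound the conditional variance of the whole average through the nonvanishing configurations of four bootstrap innovations. This is the standard linear-process argument and it goes through; it even gives decay of order $\bar\rho^{2|t-s|}$. The price is more combinatorics. Your Step 1 coincides with the paper's treatment of the leading term of $\mathbf{M}^{*}_{2}$.

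Step 3 has a gap. In the paper, $\mathbf{H}^{*}$ is the matrix multiplying $\widehat{\boldsymbol{\zeta}}^{*}-\widehat{\boldsymbol{\zeta}}$ in the linearized first-order conditions \eqref{eqA:foc} on the bootstrap sample, and only its $\boldsymbol{\beta}$ block is an average of $\mathbf{x}^{*}_{t}\mathbf{x}^{*\top}_{t}$. Its $\boldsymbol{\sigma}$ block is built from bootstrap averages $T^{-1}\sum_{t}\mathbf{U}^{*}_{t}\mathbf{M}\mathbf{U}^{*\top}_{t}$. For instance, the coefficient of $\Delta^{*}_{2}$ is the map $\Delta\mapsto T^{-1}\sum_{t}\mathbf{U}^{*}_{t}\widehat{\boldsymbol{\Sigma}}_{2}^{-1}\Delta\widehat{\boldsymbol{\Sigma}}_{2}^{-1}\mathbf{U}^{*\top}_{t}$. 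Its off-diagonal blocks are averages $T^{-1}\sum_{t}\mathbf{P}^{*}_{t}\mathbf{u}^{*}_{t}$, linear in the bootstrap innovations.

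Neither block is a continuous function of $\widehat{\boldsymbol{\zeta}}$, so continuous mapping does not reach them. Nor can you simply substitute $\mathbf{H}_{0,\sigma}(\widehat{\boldsymbol{\zeta}})$ and zeros. Even in conditional mean, $\mathbb{E}^{*}(\mathbf{U}^{*}_{t}\mathbf{M}\mathbf{U}^{*\top}_{t})=T^{-1}\sum_{s}\widehat{\mathbf{U}}_{s}\mathbf{M}\widehat{\mathbf{U}}_{s}^{\top}$ equals $\operatorname{tr}(\mathbf{M}\widehat{\boldsymbol{\Sigma}}_{2})\widehat{\boldsymbol{\Sigma}}_{1}$ at $\mathbf{M}=\widehat{\boldsymbol{\Sigma}}_{2}^{-1}$ by the first-order conditions (Lemma~\ref{lem:orth}). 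It does not in general, because the resampling covariance $\widehat{\boldsymbol{\Sigma}}_{emp}$ is not separable.

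Two facts are missing:
\begin{itemize}
\item $\mathbf{V}^{*}_{T}(0)\xrightarrow{p^{*}}_{p}\boldsymbol{\Sigma}=\boldsymbol{\Sigma}_{2}\otimes\boldsymbol{\Sigma}_{1}$. This follows from $\mathbb{E}^{*}\|\mathbf{V}^{*}_{T}(0)-\widehat{\boldsymbol{\Sigma}}_{emp}\|^{2}\le\mu^{*}_{4}/T$ and Lemma~\ref{lem:bootmom}. It restores \eqref{eqA:UMU} in the limit, so the $\boldsymbol{\sigma}$ block tends to $\mathbf{H}_{0,\sigma}$.
\item $T^{-1}\sum_{t}\mathbf{P}^{*}_{t}\mathbf{u}^{*}_{t}=O_{p^{*}}(T^{-1/2})$. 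This holds because $\mathbf{P}^{*}_{t}$ is $\mathcal{F}^{*}_{t-1}$-measurable and linear in $\mathbf{x}^{*}_{t}$, while $\mathbf{u}^{*}_{t}$ is centred and independent of $\mathcal{F}^{*}_{t-1}$, as in Lemma~\ref{lem:bootmds}.
\end{itemize}
Both are short with the tools of your Step 2.

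Alternatively, you could define $\mathbf{H}^{*}$ with $\mathbf{H}_{0,\sigma}(\widehat{\boldsymbol{\zeta}})$ and zero off-diagonal blocks. The same two facts are then still needed, now to show that the discarded differences are negligible in the remainder $\mathbf{R}^{*}_{T}$ of \eqref{eqB:stacked}.
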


\begin{proof}
Since $\mathbf{y}^{*}_{-p+1} = \cdots = \mathbf{y}^{*}_{0} = \mathbf{0}$, iterating \eqref{eqB:dgp} gives the truncated MMA form $\mathbf{y}_{t}^{*} = \sum_{j=0}^{t-1}\widehat{\boldsymbol{\Theta}}_{j}\mathbf{u}_{t-j}^{*}$, and hence, stacking $p$ lags as in \eqref{eqB:Rdef},
\begin{equation}\label{eqB:xstar}
    \mathbf{x}_{t}^{*} = \sum_{j=1}^{t-1}\widehat{\mathbf{D}}_{j}\mathbf{u}_{t-j}^{*},
    \qquad t = 1, \ldots, T.
\end{equation}
Insert \eqref{eqB:xstar} into $\frac{1}{T}\sum_{t=1}^{T}\mathbf{x}^{*}_{t}\mathbf{x}^{*\top}_{t}$ and collect terms by the difference $h = j_{1} - j_{2}$ of the two lag indices to obtain
    \begin{align*}
	\frac{1}{T}\sum_{t=1}^{T}\mathbf{x}^{*}_{t}\mathbf{x}^{*\top}_{t} &= \frac{1}{T} \sum_{t = 1}^{T} \sum_{j_{1}=1}^{t-1} \sum_{j_{2}=1}^{t-1} \widehat{\mathbf{D}}_{j_{1}} \mathbf{u}_{t-j_{1}}^{*} \mathbf{u}_{t-j_{2}}^{*\top} \widehat{\mathbf{D}}_{j_{2}}^{\top} \\
									  &= \frac{1}{T} \sum_{t = 1}^{T} \sum_{h=1}^{t-2} \sum_{s=1}^{t-1-h} \widehat{\mathbf{D}}_{s+h} \mathbf{u}_{t-(s+h)}^{*} \mathbf{u}_{t-s}^{*\top} \widehat{\mathbf{D}}_{s}^{\top} \\
									  &+ \frac{1}{T} \sum_{t = 1}^{T} \sum_{s=1}^{t-1} \widehat{\mathbf{D}}_{s} \mathbf{u}_{t-s}^{*} \mathbf{u}_{t-s}^{*\top} \widehat{\mathbf{D}}_{s}^{\top} \\
									  &+ \frac{1}{T} \sum_{t = 1}^{T} \sum_{h=1}^{t-2} \sum_{s=1}^{t-1-h} \widehat{\mathbf{D}}_{s} \mathbf{u}_{t-s}^{*} \mathbf{u}_{t-(s+h)}^{*\top} \widehat{\mathbf{D}}_{s+h}^{\top} \\
									  &= \mathbf{M}_{1}^{*} + \mathbf{M}_{2}^{*} + \mathbf{M}_{3}^{*}.
    \end{align*}
    Exchanging the order of summation and writing
    \begin{align*}
        \mathbf{V}_{T}^{*}(s) = \frac{1}{T} \sum_{t=s+1}^{T} \mathbf{u}_{t-s}^{*} \mathbf{u}_{t-s}^{*\top}, \qquad \mathbf{R}_{T}^{*}(h,s) = \frac{1}{T} \sum_{t=s+h+1}^{T}\mathbf{u}_{t-s-h}^{*} \mathbf{u}_{t-s}^{*\top},
    \end{align*}
    then the three terms read
    \begin{align*}
	\mathbf{M}_{2}^{*} = \sum_{s=1}^{T-1} \widehat{\mathbf{D}}_{s} \mathbf{V}_{T}^{*}(s) \widehat{\mathbf{D}}_{s}^{\top}, \qquad \mathbf{M}_{1}^{*} = \sum_{h=1}^{T-2} \sum_{s=1}^{T-1-h} \widehat{\mathbf{D}}_{s+h} \mathbf{R}_{T}^{*}(h,s) \widehat{\mathbf{D}}_{s}^{\top}, \qquad \mathbf{M}_{3}^{*} = \mathbf{M}_{1}^{*\top}.
    \end{align*}
    In what follows, we show (a) $\mathbf{M}^{*}_{2} \xrightarrow{p^{*}}_{p} \boldsymbol{\Phi}$ and (b) $\mathbf{M}^{*}_{1} = \mathbf{M}^{*\top}_{3} \xrightarrow{p^{*}}_{p} \mathbf{0}$.
    For part (a), the $\mathbf{u}^{*}_{t}$ are i.i.d.\ under $P^{*}$, so $\mathbb{E}^{*}(\mathbf{V}^{*}_{T}(s)) = \tfrac{T-s}{T}\widehat{\boldsymbol{\Sigma}}_{emp}$ and, the summands being independent, for every $0 \leq s < T$
    \begin{equation}\label{eqB:vdev}
	\begin{aligned}
	    \mathbb{E}^{*}\big\|\mathbf{V}^{*}_{T}(s) - \mathbb{E}^{*}(\mathbf{V}^{*}_{T}(s))\big\|^{2}
	    &= \frac{1}{T^{2}} \mathbb{E}^{*} \left\| \sum_{t=s+1}^{T} \mathbf{u}_{t-s}^{*} \mathbf{u}_{t-s}^{*\top}
	    - (T-s) \widehat{\boldsymbol{\Sigma}}_{emp} \right\|^2 \\
	    &= \frac{1}{T^{2}} \mathbb{E}^{*} \left\| \sum_{r=1}^{T-s} \mathbf{u}_{r}^{*} \mathbf{u}_{r}^{*\top}
	    - \sum_{r=1}^{T-s} \widehat{\boldsymbol{\Sigma}}_{emp}\right\|^2 \\
	    &= \frac{1}{T^{2}}\sum_{r=1}^{T-s}
	    \mathbb{E}^{*}\big\|\mathbf{u}^{*}_{r}\mathbf{u}^{*\top}_{r} - \widehat{\boldsymbol{\Sigma}}_{emp}\big\|^{2} \\
	    &\leq \frac{\mu^{*}_{4}}{T}
	    = O_{p}\big(T^{-1}\big).
	\end{aligned}
    \end{equation}
    by Lemma \ref{lem:bootmom}. 
    Now write $\mathbf{V}_{T}^{*}(s)$ as
    \begin{align*}
	\mathbf{V}_{T}^{*}(s) &= \underbrace{\widehat{\boldsymbol{\Sigma}}_{emp} - \frac{s}{T} \widehat{\boldsymbol{\Sigma}}_{emp}}_{\mathbb{E}^{*}(\mathbf{V}_{T}^{*}(s))} + \left(\mathbf{V}_{T}^{*}(s) - \mathbb{E}^{*}(\mathbf{V}_{T}^{*}(s)\right),
    \end{align*}
    sandwich between $\widehat{\mathbf{D}}_{s}$ and $\widehat{\mathbf{D}}_{s}^{\top}$ and sum over $s$ to obtain
    \begin{align*}
	\mathbf{M}_{2}^{*} = \sum_{s=1}^{T-1} \widehat{\mathbf{D}}_{s} \widehat{\boldsymbol{\Sigma}}_{emp} \widehat{\mathbf{D}}_{s}^{\top} - \underbrace{\frac{1}{T}\sum_{s=1}^{T-1} s \widehat{\mathbf{D}}_{s} \widehat{\boldsymbol{\Sigma}}_{emp} \widehat{\mathbf{D}}_{s}^{\top}}_{(i)} + \underbrace{\sum_{s=1}^{T-1} \widehat{\mathbf{D}}_{s} \left(\mathbf{V}_{T}^{*}(s) - \mathbb{E}^{*}(\mathbf{V}_{T}^{*}(s)\right) \widehat{\mathbf{D}}_{s}^{\top}}_{(ii)}.
    \end{align*}
    For $(i)$, we may take the norm and obtain
    \begin{align*}
	\left\| \frac{1}{T}\sum_{s=1}^{T-1} s \widehat{\mathbf{D}}_{s} \widehat{\boldsymbol{\Sigma}}_{emp} \widehat{\mathbf{D}}_{s}^{\top} \right\| \leq \frac{\| \widehat{\boldsymbol{\Sigma}}_{emp} \|}{T} \sum_{s=1}^{T-1} s \|\widehat{\mathbf{D}}_{s} \|^2 \leq \frac{K^{2} \| \widehat{\boldsymbol{\Sigma}}_{emp}\|}{T} \sum_{s \geq 1} s \bar{\rho}^{2s} = O_{p}(T^{-1}).
    \end{align*}
    For $(ii)$, we have
    \begin{align*}
	\left\|\sum_{s=1}^{T-1} \widehat{\mathbf{D}}_{s} \left(\mathbf{V}_{T}^{*}(s) - \mathbb{E}^{*}(\mathbf{V}_{T}^{*}(s)\right) \widehat{\mathbf{D}}_{s}^{\top} \right\| \leq \sum_{s=1}^{T-1} \|\widehat{\mathbf{D}}_{s} \|^2 \mathbb{E}^{*} \| \mathbf{V}_{T}^{*}(s) - \mathbb{E}^{*} (\mathbf{V}_{T}^{*}(s)) \| = O_{p}(T^{-1/2}).
    \end{align*}
    The conditional Markov inequality then yields that $(ii)$ is $o_{p^{*}}(1)$, hence
    \begin{equation*}
        \mathbf{M}^{*}_{2}
        = \sum_{s=1}^{T-1}\widehat{\mathbf{D}}_{s}\widehat{\boldsymbol{\Sigma}}_{emp}\widehat{\mathbf{D}}^{\top}_{s}
        + o_{p^{*}}(1).
    \end{equation*}
    For the leading term, $\mathbf{x}_{t} = \sum_{j\geq1}\mathbf{D}_{j}\mathbf{u}_{t-j}$ gives $\boldsymbol{\Phi} = \sum_{j\geq1}\mathbf{D}_{j}\boldsymbol{\Sigma}\mathbf{D}^{\top}_{j}$, so for fixed $m$,
    \begin{small}
    \begin{align*}
        \Big\|\sum_{s=1}^{T-1}\widehat{\mathbf{D}}_{s}\widehat{\boldsymbol{\Sigma}}_{emp}\widehat{\mathbf{D}}^{\top}_{s} - \boldsymbol{\Phi}\Big\|
        &\leq \sum_{s=1}^{m}\big\|\widehat{\mathbf{D}}_{s}\widehat{\boldsymbol{\Sigma}}_{emp}\widehat{\mathbf{D}}^{\top}_{s} - \mathbf{D}_{s}\boldsymbol{\Sigma}\mathbf{D}^{\top}_{s}\big\|
        + \sum_{s=m+1}^{T-1}\big\|\widehat{\mathbf{D}}_{s}\big\|^{2}\big\|\widehat{\boldsymbol{\Sigma}}_{emp}\big\|
        + \sum_{s>m}\big\|\mathbf{D}_{s}\big\|^{2}\big\|\boldsymbol{\Sigma}\big\| \\
        &\overset{\text{L\ref{lem:decay}}}{\leq}
        \sum_{s=1}^{m}\big\|\widehat{\mathbf{D}}_{s}\widehat{\boldsymbol{\Sigma}}_{emp}\widehat{\mathbf{D}}^{\top}_{s} - \mathbf{D}_{s}\boldsymbol{\Sigma}\mathbf{D}^{\top}_{s}\big\|
        + \frac{K^{2}\bar{\rho}^{\,2m}}{1-\bar{\rho}^{2}}\Big(\big\|\widehat{\boldsymbol{\Sigma}}_{emp}\big\| + \big\|\boldsymbol{\Sigma}\big\|\Big),
    \end{align*}
    \end{small}
    Each $\widehat{\boldsymbol{\Theta}}_{j}$ is a polynomial in $\widehat{\mathbf{C}}$ through \eqref{eq:mma}, so the first term is $o_{p}(1)$ for fixed $m$ by Proposition \ref{prop:an} and Lemma \ref{lem:bootmom}, while the second is $O_{p}(1)\bar{\rho}^{\,2m}$ uniformly in $T$ and vanishes as $m \to \infty$, which completes the proof of (a).

    For part (b), $h \geq 1$ and $\mathbb{E}^{*}(\mathbf{u}_{t}^{*}) = \mathbf{0}$, so $\mathbb{E}^{*}(\mathbf{R}^{*}_{T}(h,s)) = \mathbf{0}$ and independence leaves only the diagonal terms of $\mathbb{E}^{*}\|\mathbf{R}^{*}_{T}(h,s)\|^{2}$, where $\sup_{h,s}\mathbb{E}^{*}\|\mathbf{R}^{*}_{T}(h,s)\| \leq (\mu^{*}_{4}/T)^{1/2}$. Therefore
    \begin{equation*}
        \mathbb{E}^{*}\big\|\mathbf{M}^{*}_{1}\big\|
        \leq \sum_{h\geq1}\sum_{s\geq1}\big\|\widehat{\mathbf{D}}_{s+h}\big\|\big\|\widehat{\mathbf{D}}_{s}\big\|
        \sup_{h,s}\mathbb{E}^{*}\big\|\mathbf{R}^{*}_{T}(h,s)\big\|
        \overset{\text{L\ref{lem:decay}}}{\leq}
        c\,\big(\mu^{*}_{4}/T\big)^{1/2} = O_{p}\big(T^{-1/2}\big),
    \end{equation*}
    where $c$ is the constant from Lemma \ref{lem:decay} and $\mathbf{M}^{*}_{3} = \mathbf{M}^{*\top}_{1}$, which completes the proof of (b) and, with (a), of the first claim.

    For the second claim, $\mathbf{Y}^{*}_{t} = \widehat{\mathbf{A}}\mathbf{X}^{*}_{t}\widehat{\mathbf{B}}^{\top} + \mathbf{U}^{*}_{t}$ holds exactly, so every term of the expansion of \eqref{eqA:foc} is a sample average of a bilinear form in $(\mathbf{x}^{*}_{t}, \mathbf{u}^{*}_{t})$ of which there are three kinds,
    \begin{equation}\label{eqB:Hterms}
        \underbrace{\frac{1}{T}\sum_{t=1}^{T}\mathbf{W}^{*}_{t}\widehat{\boldsymbol{\Sigma}}^{-1}\mathbf{W}^{*\top}_{t}}_{\text{linear in }T^{-1}\sum_{t=1}^{T}\mathbf{x}^{*}_{t}\mathbf{x}^{*\top}_{t}},
        \qquad
        \underbrace{\frac{1}{T}\sum_{t=1}^{T}\mathbf{U}^{*}_{t}\mathbf{M}\mathbf{U}^{*\top}_{t}}_{\text{linear in }\mathbf{V}^{*}_{T}(0)},
        \qquad
        \underbrace{\frac{1}{T}\sum_{t=1}^{T}\mathbf{P}^{*}_{t}\mathbf{u}^{*}_{t}}_{\operatorname{vec}(\mathbf{N}^{*}_{T})},
    \end{equation}
The first is the $\boldsymbol{\beta}$ block of $\mathbf{H}^{*}$, the second the $\boldsymbol{\sigma}$ block, and the third the two off-diagonal blocks, which collect the terms in which $\mathbf{u}^{*}_{t}$ enters linearly and are the bootstrap counterparts of the averages that \eqref{eqA:mds} makes $O_{p}(T^{-1/2})$, leaving $\mathbf{H}$ block diagonal.
Because the bootstrap residual $\mathbf{Y}^{*}_{t} - \widehat{\mathbf{A}}^{*}\mathbf{X}^{*}_{t}\widehat{\mathbf{B}}^{*\top}$ equals $\mathbf{U}^{*}_{t}$ minus a term linear in $\mathbf{X}^{*}_{t}$, and each condition in \eqref{eqA:foc} multiplies it by $\mathbf{X}^{*\top}_{t}$ or by its own transpose, the coefficient $\mathbf{P}^{*}_{t}$ on $\mathbf{u}^{*}_{t}$ contains $\mathbf{X}^{*}_{t}$ exactly once and is therefore linear in $\mathbf{x}^{*}_{t}$, with coefficients continuous in $\widehat{\boldsymbol{\zeta}}$.
Each block of $\mathbf{H}^{*}$ is a linear function, with coefficients continuous in $\widehat{\boldsymbol{\zeta}}$, of one of the three averages in \eqref{eqB:Hterms}.
Part (a) and $\widehat{\boldsymbol{\zeta}} \xrightarrow{p} \boldsymbol{\zeta}$ give the first, and it remains to show (c) $\mathbf{V}^{*}_{T}(0) \xrightarrow{p^{*}}_{p} \boldsymbol{\Sigma}$, which by \eqref{eqB:UMUlim} below delivers the second, and (d) $\mathbf{N}^{*}_{T} = o_{p^{*}}(1)$, which kills the third.

    For part (c), $s = 0$ in \eqref{eqB:vdev} gives $\mathbb{E}^{*}(\mathbf{V}^{*}_{T}(0)) = \widehat{\boldsymbol{\Sigma}}_{emp}$ and
    \begin{equation*}
        \mathbb{E}^{*}\big\|\mathbf{V}^{*}_{T}(0) - \widehat{\boldsymbol{\Sigma}}_{emp}\big\|^{2}
        \leq \frac{\mu^{*}_{4}}{T} = O_{p}\big(T^{-1}\big),
    \end{equation*}
    so $\widehat{\boldsymbol{\Sigma}}_{emp} \xrightarrow{p} \boldsymbol{\Sigma}$ of Lemma \ref{lem:bootmom} gives $\mathbf{V}^{*}_{T}(0) \xrightarrow{p^{*}}_{p} \boldsymbol{\Sigma}$.
    Consequently, for every $\mathbf{F}_{1} \in \mathbb{S}^{N_{1}}$,
    \begin{equation}\label{eqB:UMUlim}
    \begin{aligned}
        \operatorname{tr}\Big[\mathbf{F}_{1}\frac{1}{T}\sum_{t=1}^{T}
            \mathbf{U}^{*}_{t}\mathbf{M}\mathbf{U}^{*\top}_{t}\Big]
        &\overset{\text{L\ref{lem:kron}.2}}{=}
            \operatorname{tr}\big[(\mathbf{M} \otimes \mathbf{F}_{1})
            \mathbf{V}^{*}_{T}(0)\big]
            \xrightarrow{p^{*}}_{p}
            \operatorname{tr}\big[(\mathbf{M} \otimes \mathbf{F}_{1})
            (\boldsymbol{\Sigma}_{2} \otimes \boldsymbol{\Sigma}_{1})\big] \\
        &=
            \operatorname{tr}\big(\mathbf{M}\boldsymbol{\Sigma}_{2}\big)
            \operatorname{tr}\big(\mathbf{F}_{1}\boldsymbol{\Sigma}_{1}\big),
    \end{aligned}
    \end{equation}
    which is \eqref{eqA:UMU}, and the mirror image is identical.
    Separability is available only in the limit.
    Replacing the average in \eqref{eqB:UMUlim} by $\mathbb{E}^{*}(\mathbf{U}^{*}_{t}\mathbf{M}\mathbf{U}^{*\top}_{t})$ through \eqref{eqB:UMUstar} and subtracting the same two steps at $(\widehat{\boldsymbol{\Sigma}}_{2}, \widehat{\boldsymbol{\Sigma}}_{1})$,
    \begin{equation*}
        \operatorname{tr}\Big[\mathbf{F}_{1}\Big(
        \mathbb{E}^{*}\big(\mathbf{U}^{*}_{t}\mathbf{M}\mathbf{U}^{*\top}_{t}\big)
        - \operatorname{tr}\big(\mathbf{M}\widehat{\boldsymbol{\Sigma}}_{2}\big)\widehat{\boldsymbol{\Sigma}}_{1}\Big)\Big]
        \overset{\eqref{eqB:Semp}}{=} \operatorname{tr}\big[(\mathbf{M} \otimes \mathbf{F}_{1})\widehat{\mathbf{S}}_{T}\big],
        \qquad \mathbf{F}_{1} \in \mathbb{S}^{N_1},
    \end{equation*}
which vanishes at $\mathbf{M} = \widehat{\boldsymbol{\Sigma}}^{-1}_{2}$ by Lemma \ref{lem:orth} but not at $\mathbf{M} = \widehat{\boldsymbol{\Sigma}}^{-1}_{2}\Delta^{*}_{2}\widehat{\boldsymbol{\Sigma}}^{-1}_{2}$, the matrix that multiplies the bootstrap deviation in the expansion of the covariance conditions, where it is $\widehat{\mathbf{S}}_{T} \xrightarrow{p} \mathbf{0}$ that restores \eqref{eqA:UMU}.

    For part (d), $\mathbf{P}^{*}_{t}$ is $\mathcal{F}^{*}_{t-1}$-measurable by \eqref{eqB:xstar} while $\mathbf{u}^{*}_{t}$ is centered and independent of $\mathcal{F}^{*}_{t-1}$, so the summands of $\mathbf{N}^{*}_{T}$ are centered and uncorrelated under $P^{*}$ and
    \begin{equation*}
        \mathbb{E}^{*}\big\|\mathbf{N}^{*}_{T}\big\|^{2}
        = \frac{1}{T^{2}}\sum_{t=1}^{T}\mathbb{E}^{*}\operatorname{tr}\big(\mathbf{P}^{*\top}_{t}\mathbf{P}^{*}_{t}\widehat{\boldsymbol{\Sigma}}_{emp}\big)
        \leq \frac{c\big\|\widehat{\boldsymbol{\Sigma}}_{emp}\big\|}{T^{2}}
        \sum_{t=1}^{T}\mathbb{E}^{*}\big\|\mathbf{x}^{*}_{t}\big\|^{2}
        = O_{p}\big(T^{-1}\big),
    \end{equation*}
    the last equality from Lemma \ref{lem:bootmds}.
    Collecting, $\widehat{\boldsymbol{\zeta}} \xrightarrow{p} \boldsymbol{\zeta}$ with part (a) and parts (c) and (d) give
    \begin{equation}\label{eqB:Hlim}
        \frac{1}{T}\sum_{t=1}^{T}\mathbf{W}^{*}_{t}\widehat{\boldsymbol{\Sigma}}^{-1}\mathbf{W}^{*\top}_{t}
        \xrightarrow{p^{*}}_{p}
        \mathbb{E}\big(\mathbf{W}_{t}\boldsymbol{\Sigma}^{-1}\mathbf{W}^{\top}_{t}\big) = \boldsymbol{\Xi},
        \qquad
        \frac{1}{T}\sum_{t=1}^{T}\mathbf{U}^{*}_{t}\mathbf{M}\mathbf{U}^{*\top}_{t}
        \xrightarrow{p^{*}}_{p}
        \operatorname{tr}\big(\mathbf{M}\boldsymbol{\Sigma}_{2}\big)\boldsymbol{\Sigma}_{1},
    \end{equation}
    so the three kinds of \eqref{eqB:Hterms} converge to $\boldsymbol{\Xi}$, to $\mathbf{H}_{0,\sigma}$ of \eqref{eqA:covraw} and to $\mathbf{0}$.
    The limits being constant and with $\boldsymbol{\Gamma}^{*}\boldsymbol{\Gamma}^{*\top} \xrightarrow{p} \boldsymbol{\Gamma}\boldsymbol{\Gamma}^{\top}$ the continuous mapping theorem gives $\mathbf{H}^{*} \xrightarrow{p^{*}}_{p} \mathbf{H}_{0} + \boldsymbol{\Gamma}\boldsymbol{\Gamma}^{\top} = \mathbf{H}$.
\end{proof}

\begin{lemma}[Projection Inequality]\label{lem:proj}
For $\mathbf{M} = [\mathbf{M}_1 \cdots \mathbf{M}_p]$ and $\mathbf{M}_0 = [\mathbf{M}_{0,1} \cdots \mathbf{M}_{0,p}]$ with each $\mathbf{M}_{0,j}$ a Kronecker product of any two real valued matrices, $\|\mathcal{P}(\mathbf{M}) - \mathbf{M}_0\| \leq 2\|\mathbf{M} - \mathbf{M}_0\|$.
\end{lemma}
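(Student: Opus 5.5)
The argument is the standard one for a projection onto a set containing the comparison point: the triangle inequality combined with the minimizing property of $\mathcal{P}$. Since $\mathcal{P}$ acts blockwise and the squared Frobenius norm of the stacked matrix is the sum of the squared Frobenius norms of its lag blocks, I will first prove the blockwise inequality
\begin{equation*}
    \|\mathcal{P}(\mathbf{M}_j) - \mathbf{M}_{0,j}\| \leq 2\|\mathbf{M}_j - \mathbf{M}_{0,j}\|, \qquad j = 1, \ldots, p,
\end{equation*}
and then square and sum over $j$. This gives $\|\mathcal{P}(\mathbf{M}) - \mathbf{M}_0\|^2 \leq 4\sum_j \|\mathbf{M}_j - \mathbf{M}_{0,j}\|^2 = 4\|\mathbf{M} - \mathbf{M}_0\|^2$.

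For a single block, fix $j$ and write $\mathcal{K}$ for the set of Kronecker products $\mathbf{V} \otimes \mathbf{U}$ with $\mathbf{U} \in \mathbb{R}^{N_1 \times N_1}$ and $\mathbf{V} \in \mathbb{R}^{N_2 \times N_2}$. The set $\mathcal{K}$ is closed. Via the Van Loan rearrangement it is isometric to the set of matrices of rank at most one, so a minimizer in \eqref{eq:proj} exists; by Eckart--Young it is the rank-one truncation of the rearranged $\mathbf{M}_j$. If the leading singular value is repeated, $\mathcal{P}$ denotes any selected minimizer, and the argument below only uses that it is one. Since $\mathbf{M}_{0,j} \in \mathcal{K}$ by hypothesis, minimality gives $\|\mathbf{M}_j - \mathcal{P}(\mathbf{M}_j)\| \leq \|\mathbf{M}_j - \mathbf{M}_{0,j}\|$. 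The triangle inequality then yields
\begin{equation*}
    \|\mathcal{P}(\mathbf{M}_j) - \mathbf{M}_{0,j}\| \leq \|\mathcal{P}(\mathbf{M}_j) - \mathbf{M}_j\| + \|\mathbf{M}_j - \mathbf{M}_{0,j}\| \leq 2\|\mathbf{M}_j - \mathbf{M}_{0,j}\|.
\end{equation*}

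The only point needing care is well-definedness. $\mathcal{K}$ is not convex, so $\mathcal{P}$ need not be single-valued, and no nonexpansiveness (factor $1$) argument is available. The factor $2$ is exactly what one gets without convexity. I will therefore state the lemma for any measurable selection of the minimizer and record existence through the rank-one reformulation. This makes the inequality hold deterministically, which is the form used to obtain \eqref{eq:dgp_perturb}: take $\mathbf{M} = \widehat{\mathbf{C}} - \delta\widehat{\boldsymbol{\Psi}}$ and $\mathbf{M}_0 = \widehat{\mathbf{C}}$, so that $\|\widetilde{\mathbf{C}} - \widehat{\mathbf{C}}\| \leq 2\delta\|\widehat{\boldsymbol{\Psi}}\| = O_p(T^{-1})$.
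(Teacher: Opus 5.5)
Your proof is correct and is the paper's argument: feasibility of $\mathbf{M}_{0,j}$ gives $\|\mathcal{P}_j(\mathbf{M}_j)-\mathbf{M}_j\|\le\|\mathbf{M}_j-\mathbf{M}_{0,j}\|$, the triangle inequality doubles this, and squaring and summing over the lag blocks yields the factor $4$. Your remarks that a minimizer exists (via the rank-one rearrangement and Eckart--Young) and that the bound holds for any selected minimizer are a harmless refinement that the paper leaves implicit.
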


\begin{proof}
Since $\mathbf{M}_{0,j}$ is feasible for the minimization defining $\mathcal{P}_j$,
\begin{equation}\label{eq:feas}
    \big\|\mathcal{P}_j(\mathbf{M}_j) - \mathbf{M}_j\big\|
    = \min_{\mathbf{B}_{j} \otimes \mathbf{A}_{j}}\big\|\mathbf{M}_j - \mathbf{B}_{j} \otimes \mathbf{A}_{j}\big\|
    \leq \big\|\mathbf{M}_j - \mathbf{M}_{0,j}\big\|.
\end{equation}
Then
\begin{align*}
    \big\|\mathcal{P}(\mathbf{M}) - \mathbf{M}_0\big\|^2
    &= \sum_{j=1}^{p}\big\|\mathcal{P}_j(\mathbf{M}_j) - \mathbf{M}_{0,j}\big\|^2 \\
    &\leq \sum_{j=1}^{p}\Big(\big\|\mathcal{P}_j(\mathbf{M}_j) - \mathbf{M}_j\big\|
      + \big\|\mathbf{M}_j - \mathbf{M}_{0,j}\big\|\Big)^2 \\
    &\overset{\eqref{eq:feas}}{\leq} \sum_{j=1}^{p}\Big(2\big\|\mathbf{M}_j - \mathbf{M}_{0,j}\big\|\Big)^2
    = 4\big\|\mathbf{M} - \mathbf{M}_0\big\|^2 .
\end{align*}
\end{proof}

\begin{lemma}[Bias term]\label{lem:bias}
Suppose that $\{\sqrt{T}(\widehat{\mathbf{C}}^{*} - \widehat{\mathbf{C}})\}_{T}$ is uniformly integrable under $P^{*}$, in probability.
Then,
\begin{equation*}
    \big\|\widehat{\boldsymbol{\Psi}}\big\| = o_{p}\big(T^{-1/2}\big),
    \qquad
    \max\Big\{\big\|\widetilde{\mathbf{C}} - \widehat{\mathbf{C}}\big\| 
    \big\|\widetilde{\mathbf{C}}^{*} - \widehat{\mathbf{C}}^{*}\big\|\Big\}
    \leq 2\big\|\widehat{\boldsymbol{\Psi}}\big\| .
\end{equation*}
\end{lemma}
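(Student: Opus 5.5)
The plan has two parts: the rate of the estimated bias, and the projection bounds. Each part reduces to one earlier result, namely Proposition \ref{prop:boot} and Lemma \ref{lem:proj} respectively.

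\emph{Rate of the bias.} Write $\widehat{\boldsymbol{\Psi}} = \bar{\mathbf{C}}^{*} - \widehat{\mathbf{C}}$. This splits into a Monte Carlo average around its conditional mean, plus the exact conditional mean
\begin{equation*}
    \mathbb{E}^{*}(\widehat{\mathbf{C}}^{*}) - \widehat{\mathbf{C}} = T^{-1/2}\,\mathbb{E}^{*}\big[\sqrt{T}(\widehat{\mathbf{C}}^{*} - \widehat{\mathbf{C}})\big].
\end{equation*}
I treat the conditional mean first. Proposition \ref{prop:boot} and the Jacobian of Appendix \ref{app:jacobian} give $\sqrt{T}(\widehat{\mathbf{C}}^{*} - \widehat{\mathbf{C}}) \xrightarrow{d^{*}} N(\mathbf{0}, \boldsymbol{\Omega}_C)$ in probability, by the conditional delta method. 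Under the assumed uniform integrability in probability, convergence in distribution carries over to convergence of first moments, so the conditional mean of the scaled deviation tends to the limit mean $\mathbf{0}$ in probability. To make this rigorous I would pass to almost surely convergent subsequences: along every subsequence, extract a further subsequence on which both the weak convergence and the uniform integrability hold almost surely, and apply the standard moment-convergence theorem. This gives $\mathbb{E}^{*}(\widehat{\mathbf{C}}^{*}) - \widehat{\mathbf{C}} = o_p(T^{-1/2})$.

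Next, the Monte Carlo part $B_1^{-1}\sum_m \big(\widehat{\mathbf{C}}^{*(m)} - \mathbb{E}^{*}\widehat{\mathbf{C}}^{*}\big)$. Its conditional second moment is $B_1^{-1}$ times the conditional variance, which is $O_p(T^{-1})$ when second moments are controlled. Under uniform integrability alone, I would instead use the conditional $L^1$ bound $\mathbb{E}^{*}\|\cdot\| = O_p(T^{-1/2})$ together with a weak law for i.i.d.\ draws as $B_1 \to \infty$. I expect this to be the main obstacle. If $B_1$ is held fixed, the Monte Carlo error is only $O_p(T^{-1/2})$, not $o_p$. I would therefore either read $\widehat{\boldsymbol{\Psi}}$ as its ideal version $\mathbb{E}^{*}\widehat{\mathbf{C}}^{*} - \widehat{\mathbf{C}}$ (the $B_1 \to \infty$ limit), or let $B_1 = B_1(T) \to \infty$. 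In either case the claim $\|\widehat{\boldsymbol{\Psi}}\| = o_p(T^{-1/2})$ follows.

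\emph{Projection bounds.} I apply Lemma \ref{lem:proj} with $\mathbf{M} = \widehat{\mathbf{C}} - \delta\widehat{\boldsymbol{\Psi}}$ and $\mathbf{M}_0 = \widehat{\mathbf{C}}$. This choice is admissible because each block $\widehat{\mathbf{C}}_j = \widehat{\mathbf{B}}_j \otimes \widehat{\mathbf{A}}_j$ is a Kronecker product. The lemma gives
\begin{equation*}
    \|\widetilde{\mathbf{C}} - \widehat{\mathbf{C}}\| \le 2\delta\|\widehat{\boldsymbol{\Psi}}\| \le 2\|\widehat{\boldsymbol{\Psi}}\|,
\end{equation*}
using $\delta \in [0,1]$. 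The case $\delta = 0$ is trivial, since then $\widetilde{\mathbf{C}} = \widehat{\mathbf{C}}$.

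The starred bound follows in exactly the same way. Take $\mathbf{M} = \widehat{\mathbf{C}}^{*} - \delta^{*}\widehat{\boldsymbol{\Psi}}$ and $\mathbf{M}_0 = \widehat{\mathbf{C}}^{*}$, which is again blockwise Kronecker because it is a MAR fit. The resulting bound holds pointwise in every bootstrap replication, with no probabilistic argument needed.

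Combining the two parts, the perturbations introduced by the bias correction and the projection are $o_p(T^{-1/2})$. This is exactly what the validity argument of Section \ref{subsec:validity} uses.
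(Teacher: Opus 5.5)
Your proposal is correct and follows the paper's own proof. The paper likewise gets $\sqrt{T}\operatorname{vec}(\widehat{\mathbf{C}}^{*}-\widehat{\mathbf{C}}) \xrightarrow{d^{*}} N(\mathbf{0},\boldsymbol{\Omega}_C)$ from Proposition \ref{prop:boot} and the Jacobian map, turns this into $\sqrt{T}\,\widehat{\boldsymbol{\Psi}} \xrightarrow{p} \mathbf{0}$ via uniform integrability (Billingsley's Theorem 25.12), and obtains both projection bounds from Lemma \ref{lem:proj} with $\mathbf{M}_0=\widehat{\mathbf{C}}$ and $\mathbf{M}_0=\widehat{\mathbf{C}}^{*}$. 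Your Monte Carlo caveat is a valid refinement that the paper leaves implicit: it simply identifies $\widehat{\boldsymbol{\Psi}}$ with $\mathbb{E}^{*}(\widehat{\mathbf{C}}^{*})-\widehat{\mathbf{C}}$, which is your ideal ($B_1\to\infty$) reading, and with $B_1$ fixed the simulation error is indeed only $O_p(T^{-1/2})$.
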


\begin{proof}
To prove $\big\|\widehat{\boldsymbol{\Psi}}\big\| = o_{p}\big(T^{-1/2}\big)$, $\widehat{\boldsymbol{\Psi}} = \mathbb{E}^{*}(\widehat{\mathbf{C}}^{*}) - \widehat{\mathbf{C}}$, where $\widehat{\mathbf{C}}^{*}$ is estimated on first-stage bootstrap samples generated from $\widehat{\mathbf{C}}$.
Proposition \ref{prop:boot} gives $\sqrt{T}(\widehat{\boldsymbol{\beta}}^{*} - \widehat{\boldsymbol{\beta}}) \xrightarrow{d^{*}} N(\mathbf{0}, \boldsymbol{\Omega}_{\beta})$ in probability, and since
\begin{equation*}
    \widehat{\mathbf{C}}^{*}_{j} - \widehat{\mathbf{C}}_{j}
    = (\widehat{\mathbf{B}}^{*}_{j} - \widehat{\mathbf{B}}_{j}) \otimes \widehat{\mathbf{A}}_{j}
    + \widehat{\mathbf{B}}_{j} \otimes (\widehat{\mathbf{A}}^{*}_{j} - \widehat{\mathbf{A}}_{j})
    + (\widehat{\mathbf{B}}^{*}_{j} - \widehat{\mathbf{B}}_{j}) \otimes (\widehat{\mathbf{A}}^{*}_{j} - \widehat{\mathbf{A}}_{j}),
\end{equation*}
with the last term $O_{p^{*}}(T^{-1})$, the mapping that leads to \eqref{eq:an_C_app}, with $\mathbf{J}_{\beta}$ evaluated at $\widehat{\boldsymbol{\beta}} \xrightarrow{p} \boldsymbol{\beta}$, and the conditional Slutsky lemma give
\begin{equation*}
    \sqrt{T}\operatorname{vec}\big(\widehat{\mathbf{C}}^{*} - \widehat{\mathbf{C}}\big)
    \xrightarrow{d^{*}} N\big(\mathbf{0}, \boldsymbol{\Omega}_C\big)
    \qquad \text{in probability}.
\end{equation*}
So by uniform integrability and Theorem 25.12 in \cite{billingsley_probability_1995},
\begin{equation*}
    \sqrt{T}\,\widehat{\boldsymbol{\Psi}}
    =
    \mathbb{E}^{*}\big[\sqrt{T}\big(\widehat{\mathbf{C}}^{*} - \widehat{\mathbf{C}}\big)\big]
    \xrightarrow{p} \mathbf{0} .
\end{equation*}
To prove $\max\Big\{\big\|\widetilde{\mathbf{C}} - \widehat{\mathbf{C}}\big\| \big\|\widetilde{\mathbf{C}}^{*} - \widehat{\mathbf{C}}^{*}\big\|\Big\} \leq 2\big\|\widehat{\boldsymbol{\Psi}}\big\|$, $\widehat{\mathbf{C}}_j = \widehat{\mathbf{B}}_j \otimes \widehat{\mathbf{A}}_j$ for each $j$, so Lemma \ref{lem:proj} applies with
$\mathbf{M} = \widehat{\mathbf{C}} - \delta\widehat{\boldsymbol{\Psi}}$ and
$\mathbf{M}_0 = \widehat{\mathbf{C}}$:
\begin{equation*}
    \big\|\widetilde{\mathbf{C}} - \widehat{\mathbf{C}}\big\|
    =
    \big\|\mathcal{P}\big(\widehat{\mathbf{C}} - \delta\widehat{\boldsymbol{\Psi}}\big) - \widehat{\mathbf{C}}\big\|
    \overset{\text{L\ref{lem:proj}}}{\leq}
    2\big\|\big(\widehat{\mathbf{C}} - \delta\widehat{\boldsymbol{\Psi}}\big) - \widehat{\mathbf{C}}\big\|
    = 2\delta\big\|\widehat{\boldsymbol{\Psi}}\big\|
    \leq 2\big\|\widehat{\boldsymbol{\Psi}}\big\| ,
\end{equation*}
using $\delta \le 1$ and $\|\widetilde{\mathbf{C}} - \widehat{\mathbf{C}}\| = 0$ when
$\widetilde{\mathbf{C}} = \widehat{\mathbf{C}}$. Since
$\widehat{\mathbf{C}}^{*}_j = \widehat{\mathbf{B}}^{*}_j \otimes \widehat{\mathbf{A}}^{*}_j$ by
construction, the same display with $(\widehat{\mathbf{C}}^{*}, \delta^{*})$ in place of
$(\widehat{\mathbf{C}}, \delta)$ bounds $\|\widetilde{\mathbf{C}}^{*} - \widehat{\mathbf{C}}^{*}\|$.
\end{proof}

\subsubsection{Asymptotic validity of ProBAB-MAR}
\label{app:validity}

Fix $i, k$ and $h \geq 0$, write $g = g_{ik,h}$, and let
\begin{equation*}
    J^{*}_{T}(x)
    = P^{*}\Big(\sqrt{T}\big(g(\widetilde{\boldsymbol{\zeta}}^{*}) - g(\widehat{\boldsymbol{\zeta}})\big) \leq x\Big),
    \qquad
    J_{T}(x)
    = P\Big(\sqrt{T}\big(g(\widehat{\boldsymbol{\zeta}}) - g(\boldsymbol{\zeta})\big) \leq x\Big),
\end{equation*}
with $F$ the distribution function of $N(0, \sigma^{2}_{ik,h})$.
Adding and subtracting $F$ inside \eqref{eq:thm_kolmogorov},
\begin{equation*}
    \sup_{x \in \mathbb{R}}\big|J^{*}_{T}(x) - J_{T}(x)\big|
    \leq
    \sup_{x \in \mathbb{R}}\big|J_{T}(x) - F(x)\big|
    + \sup_{x \in \mathbb{R}}\big|J^{*}_{T}(x) - F(x)\big| .
\end{equation*}
The first term is then $o(1)$ by \eqref{eq:irf_an} and P\'olya's theorem, and the second is $o_{p}(1)$ by P\'olya's theorem in probability once $J^{*}_{T}(x) \xrightarrow{p} F(x)$ at every fixed $x$, so that \eqref{eq:thm_kolmogorov} reduces to
\begin{equation}\label{eqB:target}
    \sqrt{T}\big(g(\widetilde{\boldsymbol{\zeta}}^{*}) - g(\widehat{\boldsymbol{\zeta}})\big)
    \xrightarrow{d^{*}} N\big(0, \sigma^{2}_{ik,h}\big)
    \qquad\text{in probability.}
\end{equation}

Lemma \ref{lem:bias} gives $\max\{\|\widetilde{\mathbf{C}} - \widehat{\mathbf{C}}\|, \|\widetilde{\mathbf{C}}^{*} - \widehat{\mathbf{C}}^{*}\|\} \leq 2\|\widehat{\boldsymbol{\Psi}}\| = o_{p}(T^{-1/2})$, so \eqref{eqB:dgp} may be read with $\widetilde{\mathbf{C}}$ in place of $\widehat{\mathbf{C}}$, which enters below only through its consistency and Lemma \ref{lem:decay}, and $\widetilde{\boldsymbol{\zeta}}^{*}$ may be replaced by $\widehat{\boldsymbol{\zeta}}^{*}$ in \eqref{eqB:target}.
Since $g$ is continuously differentiable in a neighborhood of $\boldsymbol{\zeta}$ and $\widehat{\boldsymbol{\zeta}} \xrightarrow{p} \boldsymbol{\zeta}$ by Proposition \ref{prop:an}, the conditional delta method takes \eqref{eq:boot_zeta} to \eqref{eqB:target} with $\sigma^{2}_{ik,h} = \nabla g(\boldsymbol{\zeta})^{\top}\boldsymbol{\Omega}\nabla g(\boldsymbol{\zeta})$, and it suffices to prove Proposition \ref{prop:boot}.

\begin{proof}[Proof of Proposition \ref{prop:boot}]
By construction $\mathbf{Y}^{*}_{t} = \widehat{\mathbf{A}}\mathbf{X}^{*}_{t}\widehat{\mathbf{B}}^{\top} + \mathbf{U}^{*}_{t}$, so $\widehat{\boldsymbol{\zeta}}$ plays in the bootstrap world the role of $\boldsymbol{\zeta}$ and the expansion of Appendix \ref{app:asymptotic} applies with two substitutions.
In place of \eqref{eqA:mds}, we may use Lemma \ref{lem:bootmds}.
Discarding products of deviations presupposes the bootstrap counterpart of \eqref{eqA:rate}, $\widehat{\boldsymbol{\zeta}}^{*} - \widehat{\boldsymbol{\zeta}} = O_{p^{*}}(T^{-1/2})$.
Repeating on $\mathbf{Y}^{*}_{t}$ the expansion of \eqref{eqA:foc} that produced \eqref{eqA:stacked} gives
\begin{equation}\label{eqB:stacked}
    \mathbf{H}^{*}\big(\widehat{\boldsymbol{\zeta}}^{*} - \widehat{\boldsymbol{\zeta}}\big)
    = \frac{1}{T}\sum_{t=1}^{T}\mathbf{s}^{*}_{t} + \mathbf{R}^{*}_{T},
\end{equation}
so that, by the nonsingularity of $\mathbf{H}$ and the conditional Slutsky lemma, \eqref{eq:boot_zeta} follows from
\begin{enumerate}
    \item[(a)] $\mathbb{E}^{*}\big(\mathbf{s}^{*}_{t}\mid\mathcal{F}^{*}_{t-1}\big) = \mathbb{E}^{*}\big(\mathbf{s}^{*}_{t}\big) = \mathbf{0}$;
    \item[(b)] $T^{-1}\sum_{t=1}^{T}\mathbf{x}^{*}_{t}\mathbf{x}^{*\top}_{t} \xrightarrow{p^{*}}_{p} \boldsymbol{\Phi}$ and $\mathbf{H}^{*} \xrightarrow{p^{*}}_{p} \mathbf{H}$;
    \item[(c)] $\sqrt{T}\,\mathbf{R}^{*}_{T} = o_{p^{*}}(1)$;
    \item[(d)] $T^{-1/2}\sum_{t=1}^{T}\mathbf{s}^{*}_{t} \xrightarrow{d^{*}} N(\mathbf{0}, \boldsymbol{\Lambda})$ in probability,
\end{enumerate}
since (a) makes $\{\mathbf{s}^{*}_{t}\}$ a martingale difference array under $P^{*}$, on which (d) rests, and (b), (c) and $\boldsymbol{\Omega} = \mathbf{H}^{-1}\boldsymbol{\Lambda}\mathbf{H}^{-1}$ complete \eqref{eq:boot_zeta}.
Condition (b) is Lemma \ref{lem:bootlln}, so it remains to verify (a), (c) and (d).

For (a), $\mathbf{W}^{*}_{t}$ is $\mathcal{F}^{*}_{t-1}$-measurable by \eqref{eqB:xstar} and $\mathbb{E}^{*}(\mathbf{u}^{*}_{t}) = \mathbf{0}$ by Lemma \ref{lem:bootmom}, so \eqref{eq:score_beta} gives $\mathbb{E}^{*}(\mathbf{s}^{*}_{\beta,t}\mid\mathcal{F}^{*}_{t-1}) = \mathbf{W}^{*}_{t}\widehat{\boldsymbol{\Sigma}}^{-1}\mathbb{E}^{*}(\mathbf{u}^{*}_{t}) = \mathbf{0}$.
The covariance block carries no such factor, but $\mathbf{s}^{*}_{\sigma,t}$ is a function of $\mathbf{u}^{*}_{t}$ alone, which is independent of $\mathcal{F}^{*}_{t-1}$ under $P^{*}$, so its conditional and unconditional expectations coincide.
For any conformable $\mathbf{M}$,
\begin{equation}\label{eqB:UMUstar}
    \mathbb{E}^{*}\big(\mathbf{U}^{*}_{t}\mathbf{M}\mathbf{U}^{*\top}_{t}\big)
    = \frac{1}{T}\sum_{t=1}^{T}\widehat{\mathbf{U}}_{t}\mathbf{M}\widehat{\mathbf{U}}^{\top}_{t},
\end{equation}
and identically for $\mathbf{U}^{*\top}_{t}\mathbf{M}\mathbf{U}^{*}_{t}$.
At $\mathbf{M} = \widehat{\boldsymbol{\Sigma}}_2^{-1}$ and $\mathbf{M} = \widehat{\boldsymbol{\Sigma}}_1^{-1}$ the right-hand sides are $N_2\widehat{\boldsymbol{\Sigma}}_1$ and $N_1\widehat{\boldsymbol{\Sigma}}_2$ by the third and fourth conditions of \eqref{eqA:foc}, which hold with equality because the normalization $\|\boldsymbol{\Sigma}_1\| = 1$ lies along a direction in which \eqref{eq:loglik} is flat, so that its multiplier vanishes.
Evaluating \eqref{eq:score_sigma} at $\widehat{\boldsymbol{\zeta}}$ therefore gives $\mathbb{E}^{*}(\mathbf{s}^{*}_{\sigma,t}) = \mathbf{0}$, and the two blocks together give (a).

For (c), since $\widehat{\boldsymbol{\Sigma}}_{i} \xrightarrow{p} \boldsymbol{\Sigma}_{i}$ by Proposition \ref{prop:an},
\begin{equation}\label{eqB:sigexp}
    \widehat{\boldsymbol{\Sigma}}^{*-1}_{i}
    = \widehat{\boldsymbol{\Sigma}}^{-1}_{i}
    - \widehat{\boldsymbol{\Sigma}}^{-1}_{i}\Delta^{*}_{i}\widehat{\boldsymbol{\Sigma}}^{-1}_{i}
    + O_{p^{*}}\big(T^{-1}\big),
    \qquad
    \frac{1}{T}\sum_{t=1}^{T}\big\|\mathbf{x}^{*}_{t}\big\|^{2}
    = \operatorname{tr}\Big(\frac{1}{T}\sum_{t=1}^{T}\mathbf{x}^{*}_{t}\mathbf{x}^{*\top}_{t}\Big)
    \xrightarrow{p^{*}}_{p} \operatorname{tr}(\boldsymbol{\Phi}),
\end{equation}
the second by Lemma \ref{lem:bootlln}.
The terms discarded from \eqref{eqA:foc} to \eqref{eqB:stacked} are finite in number and of two kinds.
Those carrying two deviations and no $\mathbf{u}^{*}_{t}$, namely the last four terms of the coefficient expansion and the $\boldsymbol{\Upsilon}^{*}_{t}\widehat{\boldsymbol{\Sigma}}^{-1}_{2}\boldsymbol{\Upsilon}^{*\top}_{t}$ term of the covariance expansion, follow, for $\Delta^{*}, \Delta^{*\prime} \in \{\Delta^{*}_{A}, \Delta^{*}_{B}\}$,
\begin{equation*}
\begin{split}
    \Big\|\frac{1}{T}\sum_{t=1}^{T}\mathbf{M}_{1}\Delta^{*}\mathbf{X}^{*}_{t}\Delta^{*\prime\top}\mathbf{M}_{2}\mathbf{X}^{*\top}_{t}\Big\|
    &\leq K\big\|\Delta^{*}\big\|\big\|\Delta^{*\prime}\big\|\,
    \frac{1}{T}\sum_{t=1}^{T}\big\|\mathbf{x}^{*}_{t}\big\|^{2} \\
    &= O_{p^{*}}\big(T^{-1/2}\big)\,O_{p^{*}}\big(T^{-1/2}\big)\,O_{p^{*}}(1)
    = O_{p^{*}}\big(T^{-1}\big),
\end{split}
\end{equation*}
the last step by the rate $\widehat{\boldsymbol{\zeta}}^{*} - \widehat{\boldsymbol{\zeta}} = O_{p^{*}}(T^{-1/2})$ and \eqref{eqB:sigexp}.
Those carrying one deviation and one $\mathbf{u}^{*}_{t}$, namely the $\Delta^{*}_{B}$ term on the right of the coefficient system and the two $\boldsymbol{\Upsilon}^{*}_{t}\widehat{\boldsymbol{\Sigma}}^{-1}_{2}\mathbf{U}^{*\top}_{t}$ terms, factor as
\begin{equation*}
    \Big\|\Delta^{*}\,\frac{1}{T}\sum_{t=1}^{T}\mathbf{U}^{*}_{t}\mathbf{M}\mathbf{X}^{*\top}_{t}\Big\|
    \leq \big\|\Delta^{*}\big\|\Big\|\frac{1}{T}\sum_{t=1}^{T}\mathbf{U}^{*}_{t}\mathbf{M}\mathbf{X}^{*\top}_{t}\Big\|
    \overset{\text{L\ref{lem:bootmds}}}{=} O_{p^{*}}\big(T^{-1/2}\big)O_{p^{*}}\big(T^{-1/2}\big)
    = O_{p^{*}}\big(T^{-1}\big),
\end{equation*}
which is the counterpart of the discards of Steps 1 and 2 of Appendix \ref{app:asymptotic}, and is what leaves \eqref{eqB:stacked} block diagonal.
Those generated by $\widehat{\boldsymbol{\Sigma}}^{*-1}_{j}$ are $O_{p^{*}}(T^{-1})$.
Summing the two kinds, $\mathbf{R}^{*}_{T} = O_{p^{*}}(T^{-1})$ and $\sqrt{T}\,\mathbf{R}^{*}_{T} = o_{p^{*}}(1)$, which is (c).

For (d), we show $\dfrac1T\sum_{t=1}^{T}\mathbb{E}^{*}\big(\mathbf{s}^{*}_{t}\mathbf{s}^{*\top}_{t}\big) \xrightarrow{p^{*}}_{p} \boldsymbol{\Lambda}$.
$\mathbf{W}^{*}_{t}$ is $\mathcal{F}^{*}_{t-1}$-measurable by \eqref{eqB:xstar} while $\mathbf{u}^{*}_{t}$ is independent of $\mathcal{F}^{*}_{t-1}$, so
\begin{equation*}
\begin{aligned}
    \frac{1}{T}\sum_{t=1}^{T}\mathbb{E}^{*}\big(\mathbf{s}^{*}_{\beta,t}\mathbf{s}^{*\top}_{\beta,t}\mid\mathcal{F}^{*}_{t-1}\big)
    &\overset{\text{L\ref{lem:bootmom}}}{=}
    \frac{1}{T}\sum_{t=1}^{T}\mathbf{W}^{*}_{t}\widehat{\boldsymbol{\Sigma}}^{-1}
    \widehat{\boldsymbol{\Sigma}}_{emp}\widehat{\boldsymbol{\Sigma}}^{-1}\mathbf{W}^{*\top}_{t} \\
    &\overset{\eqref{eqB:Semp}}{=}
    \frac{1}{T}\sum_{t=1}^{T}\mathbf{W}^{*}_{t}\widehat{\boldsymbol{\Sigma}}^{-1}\mathbf{W}^{*\top}_{t}
    + \frac{1}{T}\sum_{t=1}^{T}\mathbf{W}^{*}_{t}\widehat{\boldsymbol{\Sigma}}^{-1}
      \widehat{\mathbf{S}}_{T}\widehat{\boldsymbol{\Sigma}}^{-1}\mathbf{W}^{*\top}_{t},
\end{aligned}
\end{equation*}
of which the first term is $\boldsymbol{\Xi} = \boldsymbol{\Lambda}_{\beta}$ in the limit by \eqref{eqB:Hlim} and the second obeys
\begin{equation*}
    \Big\|\frac{1}{T}\sum_{t=1}^{T}\mathbf{W}^{*}_{t}\widehat{\boldsymbol{\Sigma}}^{-1}
    \widehat{\mathbf{S}}_{T}\widehat{\boldsymbol{\Sigma}}^{-1}\mathbf{W}^{*\top}_{t}\Big\|
    \leq K\big\|\widehat{\mathbf{S}}_{T}\big\|\frac{1}{T}\sum_{t=1}^{T}\big\|\mathbf{x}^{*}_{t}\big\|^{2}
    \overset{\eqref{eqB:sigexp}}{=} o_{p}(1)O_{p^{*}}(1),
\end{equation*}
because $\widehat{\mathbf{S}}_{T} \xrightarrow{p} \mathbf{0}$ by Lemma \ref{lem:bootmom} and Proposition \ref{prop:an}.
The covariance block carries no $\mathcal{F}^{*}_{t-1}$-measurable factor and is a function of $\mathbf{u}^{*}_{t}$ alone,
\begin{equation}\label{eqB:varsigma}
    \frac{1}{T}\sum_{t=1}^{T}\mathbb{E}^{*}\big(\mathbf{s}^{*}_{\sigma,t}\mathbf{s}^{*\top}_{\sigma,t}\mid\mathcal{F}^{*}_{t-1}\big)
    \overset{\text{\eqref{eqB:UMUstar}}}{=}
    \frac{1}{T}\sum_{t=1}^{T}\widehat{\mathbf{s}}_{\sigma,t}\widehat{\mathbf{s}}^{\top}_{\sigma,t}
    \xrightarrow{p}
    \mathbb{E}\big(\mathbf{s}_{\sigma,t}\mathbf{s}^{\top}_{\sigma,t}\big) = \boldsymbol{\Lambda}_{\sigma}.
\end{equation}
For the cross block, $\mathbf{W}^{*}_{t}$ is $\mathcal{F}^{*}_{t-1}$-measurable while $\mathbf{u}^{*}_{t}\mathbf{s}^{*\top}_{\sigma,t}$ is a function of $\mathbf{u}^{*}_{t}$ alone, so as in \eqref{eqB:varsigma}, $\mathbb{E}^{*}(\mathbf{s}^{*}_{\beta,t}\mathbf{s}^{*\top}_{\sigma,t}\mid\mathcal{F}^{*}_{t-1}) = \mathbf{W}^{*}_{t}\widehat{\boldsymbol{\Sigma}}^{-1}\widehat{\mathbf{m}}$ with $\widehat{\mathbf{m}} = T^{-1}\sum_{s}\widehat{\mathbf{u}}_{s}\widehat{\mathbf{s}}^{\top}_{\sigma,s}$ the same for every $t$.
Since $\mathbf{W}^{*}_{t}$ is linear in $\mathbf{x}^{*}_{t}$, averaging over $t$ gives
\begin{equation*}
    \Big\|\frac{1}{T}\sum_{t=1}^{T}\mathbb{E}^{*}\big(\mathbf{s}^{*}_{\beta,t}\mathbf{s}^{*\top}_{\sigma,t}\mid\mathcal{F}^{*}_{t-1}\big)\Big\|
    \leq K\Big\|\frac{1}{T}\sum_{t=1}^{T}\mathbf{x}^{*}_{t}\Big\|\,\big\|\widehat{\mathbf{m}}\big\|.
\end{equation*}
Collecting the coefficient of each draw in \eqref{eqB:xstar} yields $T^{-1}\sum_{r=1}^{T-1}\big(\sum_{j=1}^{T-r}\widehat{\mathbf{D}}_{j}\big)\mathbf{u}^{*}_{r}$, a sum of independent centered terms whose coefficients are partial sums of the $\widehat{\mathbf{D}}_{j}$.
The cross terms therefore vanish as in Lemma \ref{lem:bootmds}, and bounding each partial sum by $\sum_{j\geq1}\|\widehat{\mathbf{D}}_{j}\| \leq K$ through Lemma \ref{lem:decay} gives
\begin{equation*}
    \mathbb{E}^{*}\Big\|\frac{1}{T}\sum_{t=1}^{T}\mathbf{x}^{*}_{t}\Big\|^{2}
    \leq \frac{\big\|\widehat{\boldsymbol{\Sigma}}_{emp}\big\|}{T^{2}}\sum_{r=1}^{T-1}\Big\|\sum_{j=1}^{T-r}\widehat{\mathbf{D}}_{j}\Big\|^{2}
    \leq \frac{K^{2}\big\|\widehat{\boldsymbol{\Sigma}}_{emp}\big\|}{T}
    = O_{p}\big(T^{-1}\big).
\end{equation*}
Since $\mathbf{s}_{\sigma}$ is quadratic in $\mathbf{U}$, $\|\widehat{\mathbf{m}}\| \leq c\,T^{-1}\sum_{s}(\|\widehat{\mathbf{u}}_{s}\| + \|\widehat{\mathbf{u}}_{s}\|^{3}) = O_{p}(1)$ by Lemma \ref{lem:bootmom}, and the cross block is therefore $O_{p^{*}}(T^{-1/2})$.
\end{proof}

\section{Additional Simulation Results}
\label{app:additional_sims}

\begin{figure}[tbp]
  \centering
  \includegraphics[width=\linewidth]{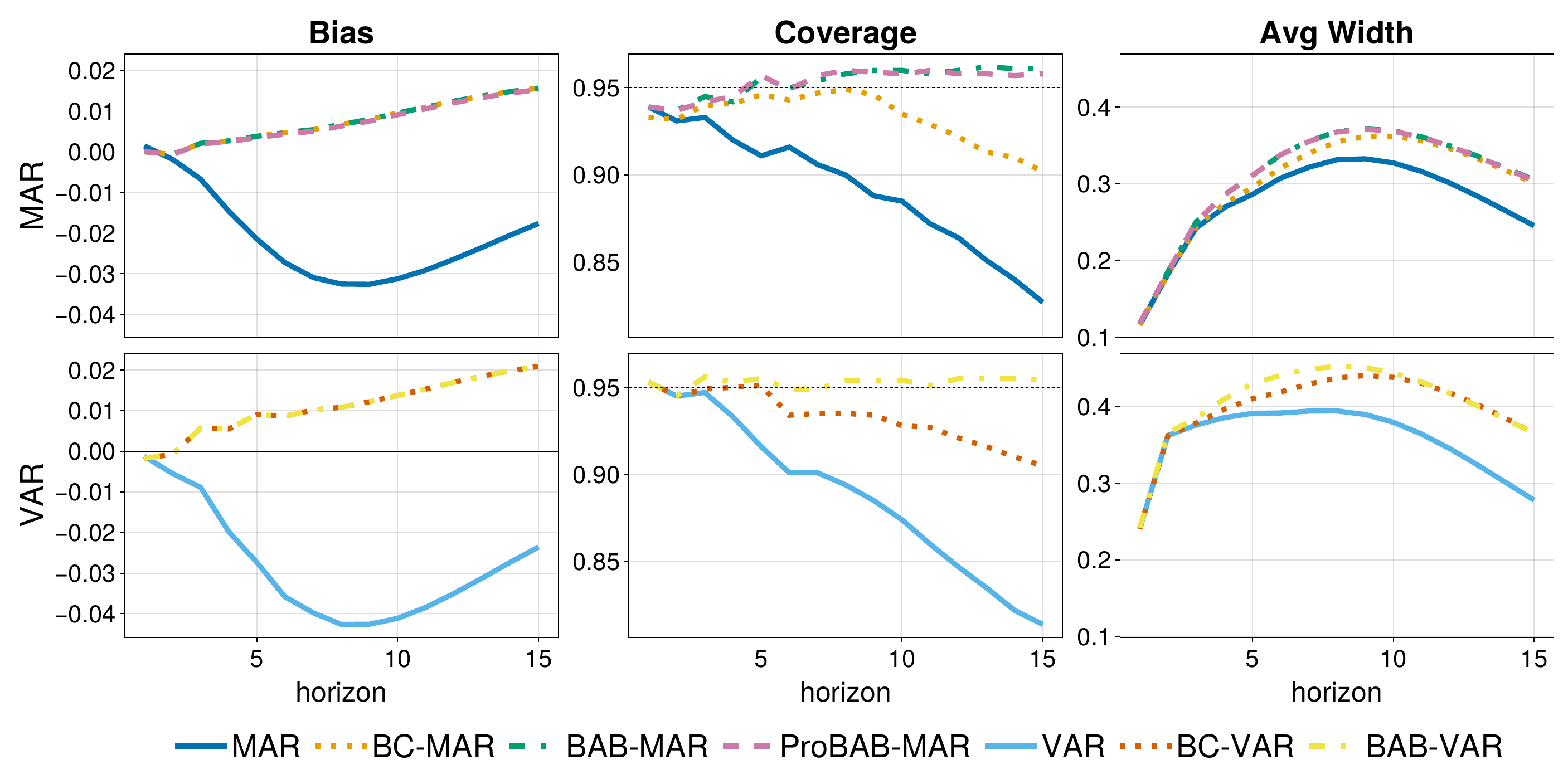}
  \caption{Bias, empirical coverage, and average width of nominal 95\% confidence intervals for impulse responses under the baseline MAR(1) DGP ($N_1 = 3$, $N_2 = 4$, $\rho_{\max} = 0.8$) with $T = 200$, across horizons $h = 1, \ldots, 15$.}
  \label{fig:baseline_obs200}
\end{figure}

\begin{figure}[tbp]
  \centering
  \includegraphics[width=\linewidth]{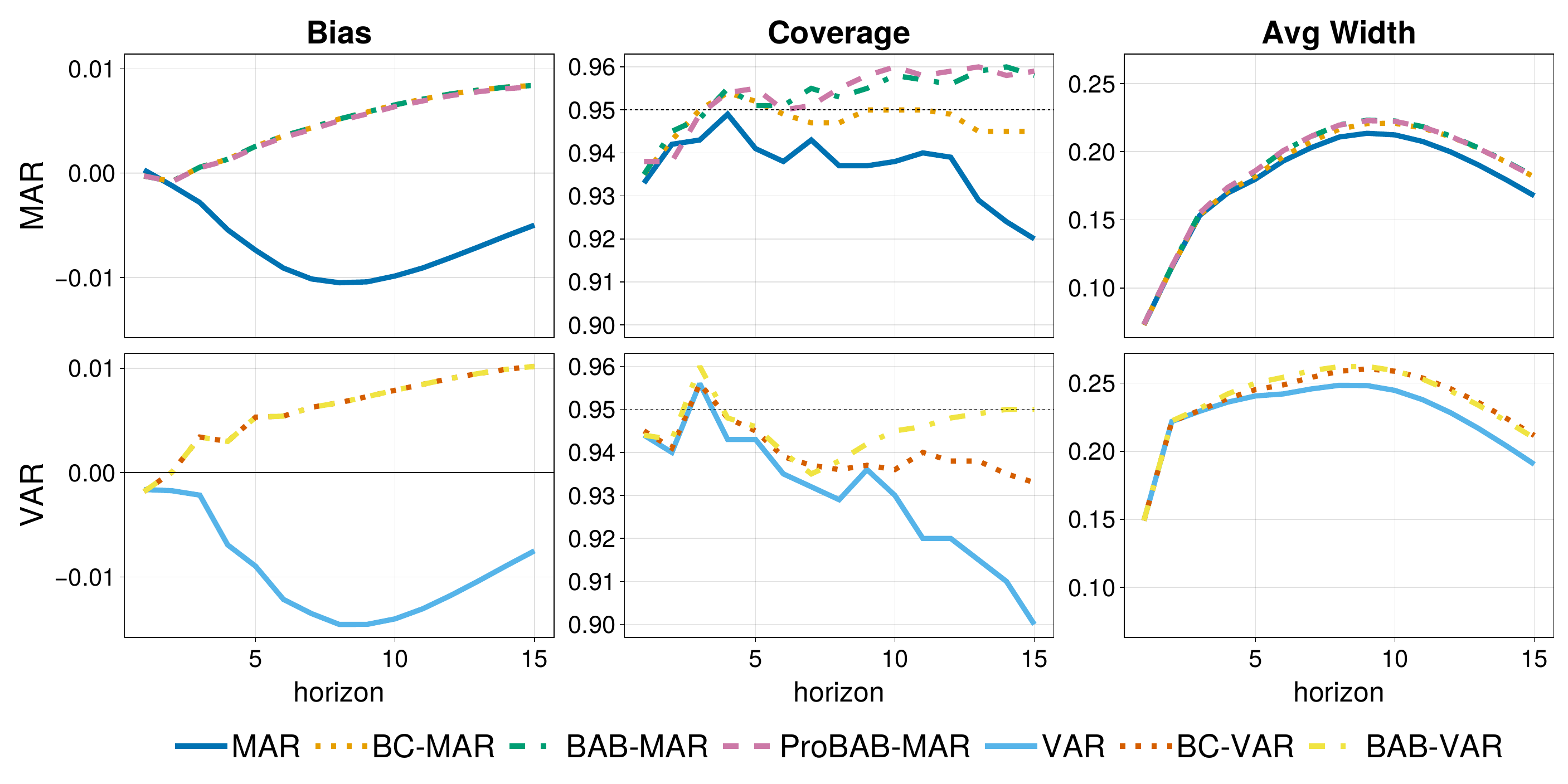}
  \caption{Bias, empirical coverage, and average width of nominal 95\% confidence intervals for impulse responses under the baseline MAR(1) DGP ($N_1 = 3$, $N_2 = 4$, $\rho_{\max} = 0.8$) with $T = 500$, across horizons $h = 1, \ldots, 15$.}
  \label{fig:baseline_obs500}
\end{figure}

\begin{figure}[tbp]
  \centering
  \includegraphics[width=\linewidth]{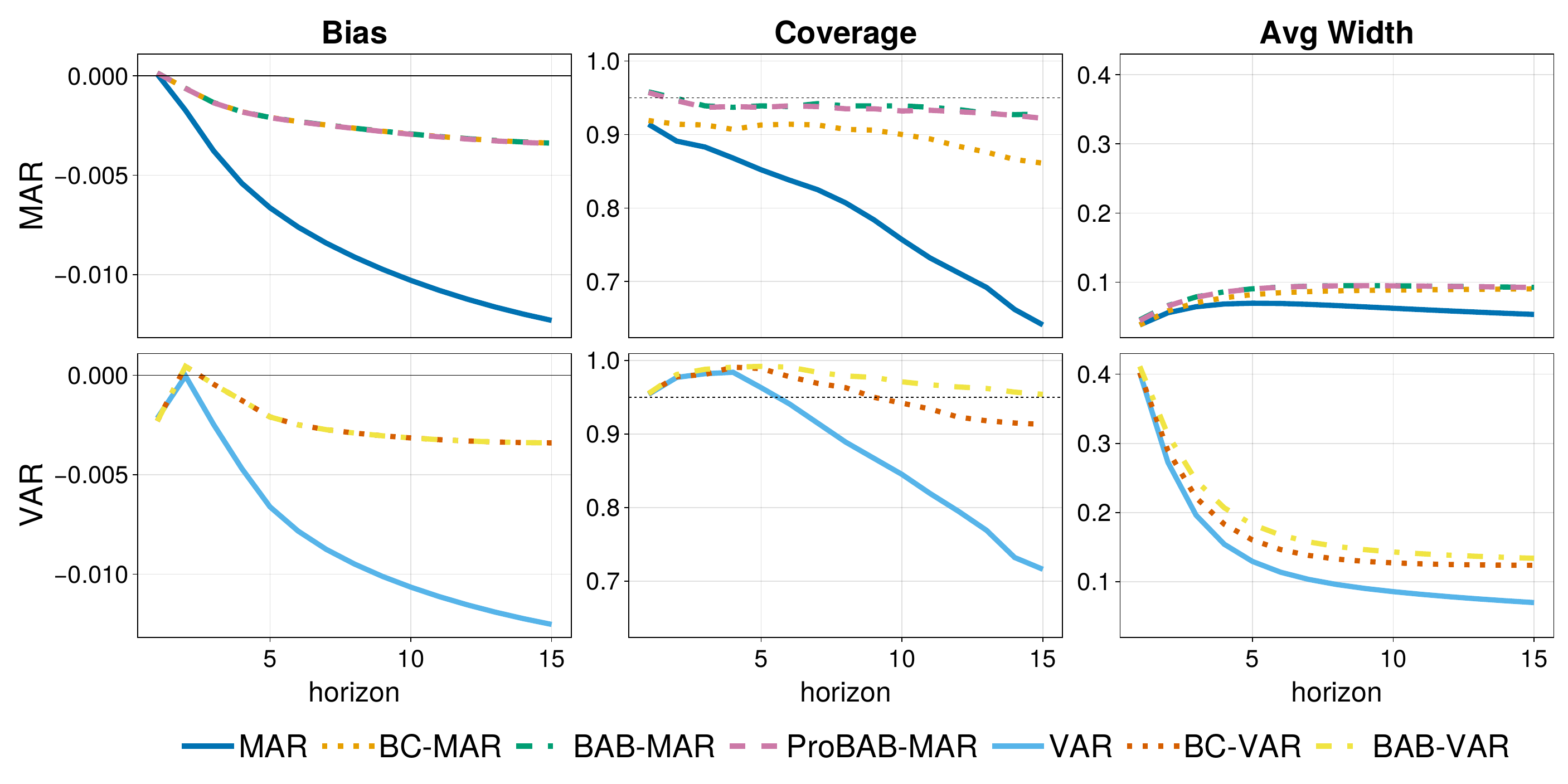}
  \caption{Bias, empirical coverage, and average width of nominal 95\%
  confidence intervals for impulse responses under the MAR(1) DGP
  ($N_1 = 3$, $N_2 = 4$, $T = 100$) with largest companion eigenvalue
  $\rho_{\max} = 0.99$, across horizons $h = 1, \ldots, 15$.}
  \label{fig:persistent_rho99}
\end{figure}

This appendix reports two sets of simulation results that supplement Section \ref{sec:simulations}.
The first extends the baseline design of Section \ref{subsec:sim_samplesize} to the larger samples $T = 200$ and $T = 500$, and the second reports the near-unit-root design of Section \ref{subsec:sim_persistence} at $\rho_{\max} = 0.99$.
All other settings are as in the main text.

Figures \ref{fig:baseline_obs200} and \ref{fig:baseline_obs500} report the results of the baseline design at $T = 200$ and $T = 500$.
The raw MAR estimates, on which the delta-method intervals are centered, become markedly less biased as the sample grows, with the bias at the worst horizon falling from $-0.06$ at $T = 100$ to $-0.03$ at $T = 200$ and $-0.01$ at $T = 500$.
The coverage of the MAR delta-method intervals improves accordingly, from $74\%$ at the worst horizon at $T = 100$ to $83\%$ at $T = 200$ and $92\%$ at $T = 500$.
Their average width also approaches that of the bootstrap intervals, so that the delta-method and bootstrap intervals of the MAR increasingly agree as $T$ grows, consistent with Corollary \ref{thm:bootstrap}, under which both share the limiting distribution \eqref{eq:irf_an}.
The bootstrap results are largely unchanged from those at $T = 100$.
The first-stage correction removes most of the bias, BAB-MAR and ProBAB-MAR remain at the nominal level at both sample sizes, and their intervals remain narrower than those of the unrestricted VAR.

Figure \ref{fig:persistent_rho99} reports the results of the near-unit-root design at $\rho_{\max} = 0.99$, where the bias grows for both the raw and the bias-corrected estimates.
The raw MAR estimates reach a bias of $-0.012$, and the first-stage correction reduces this only to $-0.004$.
The coverage of the MAR delta-method intervals falls to $65\%$, and recentering them at the bias-corrected estimate recovers only part of this gap, leaving coverage between $86\%$ and $90\%$.
BAB-MAR and ProBAB-MAR attain the nominal level at short horizons, but their coverage declines at longer horizons, to $89\%$ at worst, against $94\%$ for BAB-VAR.
The widths compare as at $\rho_{\max} = 0.95$ in Section \ref{subsec:sim_persistence}, with the MAR intervals markedly narrower than their unrestricted VAR counterparts.

\section{Naive Bootstrap Implementation}
\label{app:bab}

Omitting the projection yields BAB-MAR, stated as Algorithm \ref{alg:bab}, which differs from Algorithm \ref{alg:probab} only in lines \ref{lin:bab-delta} and \ref{lin:bab-refit}.
This procedure generates each bootstrap sample from a coefficient matrix that need not lie in the MAR parameter space, while every replication is re-estimated under the Kronecker restriction, so the bootstrap design and the estimator do not agree.
The projection removes this discrepancy at a cost that is asymptotically negligible, since it perturbs the bias-corrected coefficients by $O_p(T^{-1})$ and so leaves the first-order bootstrap distribution unchanged; Section \ref{sec:simulations} confirms that the two deliver similar coverage in finite samples.
What the projection buys, beyond the internal coherence of the procedure, is interpretability: because $\widetilde{\mathbf{C}}$ and each $\widetilde{\mathbf{C}}^{*}$ retain the form $\widetilde{\mathbf{B}}_j \otimes \widetilde{\mathbf{A}}_j$, the bootstrap impulse responses can be read through the Kronecker structure of Section \ref{subsec:irf}, decomposed into separate row and column components, examined through the rank of the matricized response, or reported as bias-corrected factors $\widetilde{\mathbf{A}}_j$ and $\widetilde{\mathbf{B}}_j$ individually.

\begin{algorithm}[t]
\caption{BAB-MAR}\label{alg:bab}
\small
\begin{algorithmic}[1]
\Require Data $\{\mathbf{Y}_t\}_{t=1}^{T}$, lag order $p$, replications $B_1, B_2$, nominal level $\alpha$
\State Fit MAR to obtain $\widehat{\mathbf{C}}_j = \widehat{\mathbf{B}}_j \otimes \widehat{\mathbf{A}}_j$, $\widehat{\mathbf{u}}_t = \mathbf{y}_t - \sum_j \widehat{\mathbf{C}}_j \mathbf{y}_{t-j}$, and center residuals $\widehat{\mathbf{u}}^{c}_t = \widehat{\mathbf{u}}_t - \bar{\mathbf{u}}$
\Statex \textit{Stage 1: bias-corrected DGP}
\For{$m = 1, \ldots, B_1$}
  \State Draw $\mathbf{u}^{*}_{1,t}$ from $\widehat{\mathbf{u}}^{c}_t$, simulate $\mathbf{y}_{t}^{*(m)}$ from $\widehat{\mathbf{C}}$ and $\mathbf{u}^{*}_{1,t}$, and re-fit the MAR to obtain $\widehat{\mathbf{C}}^{*(m)}$
\EndFor
\State Estimate the bias by $\widehat{\boldsymbol{\Psi}} \gets \bar{\mathbf{C}}^{*} - \widehat{\mathbf{C}}$, where $\bar{\mathbf{C}}^{*} = B_1^{-1}\sum_{m} \widehat{\mathbf{C}}^{*(m)}$
\State Set $\delta \gets \max\{\delta \in \mathcal{G} : \rho(\widehat{\mathbf{C}} - \delta\widehat{\boldsymbol{\Psi}}) < 1\}$ and $\widetilde{\mathbf{C}} \gets \widehat{\mathbf{C}} - \delta\widehat{\boldsymbol{\Psi}}$ \label{lin:bab-delta} 
\State \textbf{if} no such $\delta$ \textbf{then} $\widetilde{\mathbf{C}} \gets \widehat{\mathbf{C}}$
\Statex \textit{Stage 2: bootstrap distribution}
\For{$m = 1, \ldots, B_2$}
  \State Draw $\mathbf{u}^{*}_{2,t}$ from $\widehat{\mathbf{u}}^{c}_t$ and generate $\mathbf{y}^{*}_t$ from $\widetilde{\mathbf{C}}$ and $\mathbf{u}^{*}_{2,t}$, starting from $\mathbf{y}^{*}_{-p+1} = \cdots = \mathbf{y}^{*}_{0} = \mathbf{0}$
  \State Re-fit MAR to obtain $\widehat{\mathbf{C}}^{*}$, compute $\delta^{*}$ as above, and set $\widetilde{\mathbf{C}}^{*} \gets \widehat{\mathbf{C}}^{*} - \delta^{*}\widehat{\boldsymbol{\Psi}}$ \label{lin:bab-refit}
  \State Compute $\widehat{\boldsymbol{\Theta}}^{*(m)}_h$ from $\widetilde{\mathbf{C}}^{*}$ via \eqref{eq:mma}
\EndFor
\State \Return percentile interval from $\{\widehat{\theta}^{*(m)}_{ik,h}\}_{m=1}^{B_2}$
\end{algorithmic}
\medskip
{\footnotesize \textit{Notes:} $\mathcal{G} = \{1, 0.99, \ldots, 0.01\}$, $\rho(\cdot)$ is the largest companion eigenvalue in modulus.}
\end{algorithm}

\section{Additional Application Results}
\label{app:additional_application}

In Section \ref{sec:application}, each model is estimated at the lag order selected by the Schwarz and Hannan--Quinn criteria, $p = 2$ for the MAR and $p = 1$ for the VAR.
This appendix re-estimates the VAR at $p = 2$, so that both models share the same lag order and the comparison isolates the effect of the Kronecker restriction.
The VAR then estimates $450$ coefficients, against $66$ for the MAR.
Figure \ref{fig:application_p2} reports the results, computed otherwise as in Section \ref{subsec:results}.
The point estimates again track each other closely, particularly for energy inflation, where the two models differ visibly only in the Netherlands.
The ProBAB-MAR intervals remain narrower than the BAB-VAR intervals, now by a factor of roughly $1.6$ in mean and $1.4$ in median, against $1.4$ and $1.3$ for the VAR(1) in Section \ref{subsec:results}.
This wider gap is consistent with the second lag doubling the number of VAR coefficients while leaving the MAR specification unchanged.

\begin{figure}[htbp]
  \centering
  \includegraphics[width=\linewidth]{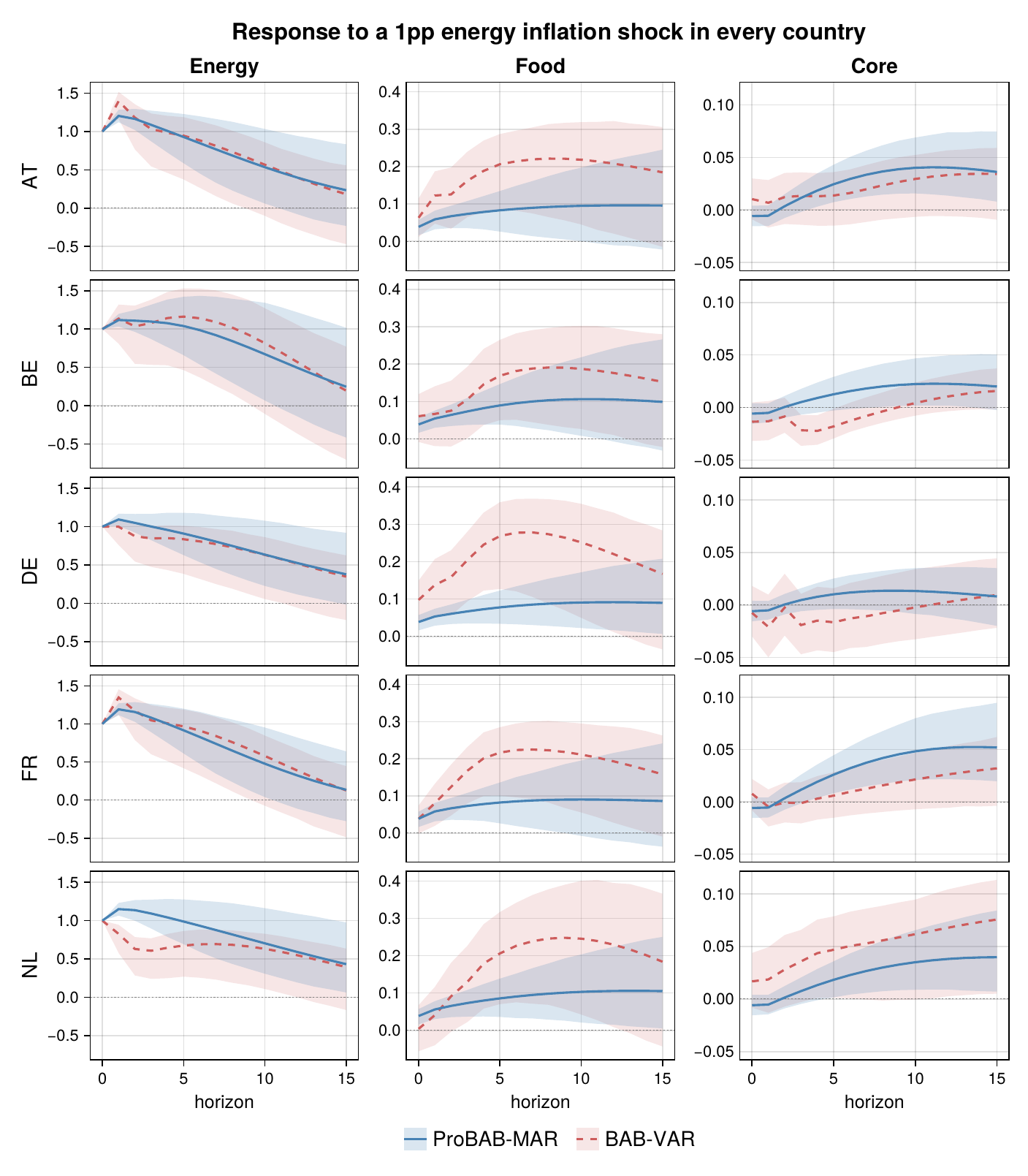}
  \caption{Responses to a shock raising energy inflation by one percentage point on impact in each of the five countries, with $95\%$ bootstrap confidence intervals. Rows index the responding country and columns the responding category, and panels in the same column share a scale. Solid lines with shaded bands are ProBAB-MAR. Dashed lines with bands are BAB-VAR on the vectorized system. Both models are estimated at lag order $p = 2$.}
  \label{fig:application_p2}
\end{figure}

\end{document}